\numberwithin{equation}{section}
\newtheorem{theorem}{Theorem}[section]
\newtheorem{lemma}{Lemma}[section]
\newtheorem{proposition}{Proposition}[section]
\newtheorem{cor}{Corollary}[section]
\theoremstyle{definition}
\newtheorem{rmk}{Remark}[section]
\newtheorem{hypothesis}{Hypothesis}
\newtheorem{assumption}{Assumption}
\newcommand{\utwi}[1]{\mbox{\boldmath $ #1$}}
\newcommand{\ba}{{\utwi{a}}}
\newcommand{\lam}{{\lambda}}
\newcommand{\cA}{{\cal A}}
\newcommand{\cB}{{\cal B}}
\newcommand{\cC}{{\cal C}}
\newcommand{\cE}{{\cal E}}
\newcommand{\cF}{{\cal F}}
\newcommand{\cG}{{\cal G}}
\newcommand{\cI}{{\cal I}}
\newcommand{\cL}{{\cal L}}
\newcommand{\cM}{{\cal M}}
\newcommand{\cU}{{\cal U}}
\newcommand{\cV}{{\cal V}}
\newcommand{\cX}{{\cal X}}
\newcommand{\cZ}{{\cal Z}}
\newcommand{\sF}{{\mathscr F}}
\newcommand{\sR}{{\mathscr R}}
\newcommand{\sP}{{\mathscr P}}
\newcommand{\bes}{\begin{eqnarray*}}
\newcommand{\ees}{\end{eqnarray*}}
\newcommand{\bei}{\begin{itemize}}
\newcommand{\eei}{\end{itemize}}
\newcommand{\beiftnt}{\begin{itemize}\footnotesize}
\newcommand{\bel}{\begin{eqnarray}\label}
\newcommand{\eel}{\end{eqnarray}}
\def\Var{\textsf{Var}}
\def\E{\mathbb{E}}
\def\P{\mathbb{P}}
\def\Q{\mathbb{Q}}
\def\RR{\mathbb{R}}
\def\R{\mathbb{R}}
\def\tr{\mathrm{tr}}
\def\TV{\mathsf{TV}}
\DeclareSymbolFont{rmlargesymbols}{OMX}{mdbch}{m}{n}
\DeclareMathSymbol{\rmintop}{\mathop}{rmlargesymbols}{82}
\newcommand{\rmint}{\rmintop\nolimits}
\def\rmintd{\, {\mathrm{d}}}
\DeclareMathOperator*{\mat1}{\text{mat}_1}
\DeclareMathOperator*{\vec1}{\text{vec}}
\def\cano{(\text{\footnotesize cano})}
\def\ideal{(\text{\footnotesize ideal})}
\def\add{(\text{\footnotesize add})}
\def\iter{(\text{\footnotesize iter})}
\def\tipup{(\text{\tiny TIPUP})}
\def\topup{(\text{\tiny TOPUP})}
\begin{document}
\begin{frontmatter}
\title{Tensor Factor Model Estimation by Iterative Projection}

\runtitle{Tensor Factor Models by Iterative Projection}

\begin{aug}
\author[A]{\fnms{Yuefeng} \snm{Han}
\ead[label=e1,mark]{yuefeng.han@nd.edu}},
\author[B]{\fnms{Rong} \snm{Chen}
\ead[label=e2,mark]{rongchen@stat.rutgers.edu}},
\author[C]{\fnms{Dan} \snm{Yang}
\ead[label=e3]{dyanghku@hku.hk}},
\and
\author[B]{\fnms{Cun-Hui} \snm{Zhang}
\ead[label=e4,mark]{czhang@stat.rutgers.edu}}

\address[A]{
Department of Applied and Computational Mathematics and Statistics,
University of Notre Dame,
Notre Dame, IN 46556,
USA,
\printead{e1}}

\address[B]{
Department of Statistics,
Rutgers University,
Piscataway, NJ 08854,
USA,
\printead{e2,e4}}

\address[C]{
Faculty of Business and Economics,
The University of Hong Kong,
Hong Kong,
\printead{e3}}


\runauthor{Y. Han, R. Chen, D. Yang and C. Zhang}


\end{aug}

\begin{abstract}
  Tensor time series, which is a time series consisting of tensorial observations, has become ubiquitous. It typically exhibits high dimensionality. One approach for dimension reduction is to use a factor model structure, in a form similar to Tucker tensor decomposition, except that the time dimension is treated as a dynamic process with a time dependent structure. In this paper we introduce two approaches to estimate such a tensor factor model by using iterative orthogonal projections of the original tensor time series. These approaches extend the existing estimation procedures and improve the estimation accuracy and convergence rate significantly as proven in our theoretical investigation. Our algorithms are similar to the higher order orthogonal projection method for tensor decomposition, but with significant differences due to the need to unfold tensors in the iterations and the use of autocorrelation. Consequently, our analysis is significantly different from the existing ones. Computational and statistical lower bounds are derived to prove the optimality of the sample size requirement and convergence rate for the proposed methods. Simulation study is conducted to further illustrate the statistical properties of these estimators.

\end{abstract}

\begin{keyword}[class=MSC2020]
\kwd[Primary ]{62H25}
\kwd{62H12}
\kwd[; secondary ]{62R07}
\end{keyword}

\begin{keyword}
\kwd{high-dimensional tensor data}
\kwd{factor model}
\kwd{orthogonal projection}
\kwd{time series}
\kwd{Tucker decomposition}
\end{keyword}

\end{frontmatter}

\section{Introduction} \label{section:introduction}

Motivated by a diverse range of modern scientific applications, analysis of tensors, or multi-dimensional arrays, has emerged as one of the most important and active research areas in statistics, computer science, and machine learning. Large tensors are encountered in genomics \citep{alter2005, omberg2007}, neuroimaging analysis \citep{zhou2013, sun2017}, recommender systems
\citep{bi2018}, computer vision \citep{liu2012}, community detection
\citep{anandkumar2014}, among others. High-order tensors often bring about high dimensionality and impose significant computational challenges. For example, functional MRI produces a time series of 3-dimensional brain images, typically consisting
of hundreds of thousands of voxels observed over time. Previous work has developed various tensor-based methods for independent and identically distributed (i.i.d.)\,tensor data or tensor data with i.i.d.\,noise.
However, the statistical framework for general tensor time series data is much less studied in the literature.

Factor analysis is one of the most useful tools for understanding common dependence among multi-dimensional outputs. Over the past decades, vector factor
models have been extensively studied in the statistics and economics communities. For instance, \citet{chamberlain1983}, \citet{bai2002}, \citet{stock2002} and \citet{bai2003} developed the static factor model using principal component analysis (PCA). They assumed that the common factors must have impact on most of the time series, and weak serial dependence is allowed for the idiosyncratic noise process. \citet{fan2011, fan2013, fan2018} established large covariance matrix estimation based on the static factor model. The static factor model has been further extended to the dynamic factor model in \citet{forni2000}.
In the dynamic factor model, the latent factors are assumed to follow a time series process, which is commonly taken to be a vector autoregressive process. \citet{fan2016} studied semi-parametric factor
models through projected principal component analysis. \cite{pena1987identifying}, \citet{pan2008}, \citet{lam2011} and \citet{lam2012} adopted another type of factor model. They assumed that the latent factors 
{capture} all dynamics of the observed process, and thus the idiosyncratic noise process has no serial dependence.
We will adopt this approach.
We note that the factor process may have complex dynamic behavior, resulting in complex dynamics of the observed tensor, even with white additive noise process. Of course, when all the dynamics of the observed tensor process are `forced' to be included in the signal process induced by the factor process, situations may arise in which some factors are `weak' (or have impact on a small portion of the observed series in the tensor). {This} leads us to consider the `signal strength' in our investigation.

Although there have been significant efforts in developing methodologies and theories for vector factor models, there is a paucity of literature on
matrix- or tensor-valued time series. \citet{wang2019} proposed a matrix factor model for matrix-valued time series, which explores the matrix structure. \citet{chen2020constrained} established a general framework for incorporating domain and prior knowledge in the matrix factor model through linear constraints. \citet{chen2022modeling} applied the matrix factor model to the dynamic transport network. \citet{chen2023statistical} developed an inferential theory of the matrix factor model under a different setting from that in \citet{wang2019}. \cite{chang2023modelling,han2023tensor,han2021cp} studied factor models with CP type low rank structures.

Recently, \citet{chen2022factor} introduced a factor approach for analyzing
high dimensional dynamic tensor time series in the form
\begin{equation} \label{eq:tensorfactor}
\cX_t=\cM_t+\cE_t,
\end{equation}
where $\cX_1,...,\cX_T\in\RR^{d_1\times\cdots\times d_K}$ are the observed tensor time series, $\cM_t$ and $\cE_t$ are the corresponding signal and noise components of $\cX_t$, respectively. The goal is to estimate the unknown signal tensor $\cM_t$ from the tensor time series data. Following \citet{lam2012}, it is
assumed that the signal tensor accommodates all dynamics, making the idiosyncratic noise $\cE_t$ uncorrelated (white) across time.
%
%
It is further assumed that $\cM_t$ lives in a lower dimensional space and has certain multilinear decomposition. Specifically, we assume that
$\cM_t$ satisfies a Tucker-type decomposition and model \eqref{eq:tensorfactor}
can be written as
\begin{equation} \label{eq:tucker-model}
\cX_t=\cF_t\times_1 A_1\times_2\ldots\times_K A_K+\cE_t,
\end{equation}
where $A_k$ is the deterministic loading matrix of size $d_k\times r_k$ and $r_k\ll d_k$, and the core tensor $\cF_t$ itself is a latent tensor factor process of dimension $r_1\times\ldots \times r_K$.
Here the $k$-mode product of $\cX\in\RR^{d_1\times d_2\times \cdots \times d_K}$ with a matrix $U\in\RR^{d_k'\times d_k}$, denoted as $\cX\times_k U$,
  is an order $K$-tensor of size
  $d_1\times \cdots \times d_{k-1} \times d_k'\times d_{k+1}\times \cdots \times d_K$ such that
$$ (\cX\times_k U)_{i_1,...,i_{k-1},j,i_{k+1},...,i_K}=\sum_{i_k=1}^{d_k} \cX_{i_1,i_2,...,i_K} U_{j,i_k}.$$

The core tensor $\cF_t$ is usually much smaller than $\cX_t$ in dimension. This structure provides an effective dimension reduction, as all the comovements of individual time series in $\cX_t$ are driven by $\cF_t$. Without loss of generality, assume that $A_k$ is of rank $r_k \ll d_k$.
It should be noted that vector and matrix factor models can be viewed as special cases of our model since a vector time series is a tensor time series composed of a single fiber ($K=1$), and a matrix times series is one composed of a single slice ($K=2$).

\citet{chen2022factor} proposed two estimation procedures, namely TOPUP and TIPUP,
for estimating the column space spanned by the loading
matrix $A_k$, for $k=1,\ldots,K$. The two procedures are based on different
auto-cross-product operations of the observed tensors $\cX_t$
to accumulate information, but they both utilize the assumption that the noise $\cE_t$ and $\cE_{t-h}, ~h>0$ are uncorrelated. The convergence rates of
their estimators critically depend on $d=d_1d_2\ldots d_K$, a potentially
very large number as $d_k,~ k=1,\ldots,K$, are large. Often a large $T$, the
length of the time series, is required for accurate estimation of the loading spaces.

In this paper we propose extensions of the TOPUP and TIPUP procedures, motivated by the following observation. Suppose that
the loading matrices $A_k$ are orthonormal with $A_k^\top A_k=I$, and we are given
$A_2,\ldots, A_K$. Let
\[
\cZ_t=\cX_t\times_2 A_2^\top\times_3 \ldots \times_K A_K^\top;
\mbox{\ \ and \ \ }
\cE_t^*=\cE_t\times_2 A_2^\top\times_3 \ldots \times_K A_K^\top;
\]
Then (\ref{eq:tucker-model}) leads to
\begin{equation}
\cZ_t=\cF_t\times_1 A_1+\cE_t^* \label{eq:ideal}
\end{equation}
where $\cZ_t$ is a $d_1\times r_2\times \ldots \times r_{K}$ tensor. Since
$r_k\ll d_k$, $\cZ_t$ is a much smaller tensor than $\cX_t$. Under proper
conditions on the combined noise tensor $\cE_t^*$, the estimation of
the loading space of $A_1$ based on $\cZ_t$ can be made significantly
more accurate, as the convergence rate now depends on
$d_1r_2\ldots r_{K}$ rather than $d_1d_2\ldots d_{K}$.

Of course, in practice we do not know $A_2,\ldots, A_K$. Similar to
backfitting algorithms, we propose an iterative algorithm. With a proper
initial value, we iteratively estimate the loading space of $A_k$ at iteration
$j$ based on
\[
\cZ_{t,k}^{(j)}=\cX_t\times_1\widehat{A}_1^{(j)\top}\times_2 \ldots
\times_{k-1} \widehat{A}_{k-1}^{(j)\top}\times_{k+1} \widehat{A}_{k+1}^{(j-1)\top}
\times_{k+2} \ldots \times_K \widehat{A}_{K}^{(j-1)\top},
\]
using the estimate $\widehat{A}_{k'}^{(j-1)},~ k<k'\le K$ obtained in the previous iteration and the estimate $\widehat{A}_{k'}^{(j)},~ 1\le k'< k$, obtained in the current iteration.
Our theoretical investigation shows that
the iterative procedures for estimating $A_1$
can achieve the convergence rate as if all $A_2,\ldots,A_K$ are known and
we indeed observe $\cZ_t$ that follows model (\ref{eq:ideal}). We call the
procedure iTOPUP and iTIPUP, based on the matrix unfolding mechanism used,
corresponding to TOPUP and TIPUP procedures. To be more specific, our algorithms have two steps:
(i) We first use the estimated column space of factor loading matrices of TOPUP (resp. TIPUP) to construct the initial estimate of factor loading spaces;
(ii) We then iteratively perform matrix unfolding of the auto-cross-moments of much smaller tensors $\cZ_{t,k}^{(j)}$ to obtain the final estimator.


We note that the iterative procedure is related to
higher order orthogonal iteration (HOOI) that has been widely studied in the
literature; see, e.g., \cite{de2000}, \cite{sheehan2007}, \cite{liu2014}, \cite{zhang2018tensor}, among others.
However, most of the existing works
are not designed for tensor time series.
They do not consider
the special role of the time mode nor the covariance structure in the time direction. Typically HOOI treats the signal part as fixed or deterministic.
In this paper we treat the signal as dynamic in the sense that the core tensor $\cF_t$ in (\ref{eq:tucker-model}) is dynamic and the relationship between $\cF_t$ and the lagged $\cF_{t-h}$ is of interest. Our setting requires special treatment although each iteration of our iterative procedures also consists of power up and orthogonal projection operations. While HOOI applies the SVD directly to the matrix unfolding of the iteratively projected data, in our approach the SVD is applied to the matrix unfolding of the outer- and inner-auto-cross-product of the iteratively projected data, respectively in iTOPUP and iTIPUP.
Although the iTOPUP algorithm proposed here can be reformulated as a twist of HOOI on the auto-cross-moment tensor, the iTIPUP algorithm is different
and cannot be recast equivalently as HOOI.
More importantly, the theoretical analysis and theoretical properties of
the estimators are fundamentally different from those of HOOI, due to the
dynamic structure of tensor time series and the need to use the auto-cross-product operation between the SVD and data projection in each iteration.
Different concentration inequalities are derived to study the performance bounds.

In this paper, we establish upper bounds on the estimation errors for both the iTOPUP and the iTIPUP, which are much sharper than the respective theoretical guarantees for TOPUP and TIPUP, demonstrating the benefits of using iterative projection. It is also shown that the number of iterations needed for convergence is of order no greater than $\log(d)$. We mainly focus on the cases where the tensor dimensions are large and of similar order. We also cover the cases where
the ranks of the tensor factor process increase with the dimensions of the tensor time series.

\citet{chen2022factor} showed that the TIPUP has a faster convergence rate in estimation error than the TOPUP, under a mild condition on the level of signal cancellation. In contrast, the theoretically guaranteed rate of convergence for the iTOPUP in this paper is of the same order or even faster than that for the iTIPUP under certain regularity conditions. Our results also suggest an interesting phenomenon. Using the iterative procedures, we find that the increase in either dimension or sample size can improve the estimation of the factor loading space of the tensor factor model with the tensor order $K\ge 2$. We believe that such a super convergence rate is new in the literature. Specifically, under
proper regularity conditions, the convergence rate of the iterative procedures for estimating the space of $A_k$ is $O_{\P}(T^{-1/2}d_{-k}^{-1/2})$, where $d_{-k}=\prod_{j\ne k}d_j$, while the existing rate for non-iterative procedures is $O_{\P}(T^{-1/2})$ for the vector factor model \citep{lam2011}
and the matrix/tensor factor models \citep{wang2019,chen2022factor}.
While the increase in the dimensions $d_k$ ($k=1,\ldots,K$) does
not improve the performance of the non-iterative estimators, it significantly improves that of the proposed iterative estimators.

In addition, we establish the computational lower bound for the estimation of the loading spaces of tensor factor models under the hardness assumption of certain instances of hypergraphic planted clique detection problem. It shows that the sample size requirement (or signal to noise ratio condition) needed for using
the TIPUP estimate as the initial values for the iterative procedures
is unavoidable for any computationally manageable estimation procedure to achieve consistency, although the iterative procedures have faster convergence rates. Furthermore, we provide a statistical lower bound that matches the convergence rates of our iterative procedures under certain conditions, revealing a different effect of the ranks $r_k$ ($k=1,...,K$) compared to tensor Tucker decomposition \citep{zhang2018tensor}.

{\it Related work.} We close this section by highlighting several recent papers on related topics. First, we draw attention to the work of \cite{foster1996time, fan2016} and \cite{chen2024semi}. \cite{chen2024semi} adopts an estimation precedure composed of a spectral initialization followed by an iterative refinement step,
so that our methods are related to theirs. However, due to the differences in problem setting and model assumptions,
their estimation procedures, performance bounds and analytic techniques are all significantly different from ours.  \cite{foster1996time, fan2016}
use the projection to the space spanned by the sieve bases without iteration.
\cite{rogers2013multilinear} assumes the tensor factor model in \eqref{eq:tucker-model}, with an additional specific AR structure on the dynamic of the factor process. The additional model structure in their paper led to an EM type of estimation approach, quite different from the approach we develop here. \cite{wang2024high}
concerns low rank tensor AR model and uses a nuclear norm penalty to enforce the low rank structure and optimization algorithms for estimation, again quite different from our approach.



The paper is organized as follows. Section \ref{section:notation} introduces basic notation and preliminaries of tensor analysis. We present the tensor factor model and the iTOPUP and iTIPUP procedures in Sections  \ref{section:tensormodel} and \ref{section:estimating}. Theoretical properties of the iTOPUP and iTIPUP are investigated in Section \ref{section:thm}. Section \ref{section:summary} provides a brief summary. Numerical comparison of our iterative procedures and other methods, and all technical details are relegated to the Supplementary Material.

\section{Tensor Factor Model by Orthogonal Iteration} \label{section:model}

\subsection{Notation and preliminaries for tensor analysis} \label{section:notation}

Throughout  this  paper, for a vector $x=(x_1,...,x_p)^\top$, define $\|x\|_q = (x_1^q+...+x_p^q)^{1/q}$, $q\ge 1$. For a matrix $A = (a_{ij})\in \RR^{m\times n}$, write the SVD as $A=U\Sigma V^\top$, where $\Sigma=\text{diag}(\sigma_1(A), \sigma_2(A), ..., \sigma_{\min\{m,n\}}(A))$, with
singular values $\sigma_{\max}(A) = $ $\sigma_1(A)\ge\sigma_2(A)\ge \cdots\ge \sigma_{\min\{m,n\}}(A)\ge 0$ in descending order. The matrix 
spectral norm is denoted as
$\|A\|_{\rm S}
=\sigma_1(A).$ Write $\sigma_{\min}(A)$ 
the smallest 
nontrivial singular value of $A$.
For two sequences of real numbers $\{a_n\}$ and $\{b_n\}$, write $a_n=O(b_n)$ (resp. $a_n\asymp b_n$) if there exists a constant $C$ such that $|a_n|\leq C |b_n|$ (resp. $1/C \leq a_n/b_n\leq C$) for all sufficiently large $n$, and write $a_n=o(b_n)$ if $\lim_{n\to\infty} a_n/b_n =0$. Write $a_n\lesssim b_n$ (resp. $a_n\gtrsim b_n$) if there exist a constant $C$ such that $a_n\le Cb_n$ (resp. $a_n\ge Cb_n$). Write $a\wedge b=\min\{a,b\}$ and $a\vee b=\max\{a,b\}$. We use $C, C_1,c,c_1,...$ to denote generic constants, whose actual values may vary from line to line.

For any two $m\times r$ matrices with orthonormal columns, say, $U$ and $\widehat U$, suppose the singular values of $U^\top \widehat U$ are $\sigma_1\ge \sigma_2 \ge \cdots \ge \sigma_r\ge 0$.
A natural measure of distance between the column spaces of $U$ and $\widehat U$ is then
\begin{equation}\label{loss}
\|\widehat U\widehat U^\top - UU^\top\|_{\rm S}=\sqrt{1-\sigma_r^2}, 
\end{equation}
which equals to the sine of the largest principle angle
between the column spaces of $U$ and $\widehat U$.
For any two matrices $A\in\RR^{m_1\times r_1},B\in \RR^{m_2\times r_2}$, denote the Kronecker product $\odot$ as $A\odot B\in \RR^{m_1 m_2 \times r_1 r_2}$. For any two tensors $\cA\in\RR^{m_1\times m_2\times \cdots \times m_K}, \cB\in \RR^{r_1\times r_2\times \cdots \times r_N}$, denote the tensor product $\otimes$ as $\cA\otimes \cB\in \RR^{m_1\times \cdots \times m_K \times r_1\times \cdots \times r_N}$, such that
$$(\cA\otimes\cB)_{i_1,...,i_K,j_1,...,j_N}=(\cA)_{i_1,...,i_K}(\cB)_{j_1,...,j_N} .$$
Let ${\rm{vec}}(\cdot)$ be the vectorization of matrices and tensors. The mode-$k$ unfolding (or matricization) is defined as ${\rm{mat}}_k(\cA)$, which maps a tensor $\cA$ to a matrix
${\rm{mat}}_k(\cA)\in\RR^{m_k\times m_{-k}}$ where $m_{-k}=\prod_{j\neq k}^K m_j$.  For example, if $\cA\in\RR^{m_1\times m_2\times m_3}$, then
$$({\rm{mat}}_1(\cA))_{i,(j+m_2(k-1))}= ({\rm{mat}}_2(\cA))_{j,(k+m_3(i-1))}= ({\rm{mat}}_3(\cA))_{k,(i+m_1(j-1))} =\cA_{ijk}.  $$
For tensor $\cA\in\RR^{m_1\times m_2\times \cdots \times m_K}$, the Hilbert Schmidt norm is defined as
$$ \|\cA\|_{{\rm HS}}=\sqrt{\sum_{i_1=1}^{m_1}\cdots\sum_{i_K=1}^{m_K}(\cA)_{i_1,...,i_K}^2 }. $$
For a matrix, the Hilbert Schmidt norm is
just the Frobenius norm.
Define the tensor operator norm for an order-4 tensor $\cA\in\RR^{m_1\times m_2\times m_3\times m_4}$,
$$ \| \cA\|_{\rm{op}} =\max\left\{ \sum_{i_1,i_2,i_3,i_4} u_{i_1,i_2} \cdot u_{i_3,i_4}\cdot (\cA)_{i_1,i_2,i_3,i_4}:\|U_1\|_{\rm HS}=\|U_2\|_{\rm HS}=1 \right\},$$
where {$U_1=(u_{i_1,i_2})\in\RR^{m_1\times m_2}$ and $U_2=(u_{i_3,i_4})\in\RR^{m_3\times m_4}$.}


\subsection{Tensor factor model} \label{section:tensormodel}

Again, we consider {as in \eqref{eq:tucker-model}}
\[
\cX_t=\cF_t\times_1 A_1\times_2\ldots\times_K A_K+\cE_t.
\]
Without loss of generality, assume that $A_k$ is of rank $r_k$.
$A_k$ is not necessarily orthonormal, which is different from the classical Tucker decomposition \citep{tucker1966}.
Model \eqref{eq:tucker-model} is unchanged if we replace $(A_1,...,A_K, \cF_t)$ by $(A_1H_1,...,A_KH_K, \cF_t\times_{k=1}^K H_k^{-1})$ for any invertible $r_k\times r_k$ matrix $H_k$.
Although $(A_1,...,A_K, \cF_t)$ are not uniquely determined, the factor loading space, that is, the linear space spanned by the columns of $A_k$, is uniquely defined. Denote the orthogonal projection to the column space of $A_k$ as
\begin{equation}\label{P_k}
P_k=P_{A_k}=A_k (A_k^\top A_k)^{-1} A_k^\top = U_kU_k^\top,
\end{equation}
{where $U_k$ is the left singular matrix in the SVD $A_k=U_k\Lambda_k V_k^\top$.}
We use $P_k$ to represent the factor loading space of $A_k$. Thus, our objective is to estimate $P_k$.

The canonical representation of the tensor times series \eqref{eq:tucker-model} is written as
$$
\cX_t=\cF_t^{\cano} \times_{k=1}^K U_k+\cE_t,
$$
where
the diagonal and right singular matrices of $A_k$ are absorbed into the canonical core tensor
$\cF_t^{\cano} = \cF_t\times_{k=1}^K (\Lambda_k V_k^\top)$.
In this canonical form, the loading matrices $U_k$ are identifiable up to a rotation in general and up to a permutation and sign changes of the columns of $U_k$ when the singular values are all distinct in the population version of the  TOPUP or TIPUP methods, as we describe in Section \ref{section:estimating} below.
In what follows, we may identify the tensor time series in
its canonical form, i.e. $A_k=U_k$, without explicit declaration.

We do not impose any specific structure for the dynamics of the core tensor factor process $\cF_t\in\RR^{r_1\times \cdots\times r_K}$
beyond the independence between the core process and the noise process, and we
do not require any additional structure on the
correlation among different time series fibers of the noise process $\cE_t$.
Because of this generality, our estimator is based on the tensor version of the lagged sample cross product
$\widehat \Sigma_h$, $h=1,...,h_0$, where
\begin{equation}
\widehat \Sigma_h
=\widehat \Sigma_h(\cX_{1:T})
= \sum_{t=h+1}^T \frac{\cX_{t-h}\otimes\cX_t}{T-h}
\in\RR^{d_1\times\cdots\times d_K\times d_1\times \cdots\times d_K}
\label{autocov}
\end{equation}
is an order-$2K$ tensor.
The population version of this tensor autocovariance is
\begin{equation*}
\Sigma_h=\E\left( \sum_{t=h+1}^T\frac{\cX_{t-h}\otimes\cX_t}{T-h} \right) = \E\left( \sum_{t=h+1}^T\frac{\cM_{t-h}\otimes\cM_t}{T-h} \right).
\end{equation*}
{Because} $\cM_t=\cM_t\times_{k=1}^K P_k$ for all $t$,
\begin{equation*}
\Sigma_h=\Sigma_h\times_{k=1}^{2K} P_k = \E\left( \sum_{t=h+1}^T\frac{\cF_{t-h}\otimes\cF_t}{T-h} \right)  \times_{k=1}^{2K} P_k A_k,
\end{equation*}
with the notation $A_k=A_{k-K}$ and $P_k=P_{k-K}$ for all $k>K$.

\subsection{Estimating procedures} \label{section:estimating}

In this paper, we consider iterative estimation procedures to achieve sharper convergence rates than the TOPUP and TIPUP procedures proposed in \citet{chen2022factor}. We start with a quick description of their procedures as they serve as the starting point of our proposed iTOPUP and iTIPUP procedures. Note that the procedure in \cite{chen2022modeling} and \cite{wang2019} is the non-iterative TOPUP.

\smallskip
{\bf \noindent (i) Time series Outer-Product Unfolding Procedure (TOPUP)}:

Let $\widehat \Sigma_h$ be the sample autocovariance of the data $\cX_{1:T}=(\cX_1,\ldots,\cX_T)$ as in \eqref{autocov}. Define
\begin{equation} \label{eq:topup:def}
{{\rm{TOPUP}}_k}=\left(
{\rm{mat}}_k\big(\widehat \Sigma_h\big),
\ h=1,...,h_0 \right),
\end{equation}
as a $d_k\times (dd_{-k}h_0)$ matrix,
where $d=\prod_{k=1}^K d_k$, $d_{-k}=d/d_k$ and $h_0$ is a predetermined positive integer. Here we note that
${{\rm{TOPUP}}_k}$ is a function mapping a tensor time series to a matrix. In
${{\rm{TOPUP}}_k}$,
the information from different time lags is accumulated, which is useful especially when the sample size $T$ is small.
A relatively small $h_0$ is typically used, since the autocorrelation is often at its strongest with small time lags. See Remark \ref{rmk:h0}.

The TOPUP method performs SVD
of \eqref{eq:topup:def} to obtain the truncated left singular matrices
\begin{equation}
\text{$\widehat U_k$-TOPUP}(\cX_{1:T},m)
=\text{LSVD}_m\left(
{\rm{mat}}_k\big(\widehat \Sigma_h(\cX_{1:T})\big),
\ h=1,...,h_0 \right),
\label{hat-U-topup}
\end{equation}
where LSVD$_m$ stands for the left singular matrix composed of the first $m$ left singular vectors corresponding to the largest $m$ singular values.
Here $\text{$\widehat U_k$-TOPUP}$ is treated as an operator that maps a noisy tensor time series to a matrix of $m$ columns as an estimate of
the mode-$k$ singular space of the low-rank signal tensor time series.

By \eqref{eq:tucker-model} and \eqref{autocov}, the expectation of  \eqref{eq:topup:def} satisfies
\begin{align}
&\E\left[{{\rm{TOPUP}}_k}\right] \label{eq:topup:svd} \\
&= A_k \text{mat}_k\left( \sum_{t=h+1}^T \E \left(\frac{\cF_{t-h}\otimes \cF_t} {T-h} \right) \times_{l=1}^{k-1}A_l \times_{l=k+1}^{2K}A_l,\ h=1,...,h_0  \right), \notag
\end{align}
so that the TOPUP is expected to be consistent in estimating the column space of $A_k$.

\smallskip
{\bf \noindent (ii) Time series Inner-Product Unfolding Procedure (TIPUP)}:

Similar to \eqref{eq:topup:def}, define a $d_k\times (d_k h_0)$ matrix as
\begin{equation} \label{eq:tipup:def}
{{\rm{TIPUP}}_k} =
\left(\sum_{t=h+1}^T \frac{{\rm{mat}}_k(\cX_{t-h}) {\rm{mat}}_k^\top(\cX_t)} {T-h}, \ h=1,...,h_0 \right),
\end{equation}
which replaces the tensor product by the inner product {through \eqref{autocov} in} (\ref{eq:topup:def}).
The TIPUP method performs SVD:
\begin{equation}
\text{$\widehat U_k$-TIPUP}(\cX_{1:T},m)=\text{LSVD}_m\left(
\sum_{t=h+1}^T \frac{{\rm mat}_k(\cX_{t-h}) {\rm{mat}}_k^\top(\cX_t)}{T-h}, \ h=1,...,h_0
\right),
\label{hat-U-tipup}
\end{equation}
for $k=1,...,K$. Again, $\widehat U_k$-TIPUP is treated as an operator.
We note that the TOPUP method in \eqref{hat-U-topup} utilizes the entire auto-cross product tensor by applying the SVD to its mode $k$ unfolding,
whereas the TIPUP only utilizes a matrix-valued linear mapping of the auto-cross product tensor by first taking the model-$k'$ trace operation for all $k'\neq k$. The trace operation cancels the noise but also possibly some signal.


\smallskip
{\bf \noindent (iii) iTOPUP and iTIPUP}:
Next we describe a generic iterative procedure under the motivation described in Section 1. Its pseudo-code is provided in Algorithm \ref{algorithm:itopup}. It incorporates two estimators/operators \text{$\widehat U_k$-INIT} and \text{$\widehat U_k$-ITER} that map a tensor time series to an estimate of the loading matrix $U_k$.
Respectively they 
stand for the procedures used for initialization and iteration.
The \text{$\widehat U_k$-TOPUP} and \text{$\widehat U_k$-TIPUP} operators in \eqref{hat-U-topup} and \eqref{hat-U-tipup} are examples of such operators.

\begin{algorithm}[ht]
\caption{A generic iterative algorithm}\label{algorithm:itopup}
\begin{algorithmic}[1]
\State Input: $\cX_t\in\RR^{d_1\times\cdots \times d_K}$ for $t=1,...,T$, $r_k$ for all $k=1,..,K$, the tolerance parameter $\epsilon>0$, the maximum number of iterations $J$, and the \text{$\widehat U_k$-INIT} and \text{$\widehat U_k$-ITER} operators.

\State Let $j=0$, initiate via applying \text{$\widehat U_k$-INIT}
on $\{\cX_{1:T}\}$, for $k=1,...,K$, to obtain
\begin{equation*}
\widehat U_{k}^{(0)}=\text{$\widehat U_k$-INIT}_k(\cX_{1:T},r_k).
\end{equation*}

\Repeat

\State Let $j=j+1$. At the $j$-th iteration, for $k=1,...,K$, given previous estimates
$(\widehat U_{k+1}^{(j-1)},\ldots, \widehat U_{K}^{(j-1)})$ and $(\widehat U_{1}^{(j)},\ldots, \widehat U_{k-1}^{(j)})$, sequentially calculate,
\begin{equation*}
\cZ_{t,k}^{(j)}=\cX_t \times_1 (\widehat U_{1}^{(j)})^\top \times_2 \cdots \times_{k-1} (\widehat U_{k-1}^{(j)})^\top \times_{k+1} (\widehat U_{k+1}^{(j-1)})^\top \times_{k+2}\cdots\times_K (\widehat U_{K}^{(j-1)})^\top,
\end{equation*}
 for $t=1,\ldots, T$.
 Perform \text{$\widehat U_k$-ITER} on the new tensor time series $\cZ_{1:T,k}^{(j)}=(\cZ_{1,k}^{(j)},\ldots,\cZ_{T,k}^{(j)})$.
\begin{equation*}
\widehat U_{k}^{(j)}=\text{$\widehat U_k$-ITER}_k(\cZ_{1:T,k}^{(j)},r_k).
\end{equation*}

\Until $j=J$ or
\begin{equation*}
\max_{1\le k\le K}\|\widehat U_{k}^{(j)}(\widehat U_{k}^{(j)})^\top- \widehat U_{k}^{(j-1)}(\widehat U_{k}^{(j-1)})^\top\|_{\rm S}\le \epsilon,
\end{equation*}

\State Estimate and output:
\begin{align*}
\widehat U_k^{\text{iFinal}}&=\widehat U_k^{(j)},\quad k=1,...,K, \\
\widehat{P_{k}}^{\text{iFinal}} &= \widehat U_{k}^{\text{iFinal}}(\widehat U_{k}^{\text{iFinal}})^\top,\quad k=1,...,K,\\
\widehat\cF_t^{\text{iFinal}}&=\cX_t\times_{k=1}^K (\widehat U_{k}^{\text{iFinal}})^\top, \quad t=1,...,T,\\
\widehat\cE_t^{\text{iFinal}} &
{=\cX_t-
\cX_t\times_1\widehat P_1^{\text{iFinal}}  \times_2\cdots\times_K\widehat P_K^{\text{iFinal}}, \quad t=1,...,T.}
\end{align*}
\end{algorithmic}
\end{algorithm}

When we use the \text{$\widehat U_k$-TOPUP} operator \eqref{hat-U-topup} for both \text{$\widehat U_k$-INIT} and \text{$\widehat U_k$-ITER} in Algorithm~\ref{algorithm:itopup}, it will be called iTOPUP procedure. Similarly,
iTIPUP uses \text{$\widehat U_k$-TIPUP} operator \eqref{hat-U-tipup} for both \text{$\widehat U_k$-INIT} and \text{$\widehat U_k$-ITER}.
Besides these two versions, we may also use \text{$\widehat U_k$-TIPUP} for \text{$\widehat U_k$-INIT}
and \text{$\widehat U_k$-TOPUP} for \text{$\widehat U_k$-ITER}, named as TIPUP-iTOPUP. Similarly, TOPUP-iTIPUP
uses \text{$\widehat U_k$-TOPUP} as \text{$\widehat U_k$-INIT} and \text{$\widehat U_k$-TIPUP} as \text{$\widehat U_k$-ITER}.
These variants are sometimes useful, because TOPUP and TIPUP have different theoretical properties as the initializer or for iteration, as we will discuss
in Section \ref{section:thm}. Other estimators of the loading spaces based on the tensor
time series can also be used in place of \text{$\widehat U_k$-INIT} and \text{$\widehat U_k$-ITER}, such as the conventional high order SVD for tensor decomposition, which we refer to as Unfolding Procedure (UP), that simply performs SVD of the matricization along the appropriate mode of the $K+1$ order tensor $(\cX_1,\ldots, \cX_T)$ with time dimension as the additional $(K+1)$-th mode.


\begin{rmk} {While Algorithm \ref{algorithm:itopup} 
resembles an HOOI-type iteration of the orthogonal projection and singular matrix estimation methods, the proposed iTOPUP and iTIPUP are significantly different from HOOI which iterates the operations of
\begin{align*}
\hbox{orthogonal projection $\to$ matrix unfolding $\to$ SVD.}
\end{align*}
In both iTOPUP and iTIPUP, each iteration carries out the operations
\begin{align}\label{alg-1-words}
\hbox{orthogonal projection $\to$
autocovariance $\to$ matrix unfolding $\to$ SVD}.
\end{align}
As the outer product is taken with TOPUP$_k$ in \eqref{eq:topup:def}, its orthogonal projection and autocovariance operations are exchangeable, so that we can write
\begin{align*}
{{\rm{iTOPUP}}} = {\rm HOOI}(\hat{\Sigma}_h, h=1,\ldots,h_0)
\end{align*}
as long as the HOOI is modified by applying $U_\ell^{(j)}$ to both mode $\ell$ and mode $K+\ell, \ell\neq k$ in the projection operation and leaving alone the $(2K+1)$-th mode in the lags $1:h_0$ throughout.
However, for iTIPUP, the orthogonal projection and autocovariance operations in \eqref{alg-1-words} are not exchangeable as the projections are sandwiched inside the autocovariance. Needless to say, the analysis of iTOPUP and iTIPUP is much more difficult than the conventional HOOI with iid assumption} due to the involvement of the autocovarinace operations in the time-axis in the iterations.
\end{rmk}

\begin{rmk}[Rank determination] \label{rmk:rank}
Here the estimators are constructed with given ranks $r_1,\ldots,r_K$, though in theoretical analysis they are allowed to diverge. In practice, existing
procedures for rank determination in the vector factor model, including the information criteria approach \citep{bai2002,bai2007,hallin2007} and ratio of eigenvalues approach \citep{lam2012,ahn2013} can be extended to the tensor factor model by treating $d_1\times\cdots\times d_k$ tensors as $d$-dimensional vectors, $d=\prod_{k=1}^Kd_k$.
\end{rmk}

%

\section{Theoretical Properties} \label{section:thm}

In this section we present some theoretical properties of the iterative procedures. We first present the additional notation needed for the discussion, and then the error bounds for the iterative estimators under a minimum condition on the error process $\cE_t$ in the model. These error bounds are quite general and cover many different models. To help decipher the general results, we present two concrete signal process models (or general sets of assumptions) 
with simpler and more explicit convergence rates.

\subsection{Notation}


Let $\overline\E[\cdot]=\E[\cdot|\{ \cF_1,...,\cF_T\}]$. Define $d=\prod_{k=1}^K d_k$, $d_{-k}=d/d_k$, $r=\prod_{k=1}^K r_k$ and $r_{-k}=r/r_k$. Define order-4 tensors
\begin{align}\label{Theta}
& \Theta_{k,h} =\sum_{t=h+1}^T \frac{\text{mat}_k(\cM_{t-h})\otimes \text{mat}_k(\cM_{t}) }{T-h} \in\RR^{d_k\times d_{-k}\times d_k \times d_{-k}},
\\ \nonumber
& \Phi_{k,h} =\sum_{t=h+1}^T \frac{\text{mat}_k(\cF_{t-h})\otimes \text{mat}_k(\cF_{t}) }{T-h} \in\RR^{r_k\times r_{-k}\times r_k \times r_{-k}},
\\ \nonumber
& \Phi^{\cano}_{k,h} =\sum_{t=h+1}^T \frac{\text{mat}_k(\cM_{t-h}\times_{k=1}^KU_k^\top)
\otimes \text{mat}_k(\cM_{t}\times_{k=1}^KU_k^\top)}{T-h}
\in\RR^{r_k\times r_{-k}\times r_k \times r_{-k}},
\end{align}
with $U_k$ from the SVD $A_k=U_k \Lambda_k V_k^\top$.
We view $\Phi^{\cano}_{k,h}$ as the canonical version of the auto-covariance of the factor process. The noiseless version of the matrix TOPUP$_k$ in
\eqref{eq:topup:def} is
\begin{equation}\label{Theta-2}
{\text{mat}_1(\Theta_{k,1:h_0})
= \overline\E\big[ {{\rm{TOPUP}}_k}\big]
\in \R^{d_k\times(dd_{-k}h_0)}},
\end{equation}
with $\Theta_{k,1:h_0} = (\Theta_{k,h}, h=1,\ldots,h_0)$.
The canonical factor version of \eqref{Theta-2}
is $\text{mat}_1(\Phi^{{\cano}}_{k,1:h_0})\in\R^{r_k\times (rr_{-k}h_0)}$ with
$\Phi^{{\cano}}_{k,1:h_0}=(\Phi^{{\cano}}_{k,h}, h=1,\ldots,h_0)
\in \R^{r_k\times r_{-k}\times r_k \times r_{-k}\times h_0}$.
Similarly define
\begin{align}\label{Theta^*}
\Theta_{k,h}^* &=\sum_{t=h+1}^T \frac{\text{mat}_k(\cM_{t-h}) \text{mat}_k^\top(\cM_{t}) }{T-h} \in\RR^{d_k\times d_k},\\ \nonumber
\Phi_{k,h}^* &=\sum_{t=h+1}^T \frac{\text{mat}_k(\cF_{t-h}) \text{mat}_k^\top(\cF_{t}) }{T-h} \in\RR^{r_k\times r_k } , \\ \nonumber
\Phi_{k,h}^{*\cano} &= U_k^\top\Theta_{k,h}^*U_k
\\ \nonumber & =\sum_{t=h+1}^T \frac{\text{mat}_k(\cM_{t-h}\times_{k=1}^KU_k^\top )
\text{mat}_k^\top(\cM_{t}\times_{k=1}^KU_k^\top ) }{T-h}
\in\RR^{r_k\times r_k}.
\end{align}
The noiseless version of 
\eqref{eq:tipup:def} is
\begin{equation}\label{Theta^*-2}
\Theta^*_{k,1:h_0} = (\Theta_{k,h}^*, h=1,\ldots,h_0) = \overline\E\big[ {{\rm{TIPUP}}_k}\big]
\in \R^{d_k\times (d_kh_0)},
\end{equation}
and its canonical factor version is $\Phi_{k,1:h_0}^{*{\cano}}=(\Phi_{k,h}^{*{\cano}}, h=1,\ldots,h_0)
\in \R^{r_k\times (r_kh_0)}$.
Let $\tau_{k,m}$ be the $m$-th singular value of the noiseless version of the TOPUP$_k$ matrix,
$$\tau_{k,m}=\sigma_{m}\left(\,\overline\E
\big[{{\rm{TOPUP}}_k}\big]
\right)
= \sigma_{m}\big({\text{mat}}_1(\Theta_{k,1:h_0})\big)
= \sigma_{m}\big({\text{mat}}_1(\Phi^{\cano}_{k,1:h_0})\big).
$$
The signal strength for iTOPUP can be characterized as
\begin{equation}\label{lam_k}
\lambda_k=\sqrt{h_0^{-1/2}\tau_{k,r_k}} .
\end{equation}
Similarly, let
$$\tau_{k,m}^*=\sigma_{m}(\overline\E ({{\rm{TIPUP}}_k}))
= \sigma_{m}\big(
\Theta_{k,1:h_0}^*\big)
= \sigma_{m}\big(
\Phi_{k,1:h_0}^{*\cano}\big).
$$
The signal strength for iTIPUP can be characterized as
\begin{equation}\label{lam^*_k}
\lambda_k^*=\sqrt{h_0^{-1/2}\tau_{k,r_k}^*} .
\end{equation}
We note that by \eqref{Theta^*} and the Cauchy-Schwarz inequality,
\begin{align*}
\lambda_k^{*2} \le h_0^{-1/2} \| \Theta_{k,1:h_0}^* \|_{\rm S} \le
\max_{h\le h_0} \| \Theta_{k,h}^* \|_{\rm S} \le \|\Theta_{k,0}^*\|_{\rm S}/(1-h_0/T).
\end{align*}

\subsection{General error bounds}

Our general error bounds {for the proposed iTOPUP and iTIPUP}
are established under the following assumption
for the error process.

\begin{assumption}\label{asmp:error}
The error process $\cE_t$ are independent Gaussian tensors {conditionally} on the factor process $\{\cF_t,t\in\mathbb Z\}$. In addition, there exists some constant $\sigma>0$, such that
\begin{equation*}
\overline\E (u^\top \text{vec}(\cE_t))^2\le \sigma^2 \|u\|_2^2, \quad u\in\RR^d.
\end{equation*}
\end{assumption}

Assumption \ref{asmp:error} is used in \cite{chen2022factor} for the theoretical investigation of the non-iterative TIPUP and TOPUP, and is
similar to those on the noise imposed in \cite{lam2011, lam2012}. The normality assumption, which ensures fast convergence rates in our analysis, is imposed for technical convenience. It accommodates general patterns of dependence among individual time series fibers, but also allows a presentation of the main results with manageable analytical complexity. In fact, direct extension is visible in our analysis under the sub-Gaussian and even more general tail probability conditions. Under Assumption \ref{asmp:error} the magnitude of the noise can be measured by the dimension $d_k$ before the projection and by the {rank} $r_k$ after the projection. The main
theorems (Theorems~\ref{thm:itopup}, \ref{thm:itipup} and \ref{thm:tipup-itopup}) in this section
are based on this assumption on the noise alone, and cover all thereafter discussed settings of the signal $\cM_t$.

Let us first study the behavior of iTOPUP procedure.
By \citet{chen2022factor},
the risk $\overline\E\big[\big\|\widehat U_k^{(0)}\widehat U_k^{(0)\top} - U_kU_k^\top\big\|_{\rm S}\big]$
of the TOPUP estimator for $U_k$, the initialization of iTOPUP, is no larger than a constant times
\begin{align}\label{TOPUP_k-bd}
{R}_k^{(0)}
=&  \lam_k^{-2} \sigma T^{-1/2}
\Big\{\sqrt{d_kd_{-k}r_{-k}}\|\Theta_{k,0}^*\|_{\rm S}^{1/2} + \big(\sqrt{d_k}+\sqrt{d_{-k}r}\big)\|\Theta_{{k},0}\|_{\rm op}^{1/2}
\\
&\quad +\sigma \sqrt{d_k}d_{-k} + \sigma d_{k}\sqrt{d_{-k}}T^{-1/2} \Big\}, \notag
\end{align}
where $d_{-k} = \prod_{j\neq k}d_j$ and $r_{-k}=\prod_{j\neq k}r_j$.
A variation of the \cite{wedin1972} perturbation theory,
stated in Lemma \ref{lm-pertubation}, provides a sharper bound for the TOPUP estimator as follows.

\begin{proposition}\label{prop:topup}
Suppose Assumption \ref{asmp:error} holds. Let $h_0 \le T/4$. Define 
\begin{align}
\sR_{k2}=&  \lam_k^{-2} \sigma T^{-1/2}
\Big\{\sqrt{r_k}r_{-k}\|\Theta_{k,0}^*\|_{\rm S}^{1/2} + \big(\sqrt{d_k}+\sqrt{rr_{-k}}\big)\|\Theta_{{k},0}\|_{\rm op}^{1/2}
\\
&\quad +\sigma (\sqrt{d_k}+\sqrt{rr_{-k}}) + \sigma \sqrt{d_{k}r}T^{-1/2} \Big\}, \notag \\
R_k^{\topup}=&\sR_{k2} + (R_{k}^{(0)})^2. \label{bound:topup0}
\end{align}
If $\max_{1\le k\le K}R_{k}^{(0)}=o(1)$, it holds simultaneously for all $1\le k\le K$ that
\begin{align*}
\overline\E\big\|\widehat P_k^{(0)} - P_k\big\|_{\rm S} \lesssim R_k^{\topup}. 
\end{align*}
\end{proposition}

\begin{rmk}
In the rank one case ($r_k=1$, $1\le k\le K$), Proposition 1 in \cite{Ouyang2022} {\it provides} the sharpness of the above bound.
Additionally, for fixed rank, the error bound for TIPUP in \cite{chen2022rejoinder} was confirmed to be sharp by Proposition 1 in \cite{Ouyang2022}.
\end{rmk}

The aim of iTOPUP is to achieve dimension reduction
by projecting the data in other modes of the tensor time series from $\R^{d_j}$ to $\R^{r_j}$, $j\neq k$.
Ideally (e.g. when the true projection matrices $U_j$ are used),
this would reduce the rate given in \eqref{TOPUP_k-bd} and \eqref{bound:topup0} to
\begin{align}\label{iTOPUP-ideal_k}
{R}_k^{\ideal}
=& {\sR}_{k2} +{\sR}_{k1}^2,
\end{align}
by replacing all $d_j$ in $R_k^{(0)}$ with $r_j$, $j\neq k$, where
\begin{align*}
{\sR}_{k1}
=& \lam_k^{-2} \sigma T^{-1/2}
\Big\{\sqrt{d_k}r_{-k}
\|\Theta_{{k},0}^*\|_{\rm S}^{1/2} + \big(\sqrt{d_k}+\sqrt{r_{-k}r}\big)\|\Theta_{k,0}\|_{\rm op}^{1/2}
\\
&\quad +\sigma \sqrt{d_k}r_{-k}+\sigma d_k\sqrt{r_{-k}}T^{-1/2}\Big\}.
\end{align*}
However, because the iteration uses the estimated $U_j, j\neq k,$ of total dimension
$d^*_{-k} = \sum_{j\neq k} d_jr_j$, our analysis also involves
the following additional error term,
\begin{equation}
{R}_k^{\add}=\lambda_k^{-2}\sigma^2 T^{-1}\Big(d_{-k}^*+\sqrt{d^*_{-k}d_k r_{-k}}\Big).
\label{iTOPUP-add_k}
\end{equation}

The following theorem provides conditions under which the ideal rate is indeed achieved.

\begin{theorem}\label{thm:itopup}
Suppose Assumption \ref{asmp:error} holds. Let $h_0 \le T/4$
and $P_k$, $\Theta_{k,0}$, $\Theta_{k,0}^*$ and $\lambda_k$ be as in
\eqref{P_k}, \eqref{Theta}, \eqref{Theta^*} and \eqref{lam_k} respectively. Let
${R}^{(0)}=\max_{1\le k\le K}{R}^{(0)}_k$ with the ${R}^{(0)}_k$ in \eqref{TOPUP_k-bd}, ${R}^{\topup}=\max_{1\le k\le K}{R}^{\topup}_k$ with the ${R}^{\topup}_k$ in \eqref{bound:topup0}, ${R}^{\ideal}=\max_{1\le k\le K}{R}^{\ideal}_k$ with the ${R}^{\ideal}_k$ in \eqref{iTOPUP-ideal_k},
and ${R}^{\add}=\max_{1\le k\le K}{R}^{\add}_k$ with
the ${R}^{\add}_k$ in \eqref{iTOPUP-add_k}. Let $\widehat P_k^{(m)} =\widehat U_k^{(m)}\widehat U_k^{(m)\top}$ with the $m$-step estimator
$\widehat U_k^{(m)}$ in the \text{iTOPUP} algorithm.
Then, the following statements hold for a certain numerical constant
$C_{1}^{\topup}$ and a constant $C_{1,K}^{\iter}$ depending on $K$ only: When
\begin{equation}\label{condition1n}
C_{1}^{\topup}{R}^{(0)}\le (1-\rho)/4
\ \hbox{ and }\
C_{1,K}^{\iter}({R}^{\ideal}+R^{\add})\le\rho
\end{equation}
with a constant $0<\rho<1$, it holds simultaneously for all $1\le k\le K$ and $m\ge 0$ that
\begin{equation}\label{bound:itopup:cora2}
\big\|\widehat P_k^{(m)} - P_k\big\|_{\rm S}
\le 2C_{1}^{\topup}\left(
(1-\rho^m)(1-\rho)^{-1}{R}^{\ideal}
+ (\rho^m/2){R}^{\topup} \right)
\end{equation}
in an event with probability at least $1 -\sum_{k=1}^Ke^{-d_k}$.
In particular, after at most $J=\lfloor\log(\max_{k}{d_{-k}/r_{-k}})/\log(1/\rho)\rfloor$ iterations,
\begin{align}\label{bound:topup}
\overline\E\left[\max_{1\le k\le K}\big\|\widehat P_k^{(J)} - P_k\big\|_{\rm S}\right]
\le \frac{3C_{1}^{\topup}}{1-\rho}{R}^{\ideal} + \sum_{k=1}^Ke^{-d_k}.
\end{align}
\end{theorem}


\begin{rmk}
The essence of our analysis of iTOPUP is that under \eqref{condition1n},
each iteration is a contraction of the error in the estimation of
$\times_{j\neq k} U_j$ in a small neighborhood of it.
The upper bound \eqref{bound:itopup:cora2} for the error of the $m$-step estimator is comprised of two terms respectively corresponding to the cumulative iteration error and the contracted error of the initial estimator. Of course, after sufficiently large number of iterations, the first term would dominate the second as in \eqref{bound:topup}.
\end{rmk}

\begin{rmk}
The constant $C_{1}^{\topup}$ is taken in
\eqref{condition1n} to guarantee
sufficient accuracy of the initialization of iTOPUP in the following sense:
\bel{init-error-bd}
\max_{k\le K}\overline{\E} \big\|\widehat U_k^{(0)}(\widehat U_k^{(0)})^\top-P_k\big\|_{\rm S}
\le C_{1}^{\topup} R^{(0)}
\eel
with at least probability $1-8^{-1}\sum_{k=1}^K e^{-d_k}$.
The consistency of the non-iterative TOPUP estimator requires $R^{(0)}\rightarrow 0$ \citep{chen2022factor}. However, here we do not require the TOPUP estimator as the initial value to be consistent. For
\eqref{bound:itopup:cora2}
to hold, the TOPUP estimator is only required to be sufficiently close to the ground truth as in \eqref{init-error-bd}.
\end{rmk}

\begin{rmk}
It is relatively easy to verify that the first part of \eqref{condition1n} implies the second part
under many circumstances, including when $d_k$ are of the same order, $r_k$ are of the same order, and $r_k \lesssim d_k^{1-1/K}$ ($K\geq 2$).
In \cite{zhang2018tensor}, condition $\max_k r_k \lesssim \min_k d_k^{1/2}$ is imposed to control the complexity of the estimated $U_j$ in HOOI although their error bound is sharp and their model is very different. In Corollaries \ref{cor:itopup1} and \ref{cor:itopup2} below,
we prove that the second part of \eqref{condition1n} follows from the first part
respectively in a general fixed rank model and a general diverging rank model.
In fact ${R}_k^{\ideal}+R_k^{\add}\ll {R}^{(0)}$ typically so that the second part of \eqref{condition1n} provides a non-asymptotic lower bound for the $\rho$ in \eqref{bound:itopup:cora2}, allowing $\rho = \rho_{T,d_k,d^*_{-k},r_k,r_{-k},\lam_k} \to 0$.
In Corollary \ref{cor:itopup1} below, $\rho=C_{1,K}^{\iter}({R}^{\ideal} + R^{\add})$ is taken in \eqref{condition1n}
to give \eqref{bound:topup} in one iteration when ${R}_k^{\ideal}$ dominates $R_k^{\add}$.
\end{rmk}


\begin{rmk} \label{samplesize}
When the loading matrices $A_k$ and the TOPUP version of the matrix unfolding of the auto-covariance of $\cF_t$ all have bounded condition numbers and average squared entries of magnitude 1, $\lam_k^2$, $\|\Theta_{{k},0}^*\|_{\rm S}$ and $\|\Theta_{k,0}\|_{\rm op}$ are all of
the order $d\times$poly$(r_1,\ldots,r_K)$.
In this case, Theorem \ref{thm:itopup} just requires $T\ge $poly$(r_1,\ldots,r_K)$ for the initialization to achieve through iteration the fast convergence rate  $T^{-1/2}d_{-k}^{-1/2}$poly$(r_1,\ldots,r_K)$. See Corollary \ref{cor:itopup2} for details. This is in sharp contrast to
the results of traditional factor analysis
which requires $T\rightarrow \infty$ to consistently estimate the loading spaces. The main reason is that the
other tensor modes provide additional information and in certain sense serve as additional samples. Roughly speaking, we have totally $dT = d_kd_{-k}T$
observations in the tensor time series to estimate the $d_kr_k$ parameters in the projection to the column space of the loading matrix $A_k$, where $r_k\ll d_{-k}T$
in the above ``regular'' case.
%
%
\end{rmk}


Now, let us consider the statistical performance of iTIPUP procedure.
Again, by \citet{chen2022factor} the TIPUP risk in the estimation of $P_k$ is bounded by
\begin{align}\label{TIPUP_k-bd}
\overline\E\big[\big\|\widehat P_k^{\tipup} - P_k\big\|_{\rm S}\big]
\lesssim  {R}_k^{*(0)}
= (\lam_k^*)^{-2}\sigma T^{-1/2}
\sqrt{d_k} \left(\|\Theta_{k,0}^*\|_{\rm S}^{1/2} + \sigma\sqrt{d_{-k}} \right)
\end{align}
with $d_{-k}=\prod_{j\neq k}d_j$, and the aim of iTIPUP is to achieve the ideal rate
\begin{eqnarray}\label{iTIPUP-ideal_k}
{R}_k^{*\ideal}
= (\lam_k^*)^{-2} \sigma T^{-1/2}
\sqrt{d_k} \left(\|\Theta_{k,0}^*\|_{\rm S}^{1/2} + \sigma\sqrt{r_{-k}} \right)
\end{eqnarray}
through dimension reduction, where $r_{-k}=\prod_{j\neq k}r_j$. As in the case of iTOPUP, our error bound for iTIPUP involves the additional error term
\begin{equation}
{R}_k^{*\add}
=\sqrt{d^*_{-k}/d_k}{R}_k^{*\ideal}. \label{iTIPUP-add_k}
\end{equation}
The following theorem, which allows the ranks $r_k$ to grow to infinity as well as $d_k$ when $T\to\infty$,
provides sufficient conditions to guarantee the ideal convergence rate for iTIPUP.

\begin{theorem}\label{thm:itipup}
Suppose Assumption \ref{asmp:error} holds.
Let $P_k$, $\Theta_{k,0}^*$ and $\lambda^*_k$ be as in \eqref{P_k}, \eqref{Theta^*}
and \eqref{lam^*_k} respectively. Let $h_0 \le T/4$, and
\[
{R}^{*(0)}=\max_{1\le k\le K}{R}^{*(0)}_k; \mbox{\ \ \ }
{R}^{*\ideal}=\max_{1\le k\le K}{R}^{*\ideal}_k, \mbox{\ \ \ } {R}^{*\add}=\max_{1\le k\le K}{R}^{*\add}_k.
\]
with ${R}^{*(0)}_k$ in \eqref{TIPUP_k-bd}, ${R}_k^{*\ideal}$ in \eqref{iTIPUP-ideal_k} and ${R}_k^{*\add}$ in \eqref{iTIPUP-add_k}.
{Let $\widehat P_k^{(m)} =\widehat U_k^{(m)}\widehat U_k^{(m)\top}$ with the $m$-step estimator
$\widehat U_k^{(m)}$ in iTIPUP algorithm.
Then, the following statements hold for a certain numerical constant
$C_{1}^{\tipup}$ and a constant $C_{1,K}^{\iter}$ depending on $K$ only: When
\begin{equation}\label{condition1}
C_{1}^{\tipup}{R}^{*(0)}\le \min_{1\le k\le K}
\frac{(1-\rho)\lambda_k^{*2}}{8\|\Theta_{k,0}^*\|_{\rm S}}
\ \hbox{ and }\
C_{1,K}^{\iter}({R}^{*\ideal}+R^{*\add})\le\rho
\end{equation}
with a constant $0<\rho<1$, it holds simultaneously for all $1\le k\le K$ and $m\ge 0$ that}
\begin{equation}\label{bound:itipup:cora2}
\big\|\widehat P_k^{(m)} - P_k\big\|_{\rm S}
\le 2C_{1}^{\tipup}\left( (1-\rho^m)(1-\rho)^{-1}{R}^{*\ideal}
+ (\rho^m/2){R}^{*(0)}\right)
\end{equation}
in an event with probability at least
$1 -\sum_{k=1}^Ke^{-d_k}$.
In particular, after at most $J=\lfloor\log(\max_{k}d_{-k}/r_{-k})/\log(1/\rho)\rfloor$ iterations,
\begin{align}\label{bound:tipup}
\overline\E\left[\max_{1\le k\le K}\big\|\widehat P_k^{(J)} - P_k\big\|_{\rm S}\right]
\le \frac{3C_{1}^{\tipup}}{1-\rho}{R}^{*\ideal} + \sum_{k=1}^Ke^{-d_k}.
\end{align}
\end{theorem}

We briefly discuss the conditions and conclusions of Theorem \ref{thm:itipup} as the details are parallel to the remarks below Theorme \ref{thm:itopup}.
By \eqref{Theta^*}, \eqref{lam^*_k} and the Cauchy-Schwarz inequality,
$(1-h_0/T)\lam_k^{*2} \le \|\Theta_{k,0}^*\|_{\rm S}$, so that the first condition in \eqref{condition1}
guarantees a sufficiently small $R^{*(0)}$, which implies a sufficiently small error in
the initialization of iTIPUP by \eqref{TIPUP_k-bd}.
The second condition in \eqref{condition1}
again has two terms respectively reflecting the ideal rate after dimension reduction by the true $U_{-k}=\odot_{j\neq k}U_j$ in the estimation of $U_k$ and the extra cost of estimating $U_{-k}$.
The upper bound \eqref{bound:itipup:cora2} for the error of the $m$-step estimator is also comprised of two terms representing the cumulative iteration error and contracted initialization error.
In Corollary \ref{cor:itipup1} below with fixed $r_k$, the smallest  $\rho=C_{1,K}^{\iter}({R}^{*\ideal} + {R}^{*\add})$ is taken in \eqref{condition1} to achieve \eqref{bound:tipup} in one iteration when ${R}_k^{*\ideal}$ dominates $R_k^{*\add}$.
Moreover, Theorem \ref{thm:itipup} allows diverging ranks $r_k$ and convergence rate $T^{-1/2}d_{-k}^{-1/2}$poly$(r_1,\ldots,r_K)$ under proper conditions as discussed in Remark~\ref{samplesize}.

As discussed in Section~\ref{section:estimating}, we can 
mix the TOPUP and TIPUP operations for the initiation and
iterative operations in Algorithm~\ref{algorithm:itopup}.
For example, the proof of Theorems~\ref{thm:itopup} yields
the following error bound for the mixed TIPUP-iTOPUP algorithm.

\begin{theorem}\label{thm:tipup-itopup}
Assumption \ref{asmp:error} holds.
Let ${R}^{(0)}$, ${R}^{\ideal}$ and ${R}^{\add}$
be as in Theorem~\ref{thm:itopup} and $R^{*(0)}$ be as in Theorem \ref{thm:itipup}.
Let $\widehat P_k^{(m)} =\widehat U_k^{(m)}\widehat U_k^{(m)\top}$ with
$\widehat U_k^{(m)}$ being the $m$-step estimator
in the \hbox{\rm TIPUP-iTOPUP} algorithm.
Then, the following statement holds for a certain numerical constant
$C_{1}^{\topup}$ and a constant $C_{1,K}^{\iter}$ depending on $K$ only: When
\begin{equation}
\label{condition1n*}
C_{1}^{\topup}{R}^{*(0)}\le (1-\rho)/4
\ \hbox{ and }\
C_{1,K}^{\iter}({R}^{\ideal}+R^{\add})\le\rho
\end{equation}
with a constant $0<\rho<1$, it holds in an event with probability at least $1 -\sum_{k=1}^Ke^{-d_k}$
that simultaneously for all $1\le k\le K$ and $m\ge 0$
\bes
\big\|\widehat P_k^{(m)} - P_k\big\|_{\rm S}
\le 2C_{1}^{\topup}\left(
(1-\rho^m)(1-\rho)^{-1}{R}^{\ideal}
+ (\rho^m/2){R}^{*(0)}\right).
\ees
\end{theorem}

We omit the statement of an analogous error bound for the \hbox{\rm TOPUP-iTIPUP} algorithm.

%


\subsection{Fixed rank factor process}

In this section we provide the convergence rate when the dimensions of the factors $\cF_t$, or equivalently the ranks of the signal process $\cM_t$, $r_1,\ldots,r_K$, are fixed, and the auto-cross-outer-product of the factor process is ergodic. Formally, we impose the following additional assumption.

\begin{assumption}\label{asmp:factor}
The ranks $r_1,...,r_K$ are fixed.
The factor process $\cF_t$ is weakly stationary
and its auto-cross-outer-product process is ergodic in the sense of
\begin{equation*}
\frac{1}{T-h}\sum_{t=h+1}^T \cF_{t-h}\otimes\cF_t \longrightarrow \E (\cF_{t-h}\otimes\cF_t) \quad \text{in probability},
\end{equation*}
where the elements of $\E (\cF_{t-h}\otimes\cF_t)$ are all finite. In addition, the condition numbers of $A_k^\top A_k$ ($k=1,...,K$) are bounded. Furthermore, assume that $h_0$ is fixed, and  \\  
(i) (TOPUP related):  $\E[\text{mat}_1(\Phi_{k,1:h_0})]$ is of rank $r_k$ for $1\le k\le K$.  \\ 
(ii) (TIPUP related): $\E[\Phi_{k,1:h_0}^{*\cano}]$ is of rank $r_k$ for $1\le k\le K$. 
\end{assumption}

Under Assumption \ref{asmp:factor}, the factor process has a fixed expected auto-cross-moment tensor with fixed dimensions. The assumption that the condition numbers of $A_k^\top A_k$ ($k=1,...,K$) are bounded corresponds to the pervasive condition (e.g., \citet{stock2002}, \citet{bai2003}). It ensures that all the singular values of $A_k$ are of the same order. Such conditions are commonly imposed in factor analysis.

As our methods are based on auto-cross-moment at nonzero lags, we do not need to assume any specific model for the latent process $\cF_t$, except some rank conditions in Assumption~\ref{asmp:factor}(i) and (ii).
Since the columns of  $\Phi_{k,1:h_0}^{*\cano}$ are linear combinations of those of $\text{mat}_1(\Phi_{k,1:h_0}^{\cano})$ and $\E[\text{mat}_1(\Phi_{k,1:h_0})]$
and $\E[\text{mat}_1(\Phi^{\cano}_{k,1:h_0})]$ have the same rank,
Assumption \ref{asmp:factor}(ii) implies Assumption \ref{asmp:factor}(i).

In order to provide a more concrete understanding of Assumption~\ref{asmp:factor}(i) and (ii),
consider the case of $k=1$ and $K=2$. We write the factor process
$\cF_t = (f_{i,j,t})_{r_1\times r_2}$, and the stationary auto-cross-moments
$\phi_{i_1,j_1,i_2,j_2,h} = \E (f_{i_1,j_1,t-h}f_{i_2,j_2,t})$. Hence
$\E[\text{mat}_1(\Phi_{k,1:h_0})]$ is a $r_k\times(r_{-k}r_kr_{-k}h_0)$ matrix, with columns being
$\phi_{\cdot, j_1,i_2,j_2,h}$. Since
$\E[\text{mat}_1(\Phi_{k,1:h_0})]\E[\text{mat}_1(\Phi_{k,1:h_0})]^{\top}$ is a sum of many semi-positive definite
$r_k\times r_k$ matrices, if any one of these matrices is full rank, then $\E[\text{mat}_1(\Phi_{k,1:h_0})]$ is of rank $r_k$. Hence Assumption~\ref{asmp:factor}(i) is relatively easy to fulfill. On the other hand, Assumption~\ref{asmp:factor}(ii) is quite different. First, the condition is imposed on
the canonical form of the model as the inner product in TIPUP related procedures behaves
differently. Let
$\cF_t^{\cano} = U_1^\top \cM_t U_2 = (f_{i,j,t}^{\cano})_{r_1\times r_2}$, and
$\phi^{\cano}_{i_1,j_1,i_2,j_2,h} = \E (f_{i_1,j_1,t-h}^{\cano}f_{i_2,j_2,t}^{\cano}$). Then
$\|\Phi^{*{\cano}}_{1,1:h_0}\|_{\rm HS}^2
=\sum_{h=1}^{h_0}\sum_{i_1,i_2}\big(\sum_{j=1}^{r_2} \phi_{i_1,j,i_2,j,h}^{{\cano}}\big)^2$.
As $\phi_{i_1,j,i_2,j,h}^{{\cano}}$ may be positive or negative for different $i_1,i_2,j,h$, the summation
$\sum_{j=1}^{r_2} \phi_{i_1,j,i_2,j,h}^{{\cano}}$ is subject to potential signal cancellation for $h>0$. Assumption~\ref{asmp:factor}(ii)
ensures that
there is no complete signal cancellation that makes the rank of  $\E[\Phi_{k,1:h_0}^{*{\cano}}]$ less than $r_k$. While the signal
cancellation rarely causes the rank deficiency, the resulting loss of efficiency may still have an impact on the finite sample performance as our simulation results demonstrate. Of course complete signal cancellation is less likely with larger $h_0$.

The following corollary is a simplified version of Theorem \ref{thm:itopup} under Assumption~\ref{asmp:factor}(i). 

\begin{cor}\label{cor:itopup1}
Suppose Assumptions \ref{asmp:error} and \ref{asmp:factor}(i) hold.
Let $\lambda=\prod_{k=1}^K \| A_k \|_{\rm S}$ and $r_{-k}=r/r_k$. Let ${h_0}\le T/4$ and
$\sigma$ fixed. Then, there exist numerical constants $C_{0,K}$ and  $C_{1,K}$ depending on $K$ only such that when
\begin{align} \label{eq:signal:itopup}
\lambda^{2} \ge C_{0,K} \sigma^2\max_{1\le k\le K}\left(\frac{d r_{-k}}{T}+\frac{d}{\sqrt{Td_{k}r_{-k}}} \right),
\end{align}
the 1-step iTOPUP estimator satisfies
\begin{align}\label{bound:itopup:cora}
\E \|\widehat P_{k}^{(1)}-P_k\|_{\rm S}
\le&  C_{1,K} \left(\frac{\sigma}{\lambda\sqrt{T}}\left( \frac{\sqrt{d_k}} {\sqrt{r_{-k}}}+\sqrt{rr_{-k}}\right)+ \frac{\sigma^2}{\lambda^2\sqrt{T}}\left( \frac{\sqrt{d_k}} {\sqrt{r_{-k}}}+\sqrt{r}\right) \right)  \\
&+C_{1,K} \left(\frac{\sigma\sqrt{d_k} r_{-k}}{\lambda\sqrt{T}}+ \frac{\sigma^2\sqrt{d_kr_{-k}}}{\lambda^2\sqrt{T}} \right)^2 + \sum_{k=1}^K e^{-d_k}. \notag
\end{align}
\end{cor}

\begin{rmk}
Under Assumption \ref{asmp:factor} that $r_k$ is fixed, \eqref{eq:signal:itopup}, \eqref{bound:itopup:cora} and \eqref{eq:signal:itipup}, \eqref{bound:itipup:cora} in Corollary \ref{cor:itipup1} can absorb $r_k$'s into the numerical constants. The corollaries are expressed in this form to allow divergent $r_k$ for the purpose of facilitating comparison with the minimax lower bound in Theorem \ref{thm:stat_lowerbdd}. They also represent a specific case of Corollaries \ref{cor:itopup2}-\ref{cor:tipup-itopup2}.
\end{rmk}

Corollary \ref{cor:itopup1} asserts that, in order to recover the factor loading space for $A_k$, the signal to noise ratio needs to satisfy $\lambda/\sigma \ge C_{0,K,r} \max_{k\le K}(d^{1/2}T^{-1/2}+d^{1/2}d_{k}^{-1/4}T^{-1/4})$ as in \eqref{eq:signal:itopup},
and the ideal rate \eqref{bound:itopup:cora} can be achieved in one iteration. Under Assumptions \ref{asmp:error}, \ref{asmp:factor}(i), the error bound in Proposition \ref{prop:topup} yields the convergence rate
\begin{align*}
\E \|\widehat P_{k}^{(0)}-P_k\|_{\rm S}\lesssim \left( \frac{\sigma \sqrt{d_k}} {\lambda \sqrt{T }}+ \frac{\sigma^2\sqrt{d_k}} {\lambda^2 \sqrt{T}} \right) + \left( \frac{\sigma \sqrt{d}} {\lambda \sqrt{T }}+ \frac{\sigma^2\sqrt{d_k}d_{-k}} {\lambda^2 \sqrt{T}} \right)^2 .
\end{align*}
In comparison, the ideal rate is much sharper than the convergence rate of the non-iterative TOPUP in Proposition \ref{prop:topup} when $\lambda^2/\sigma^2\ll \min_{k\le K}\{d^{4/3}/(T^{1/3}d_{k}),d^2/(T^{1/2}d_{k}^{3/2})\}$.




The following corollary is a simplified version of Theorem \ref{thm:itipup} under Assumption~\ref{asmp:factor}(ii), which excludes severe signal cancellation in iTIPUP.

\begin{cor} \label{cor:itipup1}
Suppose Assumptions \ref{asmp:error} and \ref{asmp:factor}(ii) hold.
Let $\lambda=\prod_{k=1}^K \| A_k \|_{\rm S}$ 
and $r_{-k}=r/r_k$. Let ${h_0}\le T/4$ and $\sigma$ fixed.
Then, there exist constants $C_{0,K}$ and $C_{1,K}$ depending on $K$ only such that when
\begin{align} \label{eq:signal:itipup}
\lambda^{2} \ge C_{0,K}\sigma^2\max_{1\le k\le K}\left(\frac{d_{k}}{Tr_{-k}} +\frac{\sqrt{d}}{\sqrt{T}r_{-k}}\right),
\end{align}
the 1-step iTIPUP estimator satisfies
\begin{align}\label{bound:itipup:cora}
\E  \|\widehat P_{k}^{(1)}-P_k\|_{\rm S}
\le  C_{1,K}\left( \frac{\sigma \sqrt{d_k}} {\lambda \sqrt{T r_{-k}}}+ \frac{\sigma^2\sqrt{d_k}} {\lambda^2 \sqrt{T r_{-k}}} \right) + \sum_{k=1}^K e^{-d_k},
\end{align}
and the 1-step TIPUP-iTOPUP estimator satisfies \eqref{bound:itopup:cora}.
\end{cor}

Compared with the results in Corollary \ref{cor:itopup1} for iTOPUP, the achieved ideal rate  \eqref{bound:itipup:cora} is the same. However, the signal-to-noise ratio requirement \eqref{eq:signal:itipup} is weaker but Assumption \ref{asmp:factor}(ii) is stronger in Corollary \ref{cor:itipup1} for iTIPUP.
Again, the ideal rate is much sharper than the convergence rate of the non-iterative TIPUP in  \citet{chen2022factor}.



\subsection{Diverging ranks}

The main theorems in Subsection 3.2 allow for the case where the dimensions of the core factor, $r_1,...,r_K$, diverge as the dimensions of the observed tensor $d_1,...,d_K$ grow to infinity.
The following assumption provides a concrete set of conditions that can be used to provide some insights of the properties of iTOPUP and iTIPUP in such scenarios.

\begin{assumption}\label{asmp:strength}
For a certain $\delta_0\in [0,1]$,
$\|\Theta_{k,0}\|_{\text{op}} \asymp \sigma^2 d^{1-\delta_0}/r$ 
 and $\|\Theta_{k,0}^*\|_{\rm S} \asymp \sigma^2 d^{1-\delta_0}/r_k$ 
with probability approaching one.
For the singular values, two scenarios are considered. \\
(i) (TOPUP related): There exist some constants $\delta_1\in [\delta_0, 1]$ and $c_1>0$ such that with probability approaching one (as $T\rightarrow \infty$) $\lambda_k^2\ge c_1 \sigma^2 d^{1-\delta_1}/\sqrt{rr_k}$, for all $k=1,...,K$. \\
(ii) (TIPUP related):
There exist some constants ${\delta_1}\in [\delta_0, 1]$, $c_2>0$ and ${\delta_2}\ge 0$ such that with probability approaching one (as $T\rightarrow \infty$), $\lam_k^{*2} \ge c_2 {\sigma^2} d^{1-\delta_1}r_k^{-1} r_{-k}^{-\delta_2}$ for all $k=1,...,K$.
\end{assumption}

Assumption \ref{asmp:strength} is similar to the signal strength condition of \citet{lam2012}, and the pervasive condition on the factor loadings (e.g., \citet{stock2002} and \citet{bai2003}). 
It is more general than Assumption \ref{asmp:factor} in the sense that it allows $r_1,...,r_K$ to diverge and the latent process $\cF_t$ does not have to be weakly stationary.

We take $\delta_0,\delta_1$ as measures of the strength of the signal process $\cM_t$. They roughly indicate how much information is contained in the signals compared with the amount of noise, with respect to the dimensions and ranks, $d,r$ and $r_k$. In this sense, they reflect the signal to noise ratio.
When $\delta_0=\delta_1=0$, the factors are called strong factors; otherwise, the factors are called weak factors.

\begin{rmk}[Signal Strength and the index $\delta_0$] \label{rmk:delta0}
We note that $\text{trace}(\Theta_{k,0}) = \text{trace}(\Theta_{k,0}^*) = \sum_{t=1}^T\|\text{vec}(\cM_t)\|_2^2/T$, and that $\text{rank}(\Theta_{k,0})=r$ and $\text{rank}(\Theta_{k,0}^*)=r_k$ when the data is in general position, where $\Theta_{k,0}$ is treated as a $d\times d$ matrix.
Thus, if $\sum_{t=1}^T\|\text{vec}(\cM_t)\|_2^2/(\sigma^2 d\, T) \asymp d^{-\delta_0}$ is the signal-to-noise ratio, then the condition $\|\Theta_{k,0}\|_{\text{op}} \asymp \sigma^2 d^{1-\delta_0}/r$ holds when $r$ is the order of the effective rank of $\Theta_{k,0}$ and the condition $\|\Theta_{k,0}^*\|_{\text{S}} \asymp \sigma^2 d^{1-\delta_0}/r_k$ holds when $r_k$ is the order of the effective rank of $\Theta_{k,0}^*$.
Because the signal $\cM_t$ has $d$ elements at each $t$, the assumption $\sum_{t=1}^T\|\text{vec}(\cM_t)\|_2^2/(\sigma^2 dT) \asymp d^{-\delta_0}$ says that the squared ratio of the elements and the noise level is $d^{-\delta_0}$ averaged over time and space.  Thus, the factor is called strong when $\delta_0=0$. In view of \eqref{eq:tensorfactor} and \eqref{eq:tucker-model}, $\cM_t=\cF_t\times_{k=1}^K A_k$, so that we may have weaker factor with $\delta_0>0$ when the loading matrices $A_k$ are sparse or have some relatively small singular components.
We note that by Cauchy-Schwarz, the signal-to-noise ratio conditions also imply $(1-h/T)^2\|\Theta_{k,h}\|_{\rm HS}^2\le\|\Theta_{k,0}\|_{\rm HS}^2\lesssim r(\sigma^2d^{1-\delta_0}/r)^2$ and $(1-h/T)^2\|\Theta_{k,h}^*\|_{\rm HS}^2\le\|\Theta_{k,0}^*\|_{\rm HS}^2\lesssim r_k(\sigma^2d^{1-\delta_0}/r_k)^2$ respectively.
\end{rmk}

\begin{rmk}[Assumption~\ref{asmp:strength}(i) and the role of $\delta_1$] \label{rmk:asmp:3i}
In fact, for TOPUP, Assumption \ref{asmp:strength}(i) holds when
(a) $\|{{\overline \E}}{[\text{TOPUP}_k]}\|_{\rm HS}^2 = \sum_{h=1}^{h_0} \|\Theta_{k,h}\|_{\rm HS}^2
\asymp h_0 \sigma^4d^{2(1-\delta_1)}/r$ and (b) all the nonzero singular values of
${\overline \E}[{\text{TOPUP}_k}]$ are of the same order.
Because $\|\Theta_{k,h}\|_{\rm HS}^2\lesssim \sigma^4d^{2(1-\delta_0)}/r$ by the condition on the signal-to-noise ratio, we must have $\delta_1\ge \delta_0$, and $d^{\delta_0-\delta_1}$ can be viewed as the order of average auto-correlation over lags $h=1,\ldots,h_0$. For
$k=1$ and $K=2$, the factor process in the canonical form is $\cF_t^{\cano} = U_1^\top \cM_t U_2 = (f_{i,j,t}^{\cano})_{r_1\times r_2}$, and $\phi^{\cano}_{i_1,j_1,i_2,j_2,h} = \sum_{t=h+1}^T f_{i_1,j_1,t-h}^{\cano} f_{i_2,j_2,t}^{\cano} /(T-h)$ is the time average cross product between the factor fibers $f_{i_1,j_1,1:T}^{\cano}$ and $f_{i_2,j_2,1:T}^{\cano}$. Thus, the first condition (a) means $\sum_{h=1}^{h_0}\|\Theta_{1,h}\|_{\rm HS}^2 = \sum_{h=1}^{h_0}\|\Phi^{\cano}_{1,h}\|_{\rm HS}^2 = \sum_{i_1,j_1,i_2,j_2,h}\big(\phi^{\cano}_{i_1,j_1,i_2,j_2,h}\big)^2\asymp h_0 \sigma^4d^{2(1-\delta_1)}/r$.
\end{rmk} 

\begin{rmk}[Assumption~\ref{asmp:strength}(ii), the role of $\delta_2$ and signal cancellation]
The points parallel to those in Remark~\ref{rmk:asmp:3i} are applicable to TIPUP, but with one caveat: Beyond the average auto-correlation, an additional discount $r_{-k}^{-\delta_2}\le 1$ is needed to take into account the impact of possible signal cancellation with TIPUP and its iteration. For $k=1$ and $K=2$,
$\|\Theta_{1,h}^*\|_{\rm HS}^2
= \|\Phi^{*\cano}_{1,h}\|_{\rm HS}^2
=\sum_{i_1,i_2}\big(\sum_{j=1}^{r_2} \phi^{\cano}_{i_1,j,i_2,j,h}\big)^2$,
and the summation inside the square is subject to signal cancellation for $h>0$  since the auto-cross-moment $\phi^{\cano}_{i_1,j,i_2,j,h}$ can have different signs.
The additional parameter $\delta_2$ measures the severity of signal cancellation in the TIPUP related procedures. For example, when the majority of $\phi^{\cano}_{i_1,j,i_2,j,h}$ are of the same sign for most of $(i_1,i_2,h)$,
it would be reasonable to assume $\delta_2=0$. When $\phi^{\cano}_{i_1,j,i_2,j,h}$ behave like independent mean zero variables, {$\delta_2$} would be close to $0.5$. And $\delta_2=\infty$ when all the signals cancel out by the summation $\phi^{\cano}_{i_1,j,i_2,j,h}$ over $j$. In the case of fixed $r_k$, the convergence rate depends on whether $\delta_2=\infty$ (severe signal cancellation) or not.
\end{rmk}

\begin{rmk}[The role of $h_0$] \label{rmk:h0}
The selection of $h_0$ is a relative minor problem in practice though very complex to analyze. Theoretically it suffices to use an $h_0$ with $\lambda_k$ of the right order,
so that choosing a somewhat large $h_0$ would not harm the convergence rate for the proposed methods.
In practice a small $h_0$ (less than 3) is often sufficient. The impact of the choice of $h_0$ on the signal and noise depends on the autocorrelation of the factor process, as well as the loading matrices. For example, if the factor process is of very short memory (e.g. an MA(1) process), including any lag $h>1$ only introduces
noise to TOPUP$_k$ in \eqref{eq:topup:def}
and TIPUP$_k$ in \eqref{eq:tipup:def} without enhancing the signal. On the other hand, including an extra lag is the most simple and effective way to prevent signal cancellation with iTIPUP, as discussed in the previous
remark. Increasing $h_0$ includes more non-negative terms in the
signal strength
$\sum_{i_1,i_2,h}\big(\sum_{j=1}^{r_2}\phi^{\cano}_{i_1,j,i_2,j,h}\big)^2$, hence potentially reducing the chance of severe signal cancellation. The simulation results presented in the supplementary material provide some empirical behavior of choosing different $h_0$. While the choice of $h_0$ will affect the assumptions, {in practice} we may compare the patterns of estimated singular values under different lag values $h_0$ in iTOPUP and iTIPUP to evaluate the benefit of taking a larger $h_0$.
See also the simulation study.
\end{rmk}


We describe below the convergence rate of iTOPUP in terms of $d_k$, $r_k$ and $T$ under Assumption~\ref{asmp:strength}(i) when the dimensions of the core factor $r_1,...,r_K$ are allowed to diverge.




\begin{cor} \label{cor:itopup2}
Suppose Assumptions \ref{asmp:error} and \ref{asmp:strength}(i) hold. Let ${h_0}\le T/4$, $d_{-k}^*=\sum_{j\neq k} d_jr_j$ and $r=\Pi_{k=1}^K r_k$. 
Suppose that for a sufficiently large $C_0$ not depending on $\{\sigma, d_k,r_k, k\le K\}$,
\begin{align} \label{eq:sample:itopup}
T \ge C_0 \max_{1\le k\le K}\left( d^{2\delta_1-\delta_0}r_kr_{-k}^2 + d^{2\delta_1}r_k^2r_{-k}/d_k \right).
\end{align}
Then, after $J=O(\log d)$ 
iterations, we have the following upper bounds for iTOPUP,
\bel{bound:itopup:corb}
&& \max_{1\le k\le K}
\|\widehat P_{k}^{(J)}-P_k\|_{\rm S}
\\ \notag &=&O_{\P}(1)
\max_{1\le k\le K}\Bigg( \frac{d_k^{1/2}
r_k^{1/2}(1+r^{1/2}/d^{(1-\delta_0)/2})+r^{3/2}r_k^{-1/2}(1+r_k^{1/2}/d^{(1-\delta_0)/2})}
{T^{1/2}d^{1/2+\delta_0/2-\delta_1}} \\
\cr &&\qquad\qquad\qquad+ \left(
\frac{d_k^{1/2}r^{3/2}
(1+r_k^{1/2}/d^{(1-\delta_0)/2})}
{T^{1/2}d^{1/2+\delta_0/2-\delta_1} r_k}
\right)^2 \Bigg). \notag
\eel
Moreover, \eqref{bound:itopup:corb} holds after at most $J=O(\log r)$ iterations, if any one of the following three conditions holds in addition to \eqref{eq:sample:itopup}: (i) $d_k$ ($k=1,...,K$) are of the same order, (ii) $\lam_k$ ($k=1,...,K$) are of the same order, (iii) $(\lam_k)^{-2}\sqrt{d_k}$ ($k=1,...,K$) are of the same order.
\end{cor}

Note that the second part of Corollary~\ref{cor:itopup2} says that when the condition is right, iTOPUP algorithm only
needs a small number of iterations to converge, as $O(\log r)$ is typically very small. The noise level $\sigma$ does not appear directly in the rate since it is incorporated in the
signal to noise ratio in the tensor form in Assumption \ref{asmp:strength}.
In Corollary \ref{cor:itopup2}, we show that as long as the sample size $T$ satisfies \eqref{eq:sample:itopup}, the iTOPUP achieves consistent estimation under proper regularity conditions. To digest the condition, we notice that \eqref{eq:sample:itopup} becomes $T\ge C_0 \max_k (r_k r_{-k}^2) $ when the growth rate of $r_k$ is much slower than $d_k$ and the factors are strong with $\delta_0=\delta_1=0$.


The advantage of using index $\delta_0,\delta_1$ is to link the convergence rates of the estimated factor loading space explicitly to the strength of factors. It is clear that the stronger the factors are, the faster the convergence rate is. Equivalently, the stronger the factors are, the smaller the sample size is required.


When the ranks $r_k$ ($k=1,...,K$) also {diverge} and there is no severe signal cancellation in iTIPUP, we have the following convergence rate for iTIPUP under Assumption \ref{asmp:strength}(ii).


\begin{cor} \label{cor:itipup2}
Suppose Assumptions \ref{asmp:error} and \ref{asmp:strength}(ii) hold. Let ${h_0}\le T/4$ and $d_{-k}^*=\sum_{j\neq k} d_jr_j$.
Suppose that for a sufficiently large $C_0$ not depending on $\{\sigma, d_k,r_k, k\le K\}$,
\begin{align}\label{eq:sample:itipup}
T\ge C_0 \max_{1\le k\le K}\left( \frac{(d_kr_k+d^{\delta_0}r_k^2)r_{-k}^{2\delta_2}r^{2\delta_2}}
{d^{1+3\delta_0-4\delta_1}\min_{1\le k\le K} r_k^{2\delta_2}}
+ \frac{d^*_{-k}r_kr_{-k}^{2\delta_2}} {d^{1+\delta_0-2\delta_1}}
 \left(1+\frac{r}{d^{1-\delta_0}}\right)
\right).
\end{align}
Then, after at most $J=O(\log d)$ iterations, 
the iTIPUP estimator satisfies
\begin{align}\label{bound:itipup:corb}
\max_{1\le k\le K}\|\widehat P_{k}^{(J)}-P_k\|_{\rm S}
= O_{\P}(1) \max_{1\le k\le K}
\left(\frac{d_k^{1/2}
r_k^{1/2}r_{-k}^{\delta_2}(1+r^{1/2}/d^{(1-\delta_0)/2})}
{T^{1/2}d^{1/2+\delta_0/2-\delta_1}}\right).
\end{align}
Moreover, \eqref{bound:itipup:corb} holds after at most $J=O(\log r)$ iterations, if any one of the following three conditions holds in addition to condition \eqref{eq:sample:itipup}, (i) $d_k$ ($k=1,...,K$) are of the same order, (ii) $\lam_k^{*}$ ($k=1,...,K$) are of the same order, (iii) $(\lam_k^{*})^{-2}\sqrt{d_k}$ ($k=1,...,K$) are of the same order.
\end{cor}

When the average auto-correlation is of unit order and the signal cancellation for TIPUP has no impact on the order of the signal ($\delta_0=\delta_1$ and $\delta_2=0$ respectively),
Corollary \ref{cor:itipup2} requires the sampling rate
$T\gtrsim h_0+(d_kr_k+d^{\delta_0}r_k^2+d_{-k}^*r_k(1+r/d^{1-\delta_0}))/d^{1-\delta_0}$
and provides the convergence rate $(r_kd_k)^{1/2}(1+r/d^{1-\delta_0})^{1/2}/(Td^{1-\delta_0})^{1/2}$.
For examples, $T\ge 4h_0+C_1$ gives the rate $(r_kd_k)^{1/2}/(Td^{1-\delta_0})^{1/2}$
when $\delta_0\le (K-2)/(2K)$ and $r_k^2\lesssim d_k\asymp d^{1/K}\,\forall k$, and the sample size requirement can be written as
$T\gtrsim h_0+d^{\delta_0}r_k^2/d^{1-\delta_0}$ when $r_k^2\asymp r^{2/K}\lesssim d_k\asymp d^{1/K}\,\forall k$ regardless of $\delta_0\in [1/K,1]$. Thus, the side condition involving $R^{*\add}$
in the second part of \eqref{condition1}
is absorbed into the other components of \eqref{condition1}.


Corollary \ref{cor:itopup2} and Corollary \ref{cor:itipup2} offer comparison of the iTOPUP and iTIPUP when the ranks diverge from two perspectives: sample size requirements and convergence rates. The lower bounds on $T$ in \eqref{eq:sample:itopup} in Corollary \ref{cor:itopup2} and
\eqref{eq:sample:itipup} in Corollary \ref{cor:itipup2} provide the sample complexity of the iTOPUP and iTIPUP respectively.
In the case that the growth rate of $r_k$ is much slower than $d_k$ and the factors are strong with $\delta_0=\delta_1=0$, the required sample size of the iTIPUP reduces to $T\ge 4h_0+C_0\max_{j,k} \left( r_kr_{-k}^{2\delta_2} r_{-j}^{2\delta_2}/d_{-k}+ r_kr_{-k}^{2\delta_2} r_j/d_{-j}\right)$,  
where $r_{-k}=r/r_k$ and $d_{-k}=d/d_k$. By comparing with the comment after Corollary \ref{cor:itopup2}, where the sample size requirement for the iTOPUP is
$T\ge C_0\max_k(r_kr_{-k}^2)$ when $\delta_0=\delta_1=0$, it can be seen that the sample complexity for the iTIPUP is smaller, if $\delta_2$ is a small constant.
From the perspective of convergence rate, let us compare
\eqref{bound:itopup:corb} in Corollary \ref{cor:itopup2} and
\eqref{bound:itipup:corb} in Corollary \ref{cor:itipup2}. When ranks diverge, iTIPUP is slower than iTOPUP if $\delta_2>3/2$, or $\{0\le\delta\le 3/2, d_k\gtrsim rr_{-k}^{2-2\delta_2}, d_{-k}\gtrsim r r_{-k}^{3-2\delta_2}\}$,
and faster if $d_k\lesssim r_kr_{-k}^{2-2\delta_2}$, no matter how strong the factor is or what values $\delta_0,\delta_1$ take.
As expected, the convergence rate is slower in the presence of weak factors. See the simulation for more empirical evidence.

Similar to Corollaries \ref{cor:itopup2} and \ref{cor:itipup2}, we have the following rate for TIPUP-iTOPUP.

\begin{cor} \label{cor:tipup-itopup2}
Suppose Assumptions \ref{asmp:error} and \ref{asmp:strength} hold.
Let ${h_0}\le T/4$ and $d_{-k}^*=\sum_{j\neq k} d_jr_j$.
Suppose that for a sufficiently large $C_0$ not depending on $\{\sigma, d_k,r_k, k\le K\}$,
\begin{align} \label{eq:sample:tipup-itopup}
T \ge C_0
\max_{1\le k\le K}\left(d^{2\delta_1-\delta_0}r_k\bigg(\frac{r_{-k}^{2\delta_2}}{d_{-k}} + \frac{r_{-k}^3}{d_{-k}}\bigg)
+ \frac{d^{2\delta_1}r_k^2}{d_k}\bigg(\frac{r_{-k}^{2\delta_2}}{d_{-k}}
+ \frac{r_{-k}^3}{d_{-k}^2} \bigg) + \frac{d_{-k}^*\sqrt{rr_k}}{d^{1-\delta_1}}
\right).
\end{align}
Then, after at most $J=O(\log d)$ iterations, 
the TIPUP-iTOPUP estimator satisfies \eqref{bound:itopup:corb}.
Moreover, the above error bound holds after at most $J=O(\log r)$ iterations, if any one of the following three conditions holds in addition to condition \eqref{eq:sample:tipup-itopup}, (i) $d_k$ ($k=1,...,K$) are of the same order, (ii) $\lam_k$ ($k=1,...,K$) are of the same order, (iii) $(\lam_k)^{-2}\sqrt{d_k}$ ($k=1,...,K$) are of the same order.
\end{cor}

Compared with Corollary \ref{cor:itopup2}, Corollary  \ref{cor:tipup-itopup2} provides the same error bound for smaller $T$ (possibly with bounded $T\gtrsim h_0$) when $r_{-k}^{2\delta_2} \lesssim r_{-k}d_{-k}$. The side condition involving $R^{\add}$ in the second part of \eqref{condition1n*},
corresponding to the last component of \eqref{eq:sample:tipup-itopup} involving $d_{-k}^*$, is absorbed into the other components of \eqref{condition1n*}
when $r_k^{1/2}\le d^{\delta_1-\delta_0}\big(r_{-k}^{2\delta_2-1} + r_{-k}^2\big)\, \forall k\le K$.



\subsection{Comparisons}


\subsubsection{Comparison between the non-iterative procedures and iterative procedures}
Theorems \ref{thm:itopup} and \ref{thm:itipup}
show that the convergence rates of the non-iterative estimators TOPUP and TIPUP can be improved by their iterative counterparts.
Particularly, when the dimensions $r_k$ for the factor process are fixed and the respective signal strength
conditions are fulfilled, the proposed iTOPUP and iTIPUP just need one-iteration to achieve the much sharper ideal
rate $R^{\ideal}$ in \eqref{iTOPUP-ideal_k} and $R^{*\ideal}$ in \eqref{iTIPUP-ideal_k}, compared with the rate \eqref{bound:topup0} of TOPUP and \eqref{TIPUP_k-bd} of TIPUP derived in \citet{chen2022factor}, respectively. The improvement is achieved through replacing the much larger $d_{-k}$ by $r_{-k}$, via orthogonal projection.
When the factors are strong with $\delta_0=\delta_1=0$ and the factor dimensions are fixed,
the non-iterative TOPUP-based
estimators of \citet{lam2011} for the vector factor model, \citet{wang2019} for the matrix factor and \citet{chen2022factor} for tensor factor models all have the same $O_{\P}(T^{-1/2})$ convergence rate for
estimating the loading space. In comparison, the convergence rate $O_{\P}(T^{-1/2}d_{-k}^{-1/2})$ of both
iterative estimators, iTOPUP and iTIPUP (when there is no severe signal cancellation, with bounded $\delta_2$), is much sharper. Intuitively, when the signal is strong, the orthogonal projection
operation helps to consolidate signals while potentially averaging out the noises, when the projection
reduces {the dimension of} the mode-$k$ unfolded matrix from {$d_k\times d_{-k}$}
for the tensor $\cX_t$ {to $d_k\times r_{-k}$}
for the projected tensor $\cZ_t$,
resulting in the improvement by a factor of $d_{-k}^{-1/2}$
in the convergence rate.

When $r_k$ are allowed to diverge, the
iTOTUP and iTIPUP algorithms converge after at most $O(\log(d))$ iterations
to achieve the ideal rate according to Theorems \ref{thm:itopup} and \ref{thm:itipup}.
The number of iterations needed can be as few as $O(\log(r))$ when the condition is right.

\subsubsection{Comparison between iTIPUP and iTOPUP}
The inner product operation in \eqref{eq:tipup:def} for TIPUP-related procedures enjoys significant amount of noise cancellation comparing to the outer product operation in \eqref{eq:topup:def} for TOPUP-related procedures.
Compared with iTOPUP, the benefit of noise cancellation of the iTIPUP procedure is still visible through the reduction of $r_{-k}$ in \eqref{iTOPUP-ideal_k} to $\sqrt{r_{-k}}$ in \eqref{iTIPUP-ideal_k} in the ideal rates.
However, this post-iteration benefit is much less pronounced compared with the reduction of $d_{-k}$ in \eqref{TOPUP_k-bd} for TOPUP
to $\sqrt{d_{-k}}$ in \eqref{TIPUP_k-bd} for TIPUP in the non-iterative rates.
Meanwhile, the potential {for} signal cancellation in the TIPUP related schemes persists as $\lam^*_k$ and $\lam_k$ are unchanged between the initial and ideal rates. We note that the signal strength can be viewed as $\lam_k$ and $\lam^*_k$ in Theorems \ref{thm:itopup} and \ref{thm:itipup} respectively for TOPUP/iTOPUP and TIPUP/iTIPUP, and that severe signal cancellation can be expressed as $\lam_k^*\ll\lam_k$. When $r_{-k}$ are allowed to diverge to infinity, the impact of signal cancellation is expressed in terms of $\delta_2$ in Assumption \ref{asmp:strength}:
The iTOPUP has a faster rate than the iTIPUP when $\delta_2>3/2$, or $\{0\le\delta\le 3/2, d_k\gtrsim rr_{-k}^{2-2\delta_2}, d_{-k}\gtrsim r r_{-k}^{3-2\delta_2}\}$, and slower rate when $d_k\lesssim r_kr_{-k}^{2-2\delta_2}$, in view of Corollary \ref{cor:itopup2} and \ref{cor:itipup2}. In Corollaries \ref{cor:itopup1} and \ref{cor:itipup1}, iTOPUP and iTIPUP have the same convergence rate because Corollary \ref{cor:itipup1} assumes that signal cancellation does not change convergence rate.


Our results seem to suggest that the mixed TIPUP-iTOPUP procedure would strike a good balance between the benefit of noise cancellation (e.g. smaller $T$ for consistency) and the potential danger of signal cancellation (e.g. $\lam^*_k\ll\lam_k$) for the following four reasons:
(1) The benefit of noise cancellation is much larger in the initialization, in term of $d_{-k}$, in view of the rates $R_k^{(0)}$ in \eqref{TOPUP_k-bd} and $R^{*(0)}$ in \eqref{TIPUP_k-bd}.
(2) The first part of condition  \eqref{condition1n*} for TIPUP-iTOPUP is weaker than the first part of condition \eqref{condition1} for TIPUP-iTIPUP.
(3) The signal strength $\lam_k$ of the stronger TOPUP form is retained in the rate $R^{\ideal}$ after iTOPUP iteration. (4) As we will prove in Section~\ref{section:computation}, the sample size requirement for the TIPUP initialization is optimal in the sense that it matches a computational lower bound under suitable conditions.
Our simulation results support this recommendation, especially for relatively small $r_{-k}$.
Of course if the sample size qualitatively justifies the condition $C_{1}^{\topup}{R}^{(0)}\le (1-\rho)/4$ in \eqref{condition1n} and/or if a possible signal cancellation is a significant concern, the TOPUP initiation should be used.

\subsubsection{Comparison with HOOI}
The signal to noise ratio (SNR) condition, or equivalently the sample size requirement, is mainly used to ensure that the initial estimator has sufficiently small estimation error.
Thus, the performance of iterative procedures is measured by both the SNR requirement
and the error rate achieved.
Consider fixed $h_0$ in the fixed rank case with $K=3$ and $d_{\max}\asymp d^{1/K}$. In the fixed signal model where $\cM_t =\cM$ is fixed and deterministic in \eqref{eq:tensorfactor}, applying HOOI to the average of $\cX_t$ would require SNR $\lambda(T^{1/2}/\sigma)\ge C_0 d^{1/4}$ to achieve the loss of the order $(\sigma/T^{1/2}) d_k^{1/2}/\lam$ according to \citet{zhang2018tensor}, where $\sigma/T^{1/2}$ is viewed as the noise level for HOOI as it is the standard deviation of each element of the average tensor. In terms of the auto-crossproducts, taking the average over $\cX_t$ roughly amounts to taking the average of all $T(T-1)/2$ lagged products between $\cX_{t-h}$ and $\cX_t$, $1\le t-h<t\le T$. However, in the tensor factor model \eqref{eq:tensorfactor} where the signal part is random and serial correlated, the average is taken only over $T-h$ lagged products for each $h$. Thus, while the rate of the average of the signal-by-noise crossproducts in the factor model is heuristically expected to match that of HOOI at noise level $\sigma/T^{1/2}$, the rate of the average of the noise-by-noise crossproducts in the factor model is expected to only match that of HOOI with noise level $\sigma/T^{1/4}$. In Corollary \ref{cor:itipup1}, the contribution of the noise-by-noise crossproducts dominates the initial estimation error as the SNR requirement $\lambda(T^{1/4}/\sigma)\ge C_0 d^{1/4}$ in \eqref{eq:signal:itipup} matches that of HOOI with noise level $\sigma/T^{1/4}$; at the same time the contribution of the signal-by-noise crossproducts dominates the estimation error after iteration as the rate $(\sigma/T^{1/2}) d_k^{1/2}/\lam$ in \eqref{bound:itipup:cora} matches that of HOOI with noise level $\sigma/T^{1/2}$. Thus, if there is no severe signal cancellation, the signal to noise ratio requirement and convergence rate for iTIPUP and TIPIP-iTOPUP in the factor model are both comparable with those of HOOI in the simpler fixed signal setting, but the rate match is achieved in very different and subtle ways. We prove that this insight is intrinsic as the rates in \eqref{eq:signal:itipup} and \eqref{bound:itipup:cora} are both optimal according to the computational and statistical lower bounds in the following subsection.

\subsection{Computational and statistical lower bounds} \label{section:computation}
In this subsection, we focus on the typical factor model setting that the condition numbers of $A_k^\top A_k$ are bounded. 
We shall prove that under the computational hardness assumption, the signal to noise ratio condition \eqref{eq:signal:itipup} imposed on iTIPUP (also TIPUP-iTOPUP) in Corollary \ref{cor:itipup1} is unavoidable for computationally feasible estimators to be consistent. To be specific, we show that, if the signal to noise ratio condition is violated, then any computationally efficient and
consistent estimator of the loading spaces leads to a computationally efficient
and statistically consistent test for the Hypergraphic Planted Clique Detection problem in a regime where it is believed to be computationally intractable.
In addition, we establish a statistical lower bound on the minimax risk of the estimators.

{\it Hypergraphic Planted Clique.} An $m$-hypergraph $G=(V(G),E(G))$ is a natural extension of regular graph, where $V(G)=[N]$ and each hyper-edge is represented by an unordered
group of $m$ different vertices $i_j\in V(G)$ ($j=1,...,m$), denoted as  $e=(i_1,...,i_m)\in E(G)$. Given a $m$-hypergraph 
its adjacency tensor $\cA\in\{0,1\}^{N\times N\times\cdots\times N}$ is defined as
\begin{equation*}
\cA_{i_1,...,i_m}=
\begin{cases}
1, &  \text{if } e=(i_1,...,i_m)\in E(G);\\
0, &  \text{otherwise}.
\end{cases}
\end{equation*}
We denote by
$\cG_m(N, 1/2)$ the Erd\H{o}s–R\'enyi $m$-hypergraph on $N$ vertices where each hyper-edge $e$ is drawn independently with probability $1/2$, by $\cC=\cC(N,\kappa)$ a random clique of size $\kappa$ where the $\kappa$ members are uniformly sampled from $[N]$ and $E(\cC)$ is composed of all $e=(i_1,\ldots,i_m)$ with $i_j\in \cC$,
and by $\cG_m(N, 1/2, \kappa)$ the random graph generated by first sampling independently $\cG_m(N, 1/2)$ and $\cC=\cC(N,\kappa)$
and then adding all the edges in $E(\cC)$ to the set of edges in $\cG_m(N,1/2)$. The Hypergraphic Planted
Clique (HPC) detection problem of parameter $(N, \kappa, m)$ refers to testing the following hypotheses:
\begin{equation}\label{hpc}
H_0^G: \cA\sim \cG_m(N, 1/2) \quad \text{v.s.}\quad H_1^G: \cA\sim \cG_m(N, 1/2,\kappa).
\end{equation}

If $m=2$, the above HPC detection becomes the traditional planted clique (PC) detection problem. When $\kappa \ge c \sqrt{N}$, many computationally efficient algorithms have been developed for PC detection; see, \cite{alon1998finding, feige2000finding, feige2010finding, ames2011nuclear, dekel2014finding, deshpande2015finding, feldman2017statistical}, among others.
However, it has been widely conjectured that when $\kappa=o(\sqrt{N})$, the PC detection problem cannot be solved in randomized polynomial time, which is referred to as the hardness conjecture. Computational lower bounds in several statistical problems have been established by assuming the hardness conjecture of PC detection, including sparse PCA \citep{berthet2013complexity,berthet2013optimal, wang2016statistical}, sparse CCA \citep{gao2017sparse}, submatrix detection \citep{ma2015computational, cai2017computational}, community detection \citep{hajek2015computational}, etc.

Recently, motivated by tensor data analysis, hardness conjecture for HPC detection problem has been proposed; see, for example, \cite{zhang2018tensor, brennan2020reducibility, luo2022tensor, luo2020open, pananjady2022isotonic}. Similar to the PC detection, they hypothesized that when $\kappa = O(N^{1/2-\delta})$ with $\delta>0$, the HPC detection problem \eqref{hpc} cannot be solved by any randomized polynomial-time algorithm. Formally, the conjectured hardness of 
the HPC detection problem can be stated as follows.

\begin{hypothesis}[HPC detection] \label{hypos:hpc}
Consider the HPC detection problem \eqref{hpc} and suppose $m\ge 2$ is a fixed integer. If
\begin{align}
\limsup_{N\to\infty} \frac{\log\kappa}{\log N} \le \frac12 -\delta, \quad\text{for any }\delta>0,
\end{align}
for any sequence of polynomial-time tests $\{\psi\}_N:\cA\to \{0,1\}$,
\begin{align*}
\limsup_{N\to\infty} \big(\P_{H_0^G}(\psi(\cA)=1) + \P_{H_1^G}(\psi(\cA)=0) \big) >1/2 .
\end{align*}
\end{hypothesis}
Evidence supporting this hypothesis has been provided in \cite{zhang2018tensor,luo2022tensor}. This version of the hypothesis is similar to the one in \cite{berthet2013complexity, ma2015computational, gao2017sparse} for the PC detection problem.

For simplicity, we especially consider the factor model \eqref{eq:tucker-model} with each individual series of $\cF_t$ being mean 0 and independent,
\begin{align}\label{eq:tenfm1}
\cX_t=\lambda \cF_t\times_1 U_1\times_2...\times_K U_K+\cE_t,
\end{align}
where $U_k\in\R^{d_k\times r_k}$, $U_k^\top U_k = I$ for $1\le k\le K$, and $0<c_1\le \sigma_{\min}(\E \vec1(\cF_t) \vec1^\top(\cF_t))\le \sigma_{\max}(\E \vec1(\cF_t) \vec1^\top(\cF_t))\le c_2<\infty$. The probability space we consider in this section is
\begin{align}\label{eq:tenfm1sp}
&\sP(T,d_1,...,d_K,\lambda) =\Big\{\cX_1,...,\cX_T: \cX_t \text{ has form }\eqref{eq:tenfm1} \text{ with independent series } \cF_{t,i_1,...,i_K}, \\
&\qquad\qquad \frac{1}{T-1}\sum_{t=2}^T \E \cF_{t,i_1,...,i_K} \cF_{t-1,i_1,...,i_K}=c_0>0, \text{ and } \{\cF_t\}_{t=1}^T \text{ independent of } \{\cE_t\}_{t=1}^T,   \notag\\
&\qquad\qquad \cE_{t,j_1,...,j_K}\overset{i.i.d.}{\sim} N(0,\sigma^2), \text{ for all } 1\le t\le T, 1\le i_k\le r_k, 1\le j_k\le d_k, 1\le k\le K \Big\}. \notag
\end{align}
The computational lower bound over $\sP(T,d_1,...,d_K,\lambda)$ is then presented as below. In the case of $K=1$, the problem reduces to PCA of the spike covariance matrix. For general $K$, the auto-covariance tensor is of order $2K$.

\begin{theorem}\label{lowerbdd}
Suppose that Hypothesis \ref{hypos:hpc} holds for some $0<\delta<1/2$ and $d^{1/K}\asymp d_k\ge T$ and $r_k$ is fixed for all $1\le k\le K$. If, for some $\vartheta>0$,
\begin{align}\label{eq1:lowerbdd}
\liminf_{T\to \infty} \frac{\sigma^2d^{1/2-\vartheta}}{T^{1/2}\lambda^2} >0,
\end{align}
then for any randomized polynomial-time estimators $\widehat U_k=\widehat U_k(\cX_1,...,\cX_T)$, $1\le k\le K$,
\begin{align}\label{eq2:lowerbdd}
\liminf_{T\to \infty} \sup_{\cX_1,...,\cX_T\in \sP(T,d_1,...,d_K,\lambda) } \P \left( \min_{1\le k\le K} \| \widehat P_k -P_k\|_{\rm S}^2 > \frac13  \right)>\frac14 ,
\end{align}
where $\widehat P_k=\widehat U_k \widehat U_k^\top$ and $P_k= U_k U_k^\top$.
\end{theorem}

Comparing \eqref{eq1:lowerbdd} with \eqref{eq:signal:itipup}, we see that the signal to noise ratio condition \eqref{eq:signal:itipup}
cannot be improved upon by a factor of $d^\vartheta$ with polynomial time complexity for any $\vartheta>0$. The condition $d_k \ge T$ is a technical requirement to use the theoretical tools in \cite{ma2015computational} and \cite{brennan2020reducibility} for the reduction from HPC.

\begin{rmk}
Theorem \ref{lowerbdd} illustrates the computational hardness for factor loading spaces estimation under the typical factor model setting that the condition numbers of $A_k^\top A_k$ are bounded and ranks $r_k$ are fixed, and suggests the use of TIPUP initialization with proper fixed $h_0$ as it attains the computational lower bound under the typical factor model setting.
\end{rmk}

\begin{rmk}
In the general $r_k$ case, the optimal signal-to-noise ratio requirement falls between $\lambda^{2}/\sigma^2\gtrsim \max_{1\le k\le K} \sqrt{d}/(\sqrt{T}r_{-k})$ (by Theorem \ref{lowerbdd}) and $\lambda^{2}/\sigma^2\gtrsim \max_{1\le k\le K} d_{k}/(Tr_{-k})$ (by Theorem \ref{thm:stat_lowerbdd} below). It seems possible to unfold tensor into matrix and use the results of \cite{ma2015computational} to narrow the gap.
Anyways, a complete solution to this challenging problem is beyond the scope of our paper.
\end{rmk}

Next, we establish the statistical lower bound for the tensor factor model problem. Again, we consider the probability space \eqref{eq:tenfm1sp}.

\begin{theorem}\label{thm:stat_lowerbdd}
Suppose $\lambda>0$ and $d_k\to\infty$ as $T\to\infty$ for all $1\le k\le K$. Then there exists a universal constant $c>0$ such that for $T$ sufficiently large,
\begin{align}\label{eq1:stat_lowerbdd}
\inf_{\widehat U_k}  \sup_{\cX_1,...,\cX_T\in \sP(T,d_1,...,d_K,\lambda) } \E  \|\widehat P_k -P_k \|_{\rm S} \ge
c\,\min\left(1, (\sigma^2+\sigma\lambda)\sqrt{d_k}\big/(\lambda^2\sqrt{Tr_{-k}})\right)
\end{align}
for all $1\le k\le K$, where $\widehat P_k=\widehat U_k \widehat U_k^\top$
and $P_k= U_k U_k^\top$.
\end{theorem}

\begin{rmk}
The statistical lower bound for high-dimensional tensor factor models is provided in Theorem \ref{thm:stat_lowerbdd}. This bound directly matches the upper bounds in Corollary \ref{cor:itipup1} and also matches the bounds in Corollary \ref{cor:itopup1} when $d_k\gtrsim rr_{-k}^2$ and $\lambda^2/\sigma^2\gtrsim d_kr_{-k}^5/T+d_k^{1/2}r_{-k}^{3/2}/T^{1/2}$. These results demonstrate that the rates obtained by our proposed iterative procedures are minimax-optimal. Moreover, Theorem \ref{thm:stat_lowerbdd} reveals a different effect of the ranks $r_k$ ($k=1,...,K$) compared to tensor Tucker decomposition \citep{zhang2018tensor}, further confirming the distinct nature of tensor factor models from low-rank matrix/tensor problems.
\end{rmk}

\section{A matrix perturbation bound}

In Lemma \ref{lm-pertubation} below,
we provide an improvement of the matrix perturbation bound of \cite{wedin1972}. The lemma, proved in Appendix \ref{section:lemmas} in the supplementary material and used to prove Proposition \ref{prop:topup}, is of independent interest due to wide applications of the \cite{wedin1972} bound.

\begin{lemma}\label{lm-pertubation}
Let $r \le d_1\wedge d_2$, $M$ be a $d_1\times d_2$ matrix,
$U$ and $V$ be, respectively, the left and right singular matrices associated
with the $r$ largest singular values of $M$,
$U_{\perp}$ and $V_{\perp}$ be the orthonormal complements of $U$ and $V$,
and $\lam_r$ be the $r$-th largest singular value of $M$.
Let $\widehat M = M + \Delta$ be a noisy version of $M$,
$\{\widehat U, \widehat V, {\widehat U}_{\perp}, {\widehat V}_{\perp}\}$
be the counterpart of $\{U,V,V_\perp,V_\perp\}$, and
${\widehat \lam}_{r+1}$ be the $(r+1)$-th largest singular value of $\widehat M$.
Let $\|\cdot\|$ be a
matrix norm satisfying $\|ABC\|\le \|A\|_{\rm S}\|C\|_{\rm S}\|B\|$,
$\epsilon_1= \|U^\top \Delta {\widehat V}_{\perp}\|$ and $\epsilon_2 = \|{\widehat U}_{\perp}^\top \Delta V\|$. Then,
\begin{align}\label{wedin+}
\| U_{\perp}^\top \widehat U \|
\le \frac{{\widehat \lam}_{r+1}\epsilon_1+\lam_r\epsilon_2}{\lambda_r^2 - {\widehat \lam}_{r+1}^2}
\le \frac{\epsilon_1\vee\epsilon_2}{\lambda_r - {\widehat \lam}_{r+1}}.
\end{align}
In particular, for the spectral norm $\|\cdot\|=\|\cdot\|_{\rm S}$, $\hbox{\rm error}_1 =\|\Delta\|_{S}/\lambda_r$
and $\hbox{\rm error}_2 =\epsilon_2/\lambda_r$,
\begin{align}\label{wedin-2}
\|\widehat U \widehat U^\top  -U U^\top\|_{\rm S}\le \frac{\hbox{\rm error}_1^2+\hbox{\rm error}_2}{1-\hbox{\rm error}_1^2}.
\end{align}
\end{lemma}

The sharper perturbation bound in the middle of \eqref{wedin+} improves
the commonly used version of the \cite{wedin1972} bound on the right-hand side,
compared with Theorem 1 of \cite{cai2018} and Lemma 1 of \cite{chen2022rejoinder}.
As \cite{cai2018} pointed out, such variations of the \cite{wedin1972} bound
provide sharper convergence rate when $\hbox{\rm error}_2\le \hbox{\rm error}_1$ in \eqref{wedin-2},
typically in the case of $d_1\ll d_2$, as in Proposition \ref{prop:topup}.

\section{Summary}\label{section:summary}

In this paper we propose new estimation procedures for tensor factor model via iterative projection, and focus on two procedures: iTOPUP and iTIPUP. Theoretical analysis shows the asymptotic properties of the estimators. Simulation study presented in the supplementary material illustrates the finite sample properties of the estimators. While theoretical results are obtained under very general conditions, concrete specific cases are considered.
In particular, under the typical factor model setting where the condition numbers of $A_k^\top A_k$ are bounded and the ranks $r_k$ are fixed, the proposed iterative procedures, iTOPUP method and iTIPUP method (with no severe signal cancellation) lead to a convergence rate $O_{\P}((Td_{-k})^{-1/2})$ under strong factors settings due to information pooling of the orthogonal projection of the other $d_{-k}$ dimensions.
This rate is much sharper than the
existing rate $O_{\P}(T^{-1/2})$ in the recent literature for
non-iterative estimators for vector, matrix and
tensor factor models. It implies that the accuracy can be improved by increasing the dimensions, and consistent
estimation of the loading spaces can be achieved even with a fixed finite sample size $T$.
This is in sharp contrast to the folklore based on the existing literature that only the sample size $T$ helps the estimation of the loading matrices in factor models. The proposed iterative estimation methods not only preserve the tensor structure, but also result in sharper convergence rate in the estimation of factor loading space.

The iterative procedure requires two operators, one for initialization and one for iteration. Under certain conditions of
the signal to noise ratio (or the sample size requirement),
we only need the initial estimator to have sufficiently small estimation errors but not the consistency of the initial estimator. Often, one iteration is sufficient. In more complicated general cases, at most $O(\log(d))$ iterations are needed to achieve the ideal rate of convergence. Based on the theoretical
results and empirical evidence, we suggest to use iTOPUP for iteration when the ranks $r_k$ are small.
In terms of initiation, the computational lower bound shows that the signal to noise ratio condition derived from TIPUP initialization is unavoidable for any computationally feasible estimation procedure to achieve consistency, while that from TOPUP initialization is not optimal. Based on this result, we suggest the use of TIPUP initialization. Of course, this should be done with precaution against potential signal cancellation,
for example by using a slightly large $h_0$
as our empirical results show. By examination of the patterns
of estimated singular values under different lag values $h_0$, using iTOPUP and iTIPUP, it is possible to detect signal cancellation, which has significant impact on iTIPUP estimators.

The proposed iterative procedure is similar to HOOI algorithms in spirit, but the detailed operations and the theoretical challenges are significantly different.

\section*{Acknowledgements}
We would like to thank the Editor, the Associate Editor and the anonymous referees for their detailed reviews, which helped to improve the paper substantially.

Yuefeng Han's research is supported in part by National Science Foundation grant IIS-1741390. Rong Chen's research is supported in part by National Science Foundation grants DMS-1737857, IIS-1741390, CCF-1934924, DMS-2027855 and DMS-2319260. Dan Yang's research is supported in part by NSF grant IIS-1741390, Hong Kong grant GRF 17301620, Hong Kong grant CRF C7162-20GF and Shenzhen grant SZRI2023-TBRF-03. Cun-Hui Zhang's research is supported in part by NSF grants
DMS-1721495, IIS-1741390,
CCF-1934924,
DMS-2052949 and DMS-2210850.

\bibliographystyle{imsart-nameyear}
\bibliography{tensorfactor}

\newpage

\appendix

\title{Supplementary Material to ``Tensor Factor Model Estimation by Iterative Projection''}

\begin{aug}
\author{\fnms{Yuefeng} \snm{Han}
},
\author{\fnms{Rong} \snm{Chen}
},
\author{\fnms{Dan} \snm{Yang}
},
\and
\author{\fnms{Cun-Hui} \snm{Zhang}
}

\begin{center}
University of Notre Dame, Rutgers University and The University of Hong Kong
\end{center}

\end{aug}


In this supplementary material, we shall provide simulation studies, the proofs of main results in the paper and some lemmas that are useful in proofs of the paper.

The readers are referred to Appendix \ref{section:simulation} for simulation studies. The proofs of Theorems \ref{thm:itipup}, \ref{thm:itopup}, \ref{lowerbdd} and \ref{thm:stat_lowerbdd} are presented in Appendix \ref{proofth2}, \ref{proofth1}, \ref{proofth3} and \ref{proofth4}, respectively. Appendix \ref{proofcor} includes the proofs of corollaries. All technical lemmas are relegated to Appendix \ref{section:lemmas}.

\section{Simulation Study} \label{section:simulation}

In this section, we compare the empirical performance of different procedures of estimating the
loading matrices of a tensor factor model, under various simulation setups. Specifically, we consider the following procedures: the non-iterative and iterative methods, and the intermediate output from the iterative procedures when the number of iteration is 1 after initialization. If TIPUP is used as \text{$\widehat U_k$-INIT} and \text{$\widehat U_k$-ITER}, the one step procedure will be denoted as 1TIPUP. Similarly for 1UP and 1TOPUP. We consider the following combinations of \text{$\widehat U_k$-INIT} and \text{$\widehat U_k$-ITER}.
\begin{itemize}
\item UP based: (i) UP, (ii) 1UP and (iii) iUP
\item TIPUP based: (iv) TIPUP, (v) 1TIPUP and (vi) iTIPUP
\item TOPUP based: (vii) TOPUP, (viii) 1TOPUP and (ix) iTOPUP
\item mixed iterative: (x) TIPUP-1TOPUP, and (xi) TIPUP-iTOPUP
\item mixed iterative: (xii) TOPUP-1TIPUP, and (xiii) TOPUP-iTIPUP
\end{itemize}
Our empirical results show that (xii) and (xiii) are always inferior to (v) and (vi) respectively. Hence results used (xii) and (xiii) are not shown here.


We demonstrate the performance of all procedures under the setting of a matrix factor model,
\begin{equation}
    X_t=A_1 F_t A_2^\top+ E_t=\lambda U_1 F_t U_2^\top+ E_t, \label{eq:simu}
\end{equation}
and an order 3 tensor factor model
\begin{equation}
\cX_t=
\lambda\cF_t\times_1 U_1\times_2 U_2\times_3 U_3+\cE_t. \label{eq:simu2}
\end{equation}

Under matrix factor model \eqref{eq:simu}, we consider two experimental configurations:
\begin{enumerate}
\item[I.] We set $r_1=r_2=1$ for three purposes: to see the effect of sample size $T$ and signal strength $\lambda$; to check the effect of $h_0$ when there is no signal cancellation; and to verify the theoretical bounds on the sample size.
\item[II.] We consider the rank setup $r_1=1$ and $r_2=2$ while fixing the signal strength $\lambda$. We vary the parameters so that signal cancellation may or may not happen, and under the latter case, the choice of $h_0$ plays an important role.
\end{enumerate}
Under tensor factor model \eqref{eq:simu2}, we implement the following configuration:
\begin{enumerate}
\item[III.] The ranks are given as $r_1=r_2=r_3=2$.
\end{enumerate}
We repeat all the experiments 100 times.

Here, $E_t$ in matrix factor model \eqref{eq:simu} (resp. $\cE_t$ in tensor factor model \eqref{eq:simu2}) is white noise with no autocorrelation, $E_t\perp E_{t+h},h>0$ (resp. $\cE_t\perp \cE_{t+h},h>0$), and generated according to $E_t=\Psi_1^{1/2} Z_t\Psi_2^{1/2}$ (resp. $\cE_t = \cZ_t \times_1\Psi_1^{1/2}\times_2\Psi_2^{1/2}\times_3\Psi_3^{1/2}$), where all of the elements in the $d_1\times d_2$ matrix $Z_t$ (resp. $d_1\times d_2\times d_3$ tensor $\cZ_t$) are iid $N(0,1)$. Furthermore, $\Psi_1, ~\Psi_2$ (resp. $\Psi_3$) are the covariance matrices along each mode with
the diagonal elements being $1$ and all the off-diagonal elements being $\psi_1,~\psi_2$ (resp. $\psi_3$). The elements of the loading matrices $U_j$ of size $d_j\times r_j$, for $j=1,2,3$, are first generated from iid N(0,1), and then
orthonormalized through QR decomposition. We set different $\lambda$ values for different signal-to-noise ratio.

As for the factor time series, under matrix factor model Configuration I, the univariate $f_t$ follows AR(1) with AR coefficient $\phi$; under matrix factor model Configuration II, $F_t$ is a $1\times 2$ matrix and the two univariate time series $f_{it}$ follow AR(1) $f_{it}=\phi_{i}f_{i(t-1)}+\epsilon_{it}$ independently with two AR coefficients $\phi_1$ and $\phi_2$ respectively; under tensor factor model Configuration III, $\cF_t$ is a $2\times 2 \times 2$ tensor with eight independent univariate time series, where three follow AR(1) processes $f_{111t}=0.7f_{111t-1}+e_{111t}, ~f_{211t}=0.6f_{211t-1}+e_{211t}, ~f_{222t}=0.8f_{222t-1}+e_{222t}$, one follows AR(2) process $f_{221t}=0.5f_{221t-1}+0.3f_{221t-2}+e_{221t}$, and four, $f_{112t},f_{121t},f_{122t},f_{212t}$, are white noise. Here, all of the innovations follow iid $N(0,1)$.

We fix dimensions $d_1=d_2$ ($=d_3$) $=16$ and use the estimation error for $A_1$ or $U_1$ as criterion: $\|\hat{P}_{1}-P_{1}\|_{\rm S}$. The $\lambda$ in \eqref{eq:simu} and \eqref{eq:simu2} is $\lambda=\prod_{k=1}^K \| A_k \|_{\rm S}$ in Corollaries \ref{cor:itopup1} and \ref{cor:itipup1}. 

Configuration I satisfies Assumption~\ref{asmp:factor} since the rank $r_1$ and $r_2$ are fixed and the factor process is stationary. We performed three experiments under Configuration I for three purposes.

\noindent
{\bf Experiment 1 under Configuration I.} We set the off-diagonal entries of the covariance matrices of the noise as $\psi_1=\psi_2=0.2$ and the AR coefficient of the factor time series as $\phi=0.8$, and vary the sample size $T=256, ~1024$ and signal strength $\lambda=1,2,~4$. Figure \ref{fig-sim-rank1} shows the boxplot of the logarithm of the estimation errors for methods (i)-(ix). The performance of the mixed algorithms (x)-(xi) is not shown because they are identical to the corresponding methods (v) and (vi), 
under the rank one setting when the same initialization is used. We use
$h_0=1$ and $\hat r_2 =1$ in the process of the estimation. It can be seen easily from Figure~\ref{fig-sim-rank1}
that UP, 1UP, and iUP are always the worst, showing the advantage of the methods that accommodate time series features
and the disadvantage of neglecting temporal correlation. We will exclude UP, 1UP, and iUP from comparison for the rest of the simulation. When the sample size is small and the signal is weak ($T=256$ and $\lambda=1$), none of the methods work well, though procedures using TIPUP work sometimes. When the sample size is not too small or the signal strength is not too weak (shown in all panels except for the top left one), one-step methods (1TIPUP and 1TOPUP) are better than the
noniterative methods (TIPUP and TOPUP), and iterative methods (iTIPUP and iTOPUP) are in turn better than the one-step methods. When the sample size and signal strength increase, all methods perform better, but meanwhile the advantage of iterative methods over one-step methods and the advantage of one-step methods over initialization methods become smaller. When the sample size is large and signal is strong, the one-step methods are similar to the iterative methods after convergence, corroborating Corollaries \ref{cor:itopup1} and \ref{cor:itipup1}.

\begin{figure}[h]
  \centering
  \includegraphics[width=\textwidth]{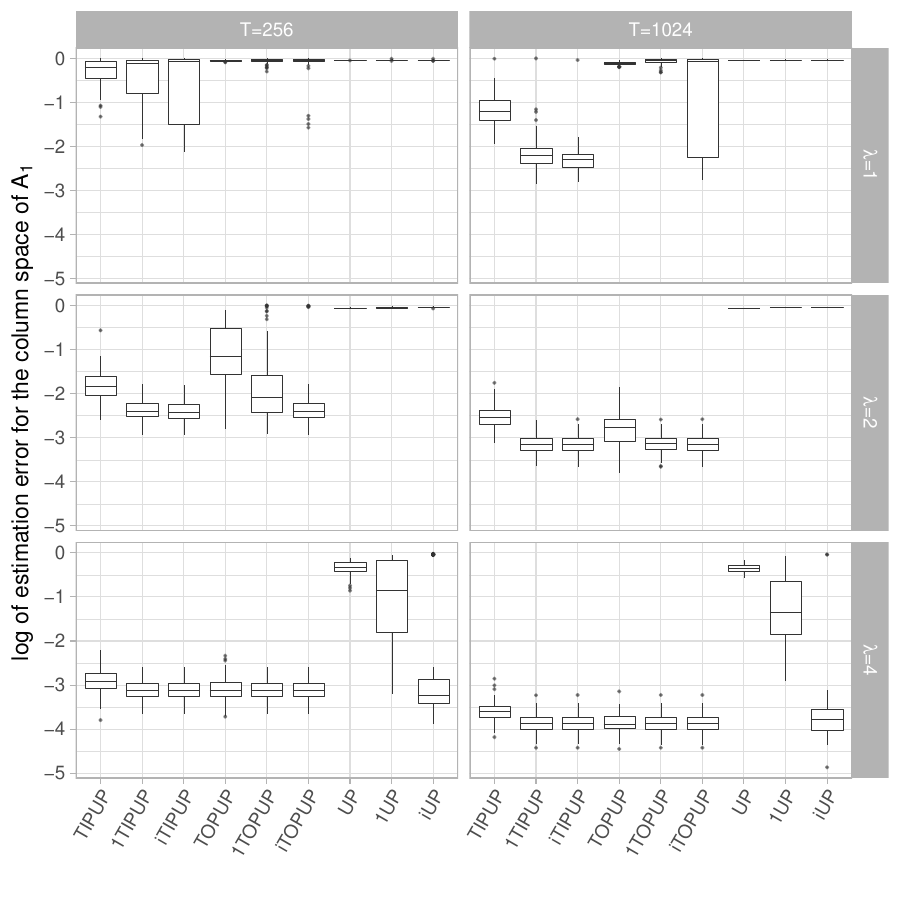}
  \caption{Experiment 1 under Configuration I. Boxplot of the logarithm of the estimation error of $A_1$. Nine methods (i)-(ix) are considered in total. Three rows correspond to three signal-to-noise strengths $\lambda=1,2,4$. Two columns correspond to two sample sizes $T=256,1024$.}
  \label{fig-sim-rank1}
\end{figure}

It is somewhat surprising to observe, from the top left panel in Figure \ref{fig-sim-rank1} ($T=256$ and $\lambda=1$), that in the small sample size and low signal strength case, the median error of iTIPUP is larger than that of 1TIPUP, which in turn is larger than TIPUP, whereas the order is reversed under stronger signal to noise ratio or with larger sample size shown in the other panels. Furthermore, the top right panel in Figure \ref{fig-sim-rank1} shows that, with weak signal to noise ratio, the TIPUP based methods perform better than the TOPUP based methods. This observation coincides with the results in Corollaries \ref{cor:itopup1} and \ref{cor:itipup1}, which together state that iTOPUP requires larger signal-to-noise ratio for consistency than iTIPUP. Figure \ref{fig-sim-rank1-track} produces some deeper insight, where the trajectories of the iterative methods (including initial estimations, estimations after one iteration, and the estimations after final convergence) of the 100 repetitions are connected, for the $T=256$ case. The top two panels show that when signal is weak and the sample size is small, the initial estimates may be poor, and the iterative methods may need certain accuracy in the
initial estimates to produce further improvement. This reemphasizes the condition on the initial estimate in the theorems. The bottom two panels show that when signal is stronger, the relatively more accurate initial estimates
enable the iterative methods to improve the estimates. Again, TOPUP initial estimates are not as accurate as the
TIPUP estimates.

\begin{figure}[h]
  \centering
  \includegraphics[width=\textwidth]{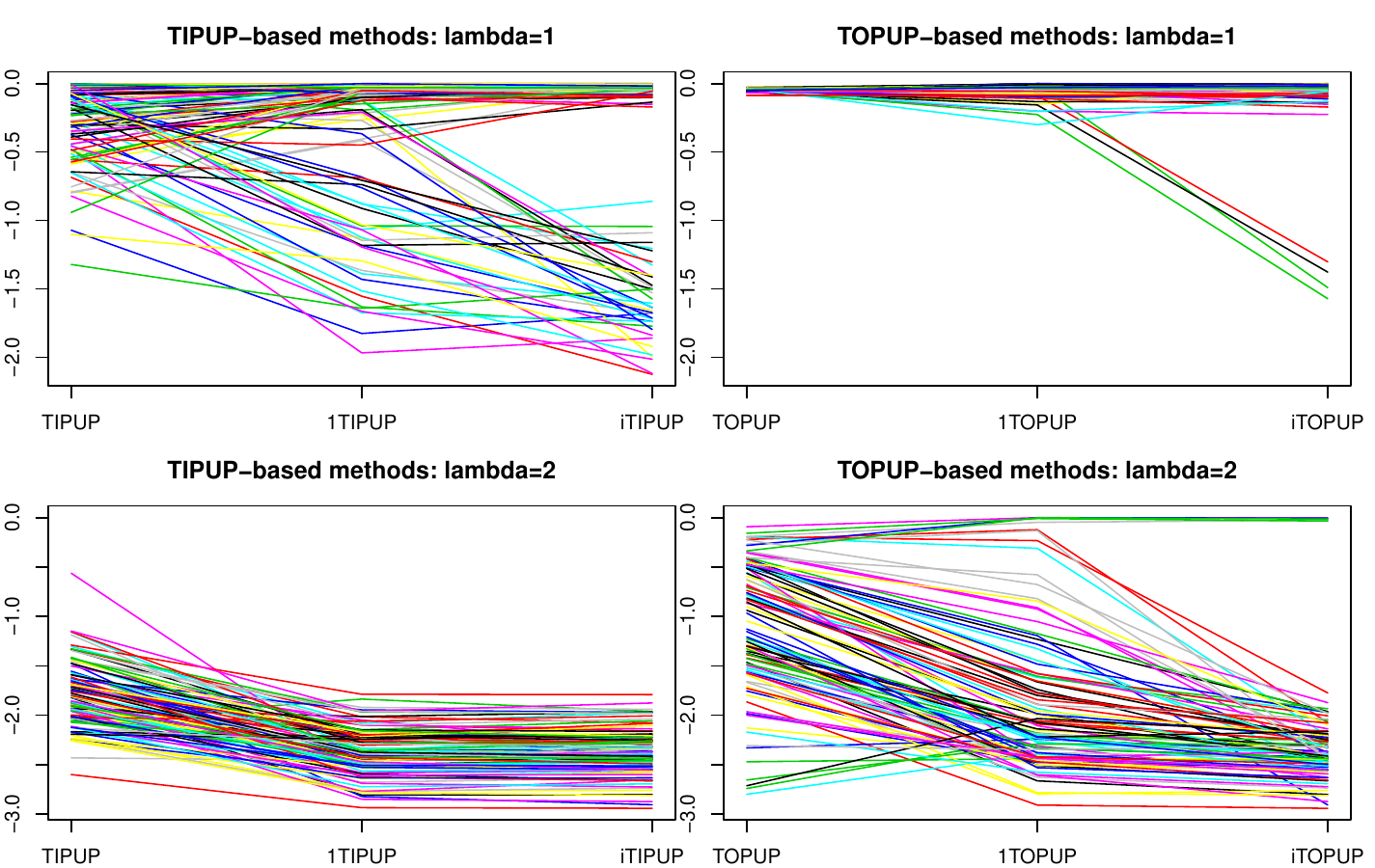}
  \caption{Experiment 1 under Configuration I. Trajectory of the logarithm of the estimation error of $A_1$ with fixed sample size $T=256$. Two rows correspond to two signal-to-noise strengths $\lambda=1,2$. Two columns correspond to TIPUP-based and TOPUP-based methods respectively.}
  \label{fig-sim-rank1-track}
\end{figure}

\noindent
{\bf Experiment 2 under Configuration I.} We use the same setting as above, but vary $h_0=1,2,\ldots,5$, and fix $\hat r_2 =1$ in the process of the estimation. Figure \ref{fig-sim-extra1} provides the results. It can be seen that when there is no signal cancellation, the choice of $h_0$ does not affect the performance dramatically. When $h_0$ increases, the performance of all TIPUP-based and TOPUP-based algorithms becomes slightly worse most of the time.

\begin{figure}[h]
  \centering
  \includegraphics[width=\textwidth]{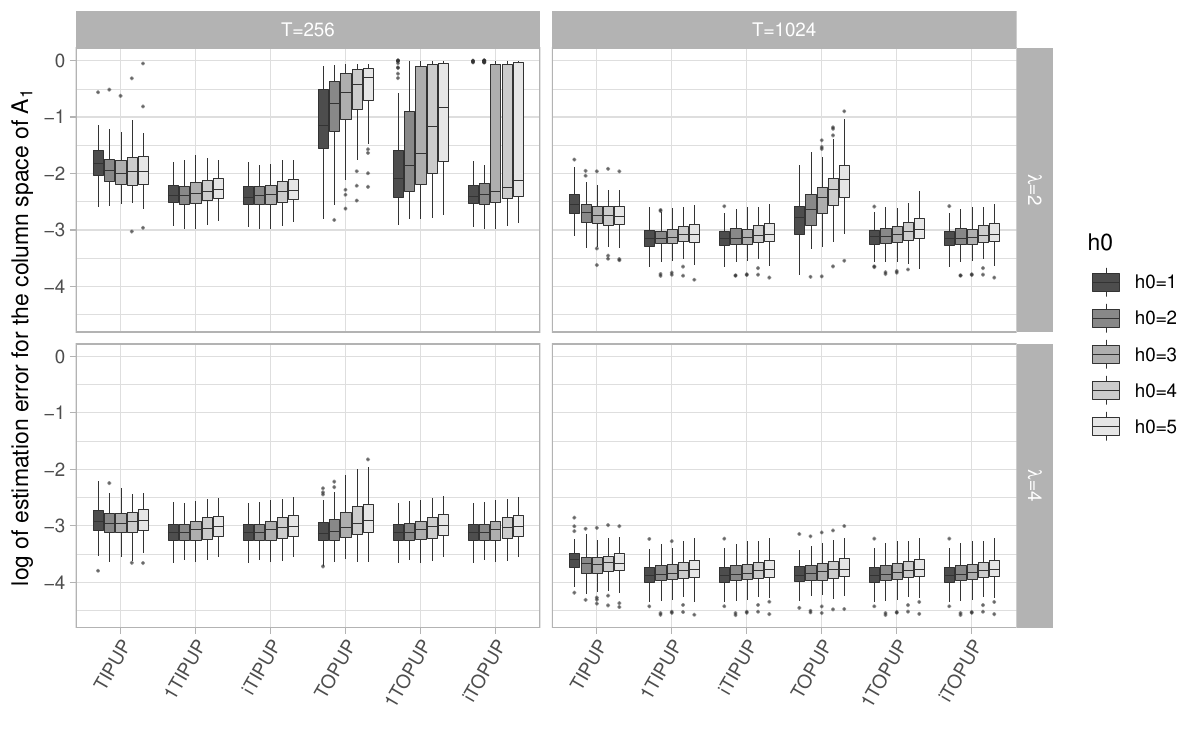}
  \caption{Experiment 2 under Configuration I. Boxplot of the logarithm of the estimation error of $A_1$. Six methods (iv)-(ix) with five choices of $h_0$ are considered in total. Two rows correspond to two signal-to-noise strengths $\lambda=2,4$. Two columns correspond to two sample sizes $T=256,1024$.}
  \label{fig-sim-extra1}
\end{figure}

\noindent
{\bf Experiment 3 under Configuration I.} This experiment is conducted to verify the bounds on the sample size for iTOPUP in Corollary \ref{cor:itopup2} and for iTIPUP in Corollary \ref{cor:itipup2}. We set the off-diagonal entries of the covariance matrices of the noise as $\psi_1=\psi_2=0.4$ and the AR coefficient of the factor time series as $\phi=0.9$, and vary the sample size $T=16, 64, 256, 1024, 4096$ and signal strength $\lambda=1,2,4,8$. Again, we use $h_0=1$ and $\hat r_2 =1$ in the process of the estimation.

Table \ref{tbl:bound-sim} provides the values of $\delta_0,\delta_1$ in Assumption \ref{asmp:strength} as signal strength $\lambda$ varies and the lower bounds on the sample size in \eqref{eq:sample:itopup} and \eqref{eq:sample:itipup} when the values of $\delta_0,\delta_1$ are plugged in. We have $\delta_2=0$ in this case and $d=d_1\times d_2=256$. Figure \ref{fig-sim-extra2} shows the simulation results, where the raw estimation error instead of the logarithm is given. When the raw errors are close to 1, it implies the estimation is not accurate. It is corroborated that the sample size $T$ can be much smaller than $d$ in the strong factor case.

\begin{table}[h]
\begin{tabular}{rrrrr}
  \hline
 $\lambda$ & $\delta_0$ & $\delta_1$ & iTOPUP bound \eqref{eq:sample:itopup}& iTIPUP bound \eqref{eq:sample:itipup}\\\hline
      1&   1.00&   1.00&     4096.00&      256.00\\
      2&   1.00&   1.00&     4096.00&      256.00\\
      4&   0.82&   0.86&      829.44&       81.00\\
      8&   0.57&   0.61&       51.84&        5.06\\
\hline
\end{tabular}
  \caption{Experiment 3 under Configuration I. This table provides the values of $\delta_0,\delta_1$ in Assumption \ref{asmp:strength} as signal strength $\lambda$ varies and the lower bounds on the sample size in \eqref{eq:sample:itopup} and \eqref{eq:sample:itipup} when the values of $\delta_0,\delta_1$ are plugged in. }
  \label{tbl:bound-sim}
\end{table}


\begin{figure}
  \centering
  \includegraphics[width=\textwidth,angle=0]{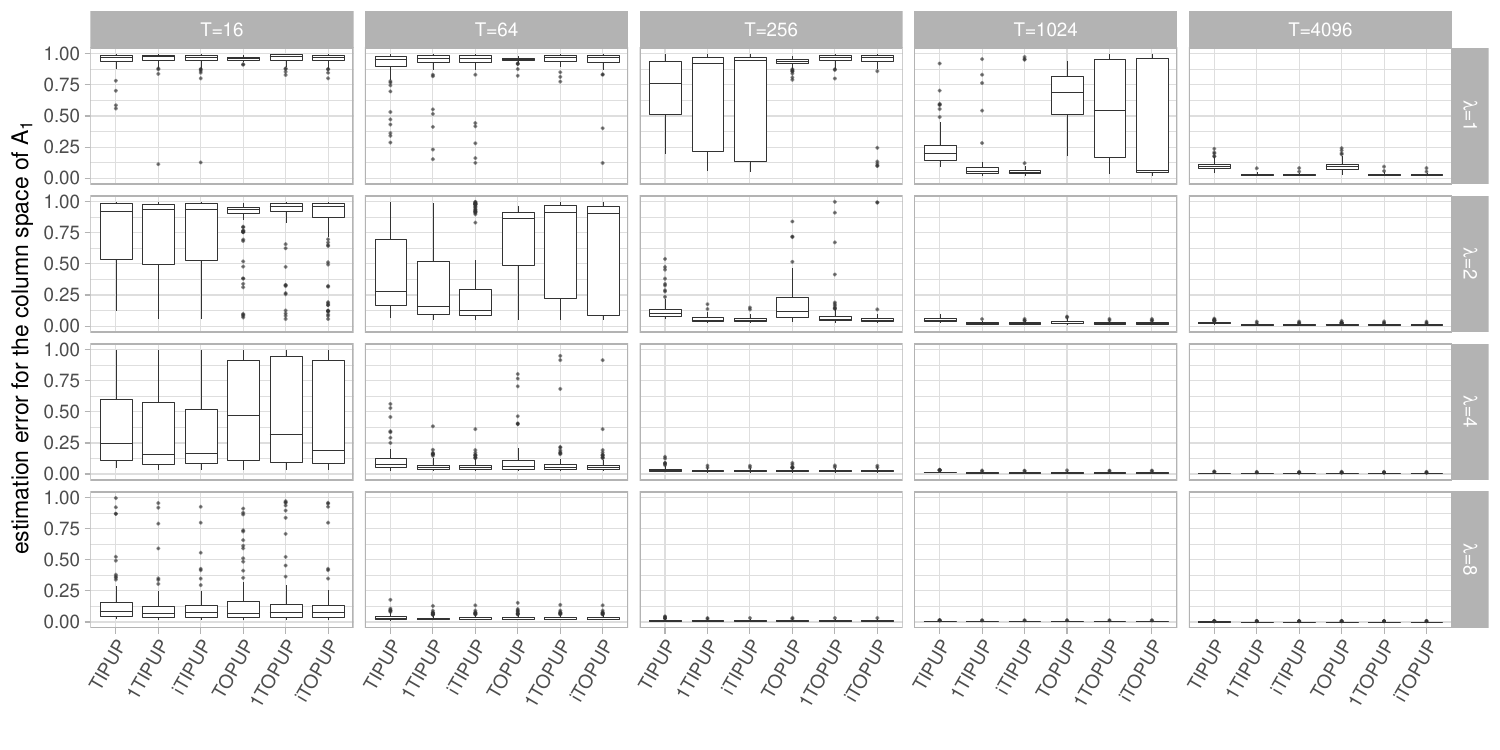}
  \caption{Experiment 3 under Configuration I. Boxplot of the estimation error of $A_1$. Six methods (iv)-(ix) are considered in total. Four rows correspond to four signal-to-noise strengths $\lambda=1,2,4,8$. Five columns correspond to five sample sizes $T=16, 64, 256, 1024, 4096$. This figure corroborates the theoretical lower bounds on the sample size in \eqref{eq:sample:itopup} and \eqref{eq:sample:itipup}.}
  \label{fig-sim-extra2}
\end{figure}
Under Configuration II, we performed two experiments:
the two AR coefficients for the two independent univariate time series $f_{1t}$ and $f_{2t}$ are $\phi_1=0.8$ and $\phi_2=0.6$ in Experiment 1 and $\phi_1=0.8$ and $\phi_2=-0.8$ in Experiment 2. Experiment 1 under configuration II satisfies Assumption~\ref{asmp:factor}(i)-(ii) because there is no signal cancellation; Experiment 2 under configuration II satisfies Assumption~\ref{asmp:factor}(i) for TOPUP related methods. When $h_0=1$, Experiment 2 does not satisfy Assumption~\ref{asmp:factor}(ii) for TIPUP related methods as there is a severe signal cancellation. However, using $h_0=2$ significantly reduces signal cancellation as lag 2 auto-cross-covariance does not cancel each other in TIPUP related methods.

\noindent
{\bf Experiment 1 under Configuration II.}
Figure \ref{fig-sim-rank2-nocancel} shows the boxplot of the logarithm of the estimation errors of 8 methods including (iv)-(ix) and mixed (x)-(xi) with TIPUP initiation and TOPUP iteration. Again, the performance of the
mixed (xii)-(xiii) procedures with iTIPUP iteration is not as good as that of iTIPUP hence not shown.
Here we use
different sample sizes, with the signal strength fixed at $\lambda=1$ and two $h_0$ values: $h_0=1$ and $h_0=2$.
The theoretical $\lambda_1$  defined in \eqref{lam_k} and $\lambda_1^*$ in \eqref{lam^*_k} under the stationary
auto-cross-moments of the factor process are given in the figure. Note that they are different for different $h_0$.
It shows that the mixed TIPUP-1TOPUP method can slightly improve 1TOPUP because of the better initialization. With
larger sample size $T=1024$, TIPUP-1TOPUP also slightly outperforms 1TIPUP. In this case, using the larger $h_0=2$
provides slightly poorer performance than $h_0=1$, as the lag-2 autocorrelation is significantly smaller than that of
lag 1 for the underlying AR(1) process with $\phi_2=0.6$. The extra term adds limited signal, shown by the
small differences in $\lambda_1$ and $\lambda_1^*$, but incorporates extra noise terms in the estimators. To see more
clearly the impact of $h_0$, we show the boxplots of the estimated $\lambda_1^*$ and $\lambda_1$ using
iTIPUP and iTOPUP, respectively, for $h_0=1,2$ and $3$, under different sample sizes in Figure \ref{fig-sim-rank2-nocancel-lambda}. The theoretical values are marked with a diamond. It is seen
that the estimated values are relatively close to the theoretical values. More importantly, they decrease as
$h_0$ increases in this no-signal cancellation case.


\begin{figure}[h]
  \centering
  \includegraphics[width=\textwidth,page=1]{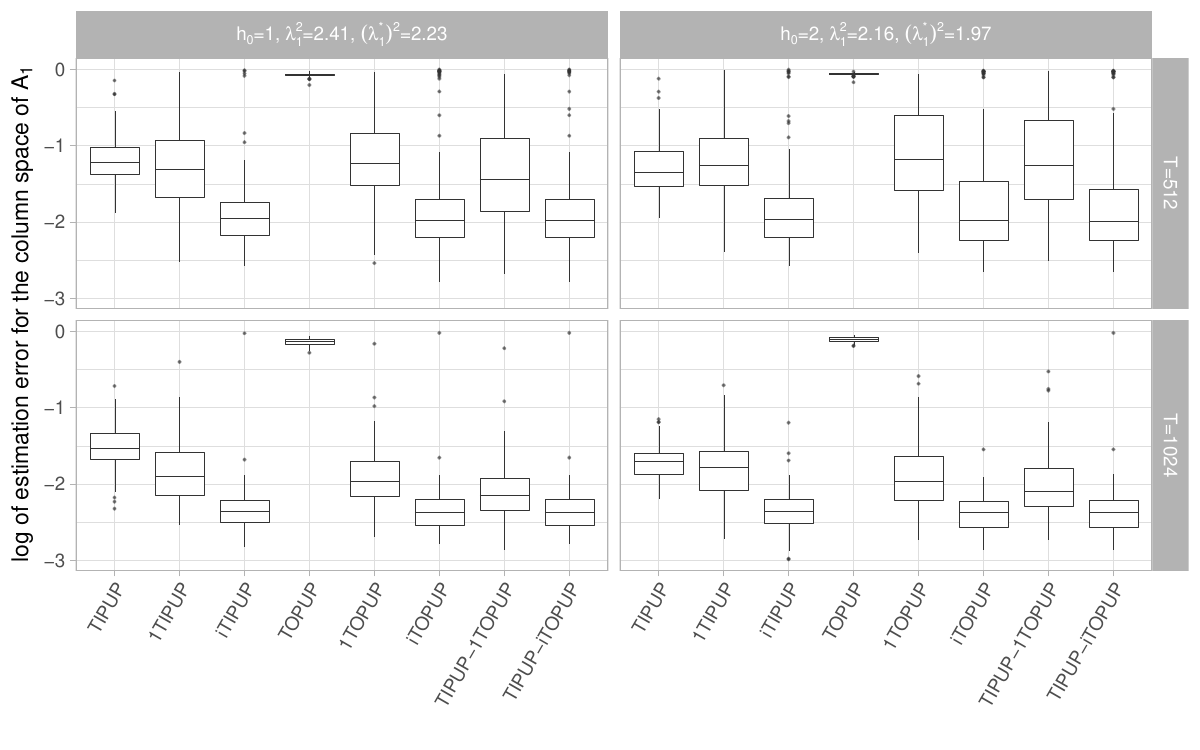}
  \caption{Experiment 1 under Configuration II. Boxplot of the logarithm of the estimation error of $A_1$. Eight methods are considered in total. Two rows correspond to two sample sizes $T=512,1024$. Two columns correspond to two choices of $h_0$.  The population signal strengths $\lambda_1^2$ \eqref{lam_k} and $\lambda_1^{*2}$ \eqref{lam^*_k} for different $h_0$ are provided on the top.}
  \label{fig-sim-rank2-nocancel}
\end{figure}

\begin{figure}[h]
  \centering
  \includegraphics[width=\textwidth,page=1]{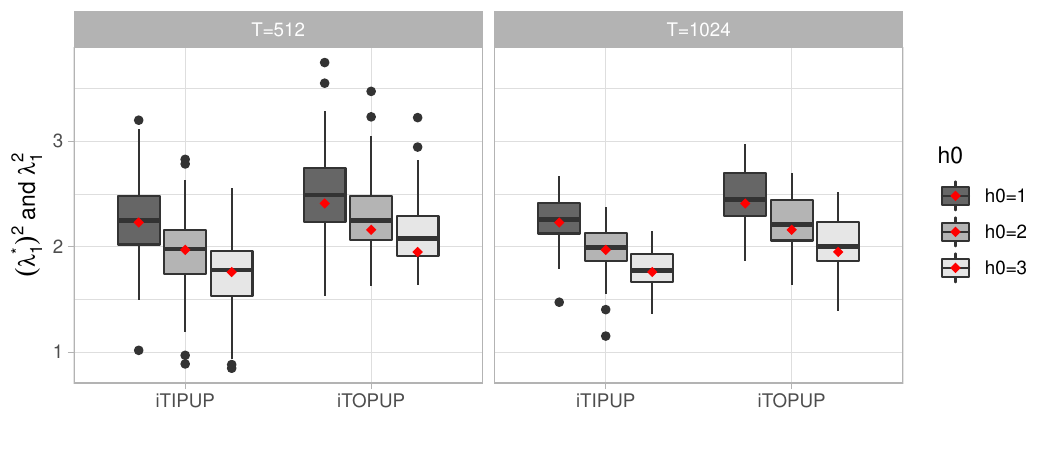}
  \caption{Experiment 1 under Configuration II. Boxplot of the sample estimates of the signal strengths $\lambda_1^2$ \eqref{lam_k} and $\lambda_1^{*2}$ \eqref{lam^*_k} over 100 replications for iTIPUP and iTOPUP with three choices of $h_0$. Two panels correspond to two sample sizes $T=512,1024$. The superimposed red diamonds are the population version of the signal strengths.}
  \label{fig-sim-rank2-nocancel-lambda}
\end{figure}

\noindent
{\bf Experiment 2 under Configuration II.}
When $\phi_1=0.8$ and $\phi_2=-0.8$, we can readily check that $\mathbb{E}(F_t F_{t-1}^\top)=(\phi_1+\phi_2)\sigma^2=0$. Therefore, in the TIPUP-related procedure for estimating $A_1$ with $h_0=1$, the signal completely cancels out. Since the ranks $r_1$ and $r_2$ are fixed, we have $\delta_2=\infty$ for $h_0=1$,
and the corresponding $\lambda_1^*=0$. Figure \ref{fig-sim-rank2-cancel} shows the boxplot of the logarithm of the estimation error of $A_1$ for 8 methods including (iv)-(ix) and mixed (x)-(xi) with two choices of $h_0=1$ and $h_0=2$. We fix the signal strength to be $\lambda=1$ to isolate the effect of $h_0$. When $h_0=1$, both initialization TIPUP and TOPUP do not perform well. But 1TOPUP and iTOPUP improve the performance of TOPUP significantly with TOPUP iteration while 1TIPUP and iTIPUP cannot improve TIPUP. This is because signal cancellation has significant impact on TIPUP based
procedures while having no impact on TOPUP based procedures. To our pleasant surprise, when $h_0=1$,
the mixed TIPUP-1TOPUP is better than both 1TIPUP and 1TOPUP, and the mixed TIPUP-iTOPUP is similar to iTOPUP and much better than iTIPUP.
When using $h_0=2$, the noise cancellation is mild and $(\lambda_1^*)^2=1.78$. Since $r_k$ are fixed, we have
$\delta_2<\infty$. Note that in this case the signal using TIPUP only comes from lag-2 cross product and is weaker
than that using TOPUP related procedures. The difference does not have impact on the convergence rate, but on the
signal to noise ratio. Comparing the left two subfigures with the right ones of Figure~\ref{fig-sim-rank2-cancel}, it is seen that
using $h_0=2$ always boosts the performance of TIPUP-related methods significantly.
Meanwhile, the TOPUP based methods are not sensitive to the choice of $h_0$.
When $h_0=2$, the non-iterative TIPUP performs better than TOPUP, 1TIPUP performs better than 1TOPUP, but after convergence, iTOPUP performs better than iTIPUP. Because the initialization TIPUP is better than TOPUP for $h_0=2$, it is of no surprise to see that TIPUP-1TOPUP behaves better than 1TIPUP and 1TOPUP, and TIPUP-iTOPUP is similar as iTOPUP and slightly better than iTIPUP.

\begin{figure}[h]
  \centering
  \includegraphics[width=\textwidth,page=2]{tts_iter_sim_18_paper_A1_4_revision1.pdf}
  \caption{Experiment 2 under Configuration II. Boxplot of the logarithm of the estimation error of $A_1$. 8 methods are considered in total. Two rows correspond to two sample sizes $T=512,1024$. Two columns correspond to two choices of $h_0$.  The population signal strengths $\lambda_1^2$ \eqref{lam_k} and $\lambda_1^{*2}$ \eqref{lam^*_k} for different $h_0$ are provided on the top.}
  \label{fig-sim-rank2-cancel}
\end{figure}

Again, to see more clearly the impact of $h_0$ in this case with noise cancellation, we show the boxplots of the estimated $\lambda_1^*$ and $\lambda_1$ using
iTIPUP and iTOPUP, respectively, for $h_0=1,2$ and $3$ in Figure \ref{fig-sim-rank2-cancel-lambda}. It is seen
that the iTOPUP procedure remains robust in estimating $\lambda_1$ under the noise-cancellation case. And
$\lambda_1$ decreases as $h_0$ increases. However, iTIPUP is very different. Although when using $h_0=1$ the estimated
$\lambda_1^*$ significantly overestimates the theoretical value $\lambda_1^*=0$, they are still much less than
those from using $h_0=2$ and $3$. The reversed order of the magnitude of $\lambda_1^*$ as $h_0$ increases can be potentially used to detect
signal cancellation in practice, though the theoretical property of the estimators of $\lambda_1^*$ (e.g. standard
deviation) is technically challenging to obtain. In practice, when one observes such a reversed order, it is
recommended to use iTOPUP as a conservative estimator. Of course, the behaviors of $\lambda_k$ and $\lambda_k^*$
depend on the auto-cross-moment structure of the underlying factor process. For example, if the factor process
follows a MA(2) model with zero lag-1 autocorrelation ($f_t=e_t+\theta_2e_{t-2}$), then $\lambda_1$ and $\lambda_1^*$
under $h_0=2$ would be larger than those under both $h_0=1$ and $h_0=3$. But we expect that the pattern of
$\lambda_1$ under different $h_0$ would be similar to that of $\lambda_1^*$ under different $h_0$, if there is no severe signal cancellation. Severe signal cancellation would make the patterns different.

\begin{figure}[h]
  \centering
  \includegraphics[width=\textwidth,page=2]{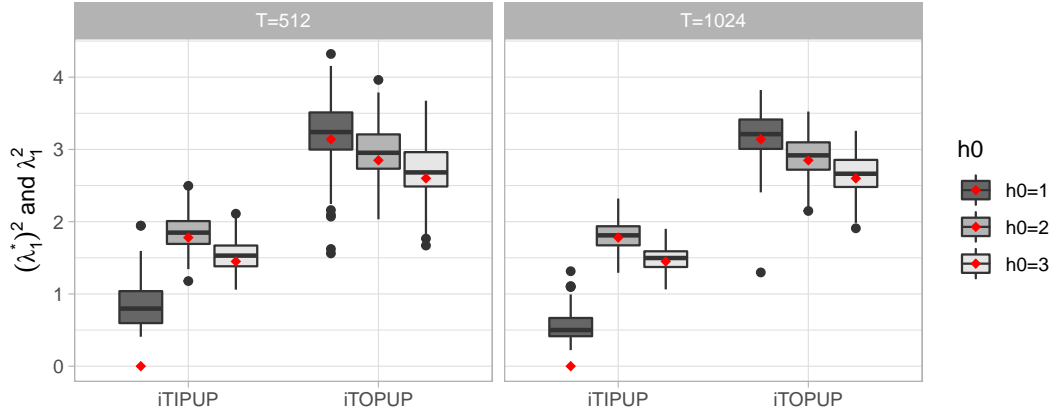}
  \caption{Experiment 2 under Configuration II. Boxplot of the sample estimates of the signal strengths $\lambda_1^2$ \eqref{lam_k} and $\lambda_1^{*2}$ \eqref{lam^*_k} over 100 replications for iTIPUP and iTOPUP with three choices of $h_0$. Two panels correspond to two sample sizes $T=512,1024$. The superimposed red diamonds are the population version of the signal strengths.}
  \label{fig-sim-rank2-cancel-lambda}
\end{figure}

\noindent
{\bf Experiment under Configuration III.} With order-3 tensor factor model \eqref{eq:simu2}, Configuration III satisfies Assumption~\ref{asmp:factor}, and Figure \ref{fig-sim-extra3} shows the results. The message is almost the same as in Figure \ref{fig-sim-extra1} for the matrix factor model. That is, TIPUP offers better initialization than TOPUP, which supports the theoretically smaller requirement on the sample size by TIPUP; iterative methods are better than one-step methods, which are in turn better than the non-iterative methods; when there is no signal cancellation, the iterative methods are in general not sensitive to the choice of $h_0$ and the non-iterative and one-step methods tend to behave slightly worse with larger values of $h_0$; when iTOPUP converges, its performance is better than that of iTIPUP as shown in the bottom right panel, which also verifies the theoretical claim of faster convergence rate of iTOPUP.

\begin{figure}[h]
  \centering
  \includegraphics[width=\textwidth]{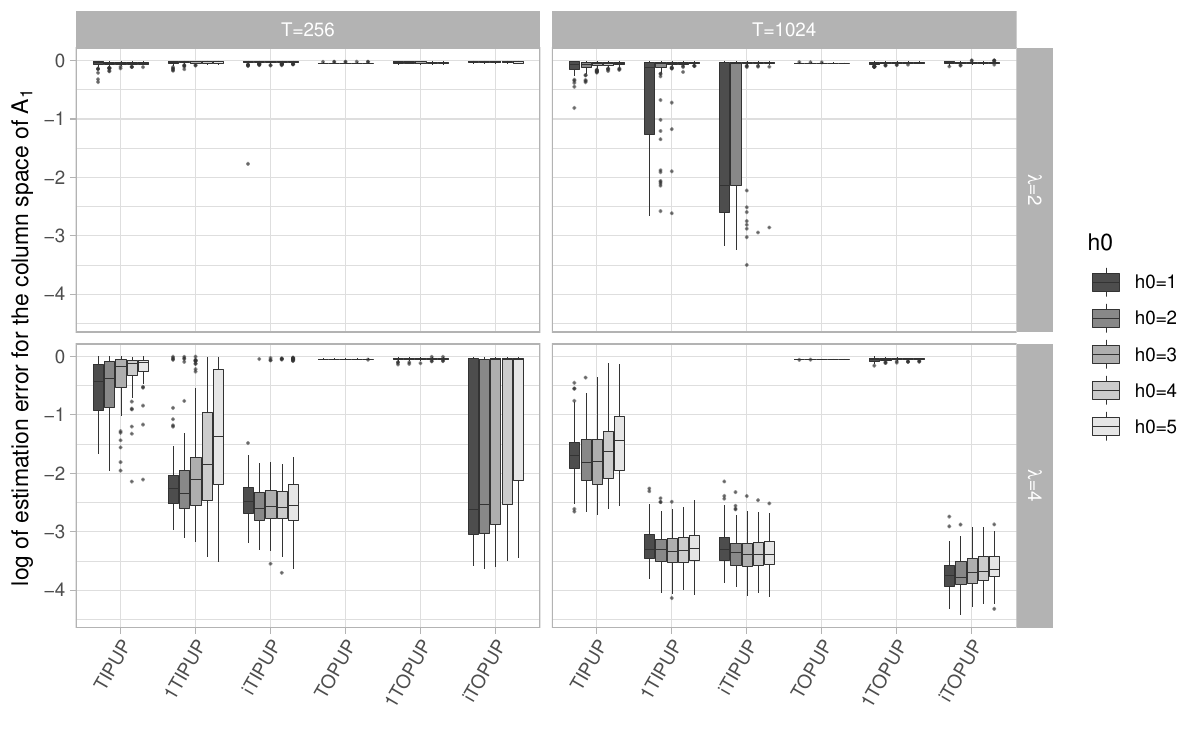}
  \caption{Experiment under Configuration III for order-3 tensor factor model. Boxplot of the logarithm of the estimation error of $A_1$. Six methods (iv)-(ix) with five choices of $h_0$ are considered in total. Two rows correspond to two signal-to-noise strengths $\lambda=2,4$. Two columns correspond to two sample sizes $T=256,1024$.}
  \label{fig-sim-extra3}
\end{figure}

\section{Proof of Theorem \ref{thm:itipup}}\label{proofth2}

We focus on the case of $K=2$ as the iTIPUP begins with mode-$k$ matrix unfolding. In particular, we sometimes give explicit expressions only in the case of $k=1$ and $K=2$. For $K=2$, we observe a matrix time series with $X_t=A_1 F_t A_2^\top + E_t \in \mathbb R^{d_1\times d_2}$. Recall 
that under the conditional expectation $\overline\E$, $F_1,...F_T$ are fixed. Let $U_1$, $U_2$ be the left singular matrices of $A_1$ and $A_2$ respectively with $r_k=$rank$(U_k)=$rank$(A_k)$.

We outline the proof as follows. Let $L^{(m)}_k$ be
the loss \eqref{loss} for
$\widehat U_k^{(m)}$ or equivalently the spectral norm error for
$\widehat P_k^{(m)} = \widehat U_k^{(m)}\widehat U_k^{(m)\top}$, $k=1,...,K$, and $L^{(m)}$ their maximum,
\begin{equation}\label{L^m_k}
L^{(m)}_k = 
\|\widehat P_k^{(m)} - P_k\|_{\rm S},\quad
L^{(m)}=\max_{k=1,2,...,K} L^{(m)}_k.
\end{equation}

From \citet{chen2022factor}, $\overline\E\big[L^{(0)}_k\big]\lesssim {R}^{*(0)}_k$
as we mentioned in \eqref{TIPUP_k-bd}.
By applying the Gaussian concentration inequality for Lipschitz functions and Lemme \ref{lm-GH}
in their analysis, we have
\begin{equation}\label{TIPUP_k-bd-tail}
L^{(0)} \le C_{1}^{\tipup}R^{*(0)}\ \text{ with }\ {R}^{*(0)}=\max_{1\le k\le K}{R}^{*(0)}_k
\end{equation}
in an event $\Omega_0$ with $\overline\P(\Omega_0)\ge 1- 5^{-1} \sum_{k=1}^Ke^{-d_k}$.
This is similar to \eqref{ideal-bd} below.

After the initialization with $\widehat U_k^{(0)}$,
the algorithm iteratively produces estimates $\widehat U_k^{(m)}$ from $m=1$ to $m=J$.
Define the matrix-valued operator $\text{TIPUP}_1(\cdot)$ as
\begin{equation*}
\text{TIPUP}_1(\widetilde U_2) =
\bigg(\sum_{t=h+1}^T \frac{X_{t-h}\widetilde U_2\widetilde U_2^\top X_{t}^\top}{T-h},
h=1,...,h_0\bigg)\in\R^{d_1\times (d_1h_0)}
\end{equation*}
for any matrix-valued variable $\widetilde U_2\in \R^{d_2\times r_2}$.
Given $\widehat U_2^{(m)}$, the $(m+1)$-th iteration produces estimates
$$
\widehat U_1^{(m+1)} = \text{LSVD}_{r_1}\big(\text{TIPUP}_1(\widehat U_2^{(m)})\big),\quad
\widehat P_1^{(m+1)} = \widehat U_1^{(m+1)}\widehat U_1^{(m+1)\top}.
$$
The ``noiseless" version of this update is given by
\begin{equation}\label{error-free-update}
\Theta_{1,h}^*(\widetilde U_2)
=\sum_{t=h+1}^T \frac{A_1 F_{t-h} A_2^\top \widetilde U_2\widetilde U_2^\top A_2  F_t^\top A_1^\top} {T-h},\quad
\overline{\E}[\text{TIPUP}_1](\widetilde U_2)
=\Theta_{1,1:h_0}^*(\widetilde U_2),
\end{equation}
with $\Theta_{1,1:h_0}^*(\widetilde U_2)=\big(\Theta_{1,h}^*(\widetilde U_2),h=1,...,h_0\big)$ as in \eqref{Theta^*-2},
giving error free ``estimates",
$$
U_1 =\text{LSVD}_{r_1}\big(\overline{\E}[\text{TIPUP}_1](\widehat U_2^{(m)})\big),\quad P_1 = U_1U_1^\top,
$$
when $\overline{\E}[\text{TIPUP}_1](\widehat U_2^{(m)})$ is of rank $r_1$.
Thus, by Wedin's theorem (\citet{wedin1972}),
\bes
L_1^{(m+1)} = \big\|\widehat P_1^{(m+1)} - P_1\big\|_{\rm S}
\le \frac{2\| \text{TIPUP}_1(\widehat U_2^{(m)})-\overline{\E}[\text{TIPUP}_1](\widehat U_2^{(m)}) \|_{\rm S}}
{\sigma_{r_1}(\overline{\E}[\text{TIPUP}_1](\widehat U_2^{(m)}))}.
\ees
We note that
$\widehat U_2^{(m)}$ is plugged-in after the conditional
expectation in $\overline{\E}[\text{TIPUP}_1](\cdot)$.
For general $1\le k\le K$, we define $\text{TIPUP}_k(\widetilde U_{-k})$
and $\overline{\E}[\text{TIPUP}_k](\widetilde U_{-k})$
as matrix-valued functions of $\widetilde U_{-k}=\odot_{j\neq k}\widetilde U_j$.
We will prove by induction that
$\sigma_{r_k}(\overline{\E}[\text{TIPUP}_k](\widehat U_{-k}^{(m)}))$,
the denominator in the above inequality, is
no smaller than a half of its ideal version as in \eqref{lam^*_k}, e.g.
\begin{equation}\label{signal-bd}
2 \sigma_{r_1}(\overline{\E}[\text{TIPUP}_1](\widehat U_2^{(m)}))
\ge \sigma_{r_1}(\overline{\E}[\text{TIPUP}_1](U_2)) = h_0^{1/2}\lam_1^{*2},
\end{equation}
in the case of $k=1$ and $K=2$. It would then follow that
\begin{align}\label{Wedin-72}
L_1^{(m+1)} = \big\|\widehat P_1^{(m+1)} - P_1\big\|_{\rm S}
\le \frac{\| \text{TIPUP}_1(\widehat U_2^{(m)})-\overline{\E}[\text{TIPUP}_1](\widehat U_2^{(m)}) \|_{\rm S}}
{ h_0^{1/2}\lam_1^{*2}/4 }.
\end{align}

To bound the numerator on the right-hand side of \eqref{Wedin-72}, we write
\begin{equation}\label{Delta_jh}
\text{TIPUP}_1(\widetilde U_2)-\overline{\E}[\text{TIPUP}_1](\widetilde U_2)
= \sum_{j=1}^3 \Big(\Delta^*_{j,1,h}\big(\widetilde U_2\widetilde U_2^\top\big), h=1,\ldots,h_0\Big)
\in \R^{d_1\times (d_1h_0)}
\end{equation}
as both $\text{TIPUP}_1(\widetilde U_2)$ and $\overline{\E}[\text{TIPUP}_1](\widetilde U_2)$ are linear in
$\widetilde U_2\widetilde U_2^\top$, where for any $\widetilde M_2\in \R^{d_2\times d_2}$
\begin{align*}
& \Delta_{1}^*(\widetilde M_2) := \Delta_{1,1,h}^*(\widetilde M_2)
:= \hbox{$\sum_{t=h+1}^T$}A_1 F_{t-h} A_2^\top \widetilde M_2 E_t^\top/(T-h),
\\ & \Delta_{2}^*(\widetilde M_2) := \Delta_{2,1,h}^*(\widetilde M_2) :=
\hbox{$\sum_{t=h+1}^T$} E_{t-h} \widetilde M_2 A_2 F_t^\top A_1^\top/(T-h),
\\\ & \Delta_{3}^*(\widetilde M_2) := \Delta_{3,1,h}^*(\widetilde M_2) :=
\hbox{$\sum_{t=h+1}^T$}  E_{t-h}\widetilde M_2 E_{t}^\top/(T-h).
\end{align*}
As $\Delta_{j,1,h}^*(\widetilde M_2)$ is linear in $\widetilde M_2$,
the numerator on the right-hand of \eqref{Wedin-72} can be bounded
by
\begin{align}\label{norm-bd}
& \| \text{TIPUP}_1(\widehat U_2^{(m)})-\overline{\E}[\text{TIPUP}_1](\widehat U_2^{(m)})\|_{\rm S}
\\ \notag
\le& \|\text{TIPUP}_1(U_2)-\overline{\E}[\text{TIPUP}_1](U_2)\|_{\rm S}
+ L^{(m)}(2K-2)
 \sum_{j=1}^3 
h_0^{1/2}\max_{h\le h_0}\|\Delta_{j,1,h}^*\|_{1,S,S}
\end{align}
with an application of Cauchy-Schwarz inequality for the sum over $h=1,\ldots,h_0$,
where $\|\Delta_{j,1,h}^*\|_{1,S,S}$ are norms of the
$\R^{d_2\times d_2}\to \R^{d_1\times d_1}$
linear mappings $\Delta_{j,1,h}^*$ defined as
$$
\|\Delta_{j}^*\|_{1,S,S} = \|\Delta_{j,1,h}^*\|_{1,S,S}
:= \max_{\|\widetilde M_2\|_{\rm S}\le 1, \text{rank}(\widetilde M_2)\le r_2}
\big\|\Delta_{j,1,h}^*(\widetilde M_2)\big\|_{\rm S}.
$$
For general $1\le k\le K$, $\Delta_{j,k,h}^*$ is an
$\R^{d_{-k}\times d_{-k}}$ to $\R^{d_k\times d_k}$ mapping,
and \eqref{lm-3-3} of Lemma \ref{lemma:epsilonnet} (iii)
gives the general version of \eqref{norm-bd} with 
$$
\big\|\Delta_{j,k,h}^*\big\|_{k,S,S}
:=\max_{\|\widetilde M_{\ell}\|_{\rm S}\le 1, \text{rank}(\widetilde M_{\ell})\le r_\ell,\forall \ell \neq k}
\big\|\Delta_{j,k,h}^*\big(\odot_{\ell \neq k}\widetilde M_{\ell}\big)\big\|_{\rm S}
$$
because it is applied to
$\widetilde M_{-k} = \odot_{\ell\neq k}\widehat P_\ell^{(m)}
- \odot_{\ell\neq k}P_\ell$ with
$\|\widehat P_\ell^{(m)} - P_\ell\|_{\rm S}\le L^{(m)}$.

We claim that in certain events $\Omega_j, j=1,2,3$, with
$\overline\P(\Omega_j)\ge 1- 5^{-1} \sum_{k=1}^Ke^{-d_k}$,
\begin{equation}\label{Delta-bd}
\big\|\Delta_{j,k,h}^*\big\|_{k,S,S} \le \rho \lam_k^{*2}\big/\big(24(K-1)\big),\quad \forall 1\le k\le K.
\end{equation}
For simplicity, we will only prove this inequality for $k=1$ and $K=2$ in the form of
\begin{equation}\label{Delta-bd-1}
\overline\P\left\{\|\Delta_{j}^*\|_{1,S,S}
= \max_{\|\widetilde M_2\|_{\rm S}\le 1, \text{rank}(\widetilde M_2)\le r_2 }
\big\|\Delta_{j,1,h}^*(\widetilde M_2)\big\|_{\rm S}  \ge \rho \lam_1^{*2}\big/24\right\}
\le 5^{-1}e^{- d_2}
\end{equation}
as the proof of its counter part for general $1\le k\le K$ is similar.

Define the ideal version of the ratio in \eqref{Wedin-72} for general $1\le k\le K$ as
\begin{equation*}
L^{\ideal}_k = \frac{\| \text{TIPUP}_k(U_{-k})-\overline{\E}[\text{TIPUP}_k](U_{-k}) \|_{\rm S}}
{ h_0^{1/2}\lambda_{k}^{*2}/4},
\quad L^{\ideal} = \max_{1\le k\le K}  L^{\ideal}_k.
\end{equation*}
As $U_{-k} = \odot_{j\neq k}U_j$ is true and deterministic, by \eqref{iTIPUP-ideal_k} and \eqref{Delta_jh}, the proof of \eqref{Delta-bd} also implies
\begin{equation}\label{ideal-bd}
L^{\ideal} \le C_{1}^{\tipup}R^{*\ideal}\ \text{ with }\ {R}^{*\ideal}=\max_{1\le k\le K}{R}^{*\ideal}_k
\end{equation}
in an event $\Omega_4$ with $\overline\P(\Omega_4)\le 5^{-1} \sum_{k=1}^Ke^{-d_k}$,
where $R^{*\ideal}_k$ is as in \eqref{iTIPUP-ideal_k}.

Putting together \eqref{L^m_k}, \eqref{Wedin-72}, 
\eqref{norm-bd} and \eqref{Delta-bd}, we find
that in the event $\cap_{j=0}^4 \Omega_j$
\begin{equation}\label{key}
L^{(m+1)}_k
\le L^{\ideal}_k +
L^{(m)}\frac{(6K-6) \max_{j,k,h}\big\|\Delta_{j,k,h}^*\big\|_{k,S,S}}{\lam_k^{*2}/4}
\le L^{\ideal}_k + \rho L^{(m)},
\end{equation}
which implies by induction
\begin{equation}\label{key-2}
L^{(m+1)}\le (1+\cdots+\rho^{m})
L^{\ideal} + \rho^{m+1}L^{(0)},
\end{equation}
and then the conclusions follows from \eqref{TIPUP_k-bd-tail} and \eqref{ideal-bd}.
We divide the rest of the proof into 4 steps to prove
\eqref{Delta-bd-1} for $j=1,2,3$ and \eqref{signal-bd}. 

\smallskip
\noindent\underline{\bf Step 1.}
We prove
\eqref{Delta-bd-1} for the $\Delta_1^*(\widetilde M_2)$ in \eqref{Delta_jh}.
By Lemma \ref{lemma:epsilonnet} (ii), there exist $\widetilde M^{(\ell,\ell')} \in \mathbb R^{d_2\times d_2}$ of the form $W_{\ell}Q_{\ell'}^\top$ with $W_{\ell}\in\R^{d_2\times r_2}$, $Q_{\ell'}\in\R^{d_2\times r_2}$, $1\le \ell,\ell' \le N_{d_2r_2,1/8} := 17^{d_2r_2}$, such that  $\|\widetilde M^{(\ell, \ell')}\|_{\rm S}\le 1$, rank$(\widetilde M^{(\ell, \ell')})\le r_2$ and
\begin{eqnarray*}
\|\Delta_{1}^*\|_{1,S,S}
= \|\Delta_{1,1,h}^*\|_{1,S,S}
\le 2\max_{\ell,\ell'\le N_{d_2r_2,1/8}}
\left\| \sum_{t=h+1}^T \frac{A_1 F_{t-h} A_2^\top \widetilde M^{(\ell, \ell')} E_t^\top} {T-h} \right\|_{\rm S}.
\end{eqnarray*}
To bound $\|\Delta_{1,k,h}^*\|_{1,S,S}$ for general $k\le K$, we just need to replace $\widetilde M^{(\ell, \ell')}$ by $\odot_{j\neq k} \widetilde M_j^{(\ell,\ell')}$
and $N_{d_2r_2,1/8}$ by $N_{d^*_{-k},1/(8K-8)}$ with $d^*_{-k}=\sum_{j\neq k}d_jr_j$
as in Lemma \ref{lemma:epsilonnet} (iii).
We apply the Gaussian concentration inequality to the right-hand side above.
Elementary calculation shows that
\begin{eqnarray*}
&&\left| \left\| \sum_{t=h+1}^T A_1 F_{t-h} A_2^\top \widetilde M^{(\ell, \ell')} E_t^\top \right\|_{\rm S} - \left\| \sum_{t=h+1}^T A_1 F_{t-h} A_2^\top \widetilde M^{(\ell, \ell')} E_t^{*\top} \right\|_{\rm S} \right| \\
&\le& \left\| (A_1F_1A_2^\top,...,A_1 F_{T-h}A_2^\top) \begin{pmatrix}
\widetilde M^{(\ell, \ell')}(E_{h+1}^\top-E_{h+1}^{*\top}) \\
\vdots\\
\widetilde M^{(\ell, \ell')}(E_{T}^\top-E_{T}^{*\top})
\end{pmatrix} \right\|_{\rm S} \\
&\le& \left\| 
(A_1F_1A_2^\top,...,A_1 F_{T-h}A_2^\top)
\right\|_{\rm S}^{1/2}
\left\| \text{diag}\Big(\widetilde M^{(\ell, \ell')}\Big)
\begin{pmatrix} E_{h+1}^\top-E_{h+1}^{*\top} \\
\vdots\\
E_{T}^\top-E_{T}^{*\top}
\end{pmatrix}\right\|_{\rm S} \\
&\le& \sqrt{T} \|\Theta_{1,0}^*\|_{\rm S}^{1/2} \left\|\begin{pmatrix}
E_{h+1}^\top-E_{h+1}^{*\top} \\
\vdots\\
E_{T}^\top-E_{T}^{*\top}
\end{pmatrix} \right\|_{\rm F}.
\end{eqnarray*}
That is, $\left\| \sum_{t=h+1}^T A_1 F_{t-h} A_2^\top \widetilde M^{(\ell, \ell')} E_t^\top \right\|_{\rm S}$ is a $\sqrt{T}\|\Theta_{1,0}^*\|_{\rm S}^{1/2}$ Lipschitz function of $(E_1,\ldots,E_T)$. Employing similar arguments in the proof of Theorem 2 in \citet{chen2022factor}, we have
\begin{eqnarray*}
\bar{\E} \left\| \sum_{t=h+1}^T \frac{A_1 F_{t-h} A_2^\top \widetilde M^{(\ell, \ell')} E_t^\top}{T-h} \right\|_{\rm S} \le \frac{\sigma(8Td_1)^{1/2}}{T-h} \|\Theta_{1,0}^*\|_{\rm S}^{1/2}.
\end{eqnarray*}
Then, by Gaussian concentration inequalities for Lipschitz functions,
\begin{eqnarray*}
\P\left( \left\| \sum_{t=h+1}^T \frac{A_1 F_{t-h} A_2^\top \widetilde M^{(\ell, \ell')} E_t^\top} {T-h} \right\|_{\rm S}
- \frac{\sigma(8Td_1)^{1/2}}{T-h} \|\Theta_{1,0}^*\|_{\rm S}^{1/2}
\ge \frac{\sigma\sqrt{T} }{T-h} \|\Theta_{1,0}^*\|_{\rm S}^{1/2} x\right)\le e^{-\frac{x^2}{2 } }.
\end{eqnarray*}
Hence,
\begin{eqnarray*}
\P\left( \|\Delta_1^*\|_{1,S,S}/2
\ge \frac{\sigma(8Td_1)^{1/2}}{T-h} \|\Theta_{1,0}^*\|_{\rm S}^{1/2}
+ \frac{\sigma\sqrt{T} }{T-h} \|\Theta_{1,0}^*\|_{\rm S}^{1/2} x\right)
\le N_{d_2r_2,1/8}^2 e^{-\frac{x^2}{2} }.
\end{eqnarray*}
As $T\ge 4h_0$ and $K=2$, this implies with $x\asymp \sqrt{d_2r_2}$ that in an event
with at least probability $1-e^{-d_2}/5$,
\begin{eqnarray*}
\|\Delta_1^*\|_{1,S,S}
\le \frac{C_{1,K}^{\iter}\sigma T^{-1/2} \|\Theta_{1,0}^*\|_{\rm S}^{1/2} (\sqrt{d_1}+\sqrt{d_2r_2})}{24(K-1)}
\le \frac{C_{1,K}^{\iter}(1+\sqrt{d_2r_2/d_1})\lambda_1^{*2}R_1^{*\ideal}}{24},
\end{eqnarray*}
with the ${R}_k^{*\ideal}$ in \eqref{iTIPUP-ideal_k} and a constant $C_{1,K}^{\iter}$ depending on $K$ only.
In this event, \eqref{condition1} gives
$24 (\lam^*_k)^{-2}\|\Delta_{1,k,h}^*\|_{k,S,S} \le C_{1,K}^{\iter} ({R}_k^{*\ideal}+R_k^{*\add}) \le \rho $.
Thus, \eqref{Delta-bd-1} holds for $\Delta_1^*(\widetilde M_2)$.

\smallskip
\noindent\underline{\bf Step 2.}
Inequality \eqref{Delta-bd-1} for $\Delta_2^*(\widetilde M_2)$ follow from the same argument as the above step.

\smallskip
\noindent\underline{\bf Step 3.}
Here we prove \eqref{Delta-bd-1} for the $\Delta_3^*(\widetilde M_2)$ in \eqref{Delta_jh}. By Lemma \ref{lemma:epsilonnet} (ii), we can find
$U_2^{(\ell)},U_2^{(\ell')}\in\mathbb R^{d_2\times r_2}$,
$1\le \ell,\ell'\le N_{d_2r_2,1/8}$ such that $\|U_2^{(\ell)}\|_{\rm S}\le 1$, $\|U_2^{(\ell')}\|_{\rm S}\le 1$ and
\begin{equation}\label{pf-th2-step3-1}
\|\Delta_3^*\|_{1,S,S} = \|\Delta_{3,1,h}^*\|_{1,S,S}
\le 2 \max_{1\le \ell, \ell' \le N_{d_2r_2,1/8}}
\left\| \sum_{t=h+1}^T \frac{E_{t-h} U_2^{(\ell)} U_2^{(\ell')\top} E_t^\top} {T-h} \right\|_{\rm S}.
\end{equation}
We split the sum into two terms over the index sets, $S_1= \{ (h,2h] \cup (3h,4h] \cup \cdots \} \cap(h,T]$ and its complement $S_2$ in $(h,T]$, so that $\{E_{t-h},t\in S_a\}$ is independent of $\{E_t, t\in S_a\}$ for each $a=1,2$. Let $n_a=|S_a|r_2$.
Define $G_a=(E_{t-h}U_2^{(\ell)},t\in S_a)\in \R^{d_1\times n_a}$ and
$H_a=(E_{t}U_2^{(\ell')},t\in S_a)\in \R^{d_1\times n_a}$. Then, $G_a$, $H_a$ are two independent Gaussian matrices. Note that
\begin{equation}\label{pf-th2-step3-2}
\left\| \sum_{t\in S_a} \frac{E_{t-h} U_2^{(\ell)} U_2^{(\ell')\top} E_t^\top} {T-h}\right\|_{\rm S} =\left\|\frac{G_a H_a^\top}{T-h} \right\|_{\rm S}.
\end{equation}
Moreover, by Assumption \ref{asmp:error}, $\Var(u^\top \text{vec}(G_a))\le \sigma^2$ and
$\Var(u^\top \text{vec}(H_a))\le \sigma^2$ for all unit vectors $u\in\R^{d_1n_a}$,
so that by Lemme \ref{lm-GH} (i),
\begin{eqnarray*}
\P\Big\{ \|G_aH_a^\top\|_{\rm S}/\sigma^2\ge
d_1+2\sqrt{d_1n_a}
+ x(x+2\sqrt{n_a}+2\sqrt{d_1})
\Big\}
\le e^{-x^2/2} ,\quad x>0.
\end{eqnarray*}
As $\sum_{a=1}^2 n_a=r_2(T-h)$, it follows from \eqref{pf-th2-step3-1}, \eqref{pf-th2-step3-2} and the above inequality that
\begin{eqnarray*}
\P\left\{ \frac{\|\Delta_{3,1,h}^*\|_{1,S,S}}{4\sigma^2} \ge
\frac{(\sqrt{d_1}+x)^2}{T-h} + \frac{\sqrt{2r_2}(\sqrt{d_1}+x)}{\sqrt{T-h}}
\right\}
\le 2 N_{d_2r_2,1/8}^2 e^{-x^2/2}.
\end{eqnarray*}
Thus, with $h_0\le T/4$, $x\asymp \sqrt{d_1}+\sqrt{d^*_{-1}}$ and some constant $C_{1,K}^{\iter}$ depending on $K$ only,
\begin{align}\label{event*}
\|\Delta_{3,1,h}^*\|_{1,S,S}
\le \frac{(C_{1,K}^{\iter}-1) (1-h_0/T)^2 \sigma^2} {24(K-1)} \bigg(\frac{(\sqrt{d_1}+\sqrt{d^*_{-1}})\sqrt{r_{-1}}} {(T-h_0)^{1/2} }
+\frac{(\sqrt{d_1}+\sqrt{d^*_{-1}})^2}{T-h_0}\bigg)
\end{align}
with at least probability $1-e^{- d_1}/5$.
As $\lam_k^*\le \|\Theta_{1,0}^*\|_{\rm S}^{1/2}/(1-h_0/T)^{1/2}$ by \eqref{lam^*_k} and \eqref{Theta^*},
\begin{align*}
{R}_1^{*\ideal}
\ge (\lam_1^*)^{-1} (1-h_0/T)\sigma (T-h_0)^{-1/2}
\sqrt{d_1} + (\lam_1^*)^{-2} \sigma T^{-1/2} \sigma\sqrt{d_1r_{-1}}
\end{align*}
by \eqref{iTIPUP-ideal_k}. Thus, in the event \eqref{event*} and for $k=1$ and $K=2$,
\begin{align*}
& \|\Delta_{3,1,h}^*\|_{1,S,S} \\
\le&\frac{ C_{1,K}^{\iter}-1 }{24} \bigg(\Big(1+\sqrt{d^*_{-1}/d_1}\Big)\lam_1^{*2}{R}_1^{*\ideal}
+ \Big(1+\sqrt{d^*_{-1}/d_1}\Big)^2\lam_1^{*2}\big({R}_1^{*\ideal}\big)^2 \bigg) \\
\le & (\lam_1^{*2}/24)(C_{1,K}^{\iter}-1)(1+{R}_k^{*\ideal}+R_k^{*\add})  ({R}_k^{*\ideal}+R_k^{*\add})
\end{align*}
which is no greater than $\lam_1^{*2}\rho/24$ by the condition
\eqref{condition1}. This yields  \eqref{Delta-bd-1} for $\Delta_3^*(\widetilde M_2)$.

\noindent\underline{\bf Step 4.} Next, we prove \eqref{signal-bd} in the event $\cap_{j=0}^4\Omega_j$. Note that,
\begin{eqnarray*}
&&\|\Theta_{1,h}^*(\widehat U_2^{(m)}) -\Theta_{1,h}^*(U_2)\|_{\rm S} = \left\|\sum_{t=h+1}^T \frac{A_1 F_{t-h} A_2^\top (\widehat U_2^{(m)}\widehat U_2^{(m)\top} -U_2 U_2^\top) A_2  F_t^\top A_1^\top} {T-h} \right\|_{\rm S} \\
&&= \frac{1}{T-h}\left\|(A_1F_1A_2^\top,...,A_1F_{T-h}A_2^\top) \begin{pmatrix}
(\widehat U_2^{(m)}\widehat U_2^{(m)\top} -U_2 U_2^\top) A_2  F_1^\top A_1^\top\\
\vdots\\
(\widehat U_2^{(m)}\widehat U_2^{(m)\top} -U_2 U_2^\top) A_2  F_{T-h}^\top A_1^\top
\end{pmatrix} \right\|_{\rm S} \\
&&\le  \frac{1}{T-h}\left\|(A_1F_1A_2^\top,...,A_1F_{T-h}A_2^\top) \right\|_{\rm S}
\left\| \text{diag}\Big(\widehat U_2^{(m)}\widehat U_2^{(m)\top} -U_2 U_2^\top\Big)
\begin{pmatrix}
 A_2  F_1^\top A_1^\top\\
\vdots\\
A_2  F_{T-h}^\top A_1^\top
\end{pmatrix} \right\|_{\rm S} \\
&&\le \| \widehat U_2^{(m)}\widehat U_2^{(m)\top} -U_2 U_2^\top \|_{\rm S} \| \Theta_{1,0}^* \|_{\rm S} /(1-h/T) \\
&&\le L^{(m)} \| \Theta_{1,0}^* \|_{\rm S}/(1-h_0/T).
\end{eqnarray*}
Hence, by the Cauchy-Schwarz inequality and \eqref{error-free-update},
\begin{eqnarray*}
\| \overline{\E}[\text{TIPUP}_1](\widehat U_2^{(m)}) - \overline{\E}[\text{TIPUP}_1](U_2)\|_{\rm S}
\le \sqrt{h_0}L^{(m)} \| \Theta_{1,0}^* \|_{\rm S}/(1-h_0/T).
\end{eqnarray*}
By \eqref{lam^*_k},
$\lam_1^{*2}h_0^{1/2} =  \sigma_{r_1}\big(\mat1(\Theta_{1,1:h_0}^*)\big)
= \sigma_{r_1}\big(\overline{\E}[\text{TIPUP}_1](U_2)\big)$.
Thus, by Weyl's inequality,
\begin{eqnarray*}
\sigma_{r_1}(\overline{\E}[\text{TIPUP}_1](\widehat U_2^{(m)}))
\ge \lam_1^{*2}h_0^{1/2}  -2\sqrt{h_0}L^{(m)} \| \Theta_{1,0}^* \|_{\rm S}
\ge \sigma_{r_1}(\overline{\E}[\text{TIPUP}_1](U_2))\big/2 = \lam_1^{*2}h_0^{1/2}/2.
\end{eqnarray*}
when $\min_{k\le K} (\lam_k^{*2}/\| \Theta_{k,0}^* \|_{\rm S} )\ge 4L^{(m)} $.
We prove this condition by induction in the event $\cap_{j=0}^4\Omega_j$.
By \eqref{condition1} and \eqref{TIPUP_k-bd-tail},
$4L^{(0)}\|\Theta_{1,0}^* \|_{\rm S} \le 4C_1^{\tipup}R^{*(0)}\|\Theta_{1,0}^* \|_{\rm S}\le \min_{k\le K}\lam_k^{*2}$.
Given the induction assumption $4L^{(m)} \| \Theta_{1,0}^* \|_{\rm S}\le \min_{k\le K}\lam_k^{*2}$,
\eqref{signal-bd} holds for the same $m$, so that \eqref{key-2}, \eqref{ideal-bd} and \eqref{TIPUP_k-bd-tail},
\begin{align*}
L^{(m+1)}
\le C_1^{\tipup}\big\{2(1 +\cdots+ \rho^{m})R^{*\ideal} + \rho^{m+1} R^{*(0)}\big\}
\le  C_1^{\tipup}2(1- \rho)^{-1} R^{*(0)}.
\end{align*}
It then follows from \eqref{condition1} that
$4L^{(m+1)}\| \Theta_{1,0}^* \|_{\rm S} \le C_1^{\tipup}8(1- \rho)^{-1}R^{*(0)}
\| \Theta_{1,0}^* \|_{\rm S}\le \min_{k\le K}\lam_k^{*2}$.
This completes the induction and the proof of the entire theorem.

\section{Proof of Theorem \ref{thm:itopup}} \label{proofth1}

Again, we focus on the case of $K=2$ as the iTOPUP also begins with mode-$k$ matrix unfolding. In particular, we sometimes give explicit expressions only in the case of $k=1$ and $K=2$. For $K=2$, we observe a matrix time series with $X_t=A_1 F_t A_2^\top + E_t \in \mathbb R^{d_1\times d_2}$.
Recall $\overline\E(\cdot)=\E(\cdot|\{ \cF_1,...,\cF_T\})$.
Let $U_1$, $U_2$ be the left singular matrices of $A_1$ and $A_2$ respectively with $r_k=$rank$(U_k)=$rank$(A_k)$. Recall $\odot$ is kronecker product and $\otimes$ is tensor product.

We outline the proof as follows, which has exactly the same structure as the proofs of Theorem \ref{thm:itipup}.

Let $L^{(m)}_k=\|\widehat P_k^{(m)} - P_k\|_{\rm S}$ and $L^{(m)}=\max_{k\le K}L^{(m)}_k$ as in \eqref{L^m_k}. From Proposition \ref{prop:topup}, $\overline\E\big[L^{(0)}_k\big]\lesssim {R}^{\topup}_k$
as we mentioned in \eqref{bound:topup0}. 
By applying the Gaussian concentration inequality for Lipschitz functions and Lemme \ref{lm-GH}
in their analysis, we have
\begin{equation}\label{TOPUP_k-bd-tail}
L^{(0)} \le C_{1}^{\topup}R^{\topup}/2 \le C_{1}^{\topup}R^{(0)} \ \text{ with }\ {R}^{\topup}=\max_{1\le k\le K}{R}^{\topup}_k, 
\end{equation}
in an event $\Omega_0$ with $\overline\P(\Omega_0)\ge 1- 8^{-1} \sum_{k=1}^Ke^{-d_k}$.
This is similar to \eqref{ideal-bd:top} below.

For any matrices ${\widetilde U}_j\in \R^{d_j\times r_j}$ with $r_{K+j}=r_j$, define
\begin{align} \label{eq:itopup:def}
& {\rm{TOPUP}}_k({\widetilde U}_j, 1\le j\le 2K, j\neq k, j\neq K+K) \notag\\
=& \left(
{\rm{mat}}_k\Big(\widehat \Sigma_h\times_{j=1}^{k-1}{\widetilde U}_j \times_{j=k+1}^{K+k-1}{\widetilde U}_j
\times_{j=K+k+1}^{2K}{\widetilde U}_j\Big)
\ h=1,...,h_0 \right)
\end{align}
and its noiseless version $\overline{\E}[{\rm{TOPUP}}_k]$. Here both
${\rm{TOPUP}}_k$ and $\overline{\E}[{\rm{TOPUP}}_k]$ are viewed as
$\R^{d_k\times (d_kr_{-k}^2h_0)}$-valued $(2K-2)$-linear mappings
with input ${\widetilde U}_j$. When ${\widetilde U}_{K+j} = {\widetilde U}_j$ for all $j\neq k$,
we write \eqref{eq:itopup:def} as ${\rm{TOPUP}}_k({\widetilde U}_{-k})$ and
its noiseless version $\overline{\E}[{\rm{TOPUP}}_k]({\widetilde U}_{-k})$.
After the initialization with $\widehat U_k^{(0)}$,
the algorithm iteratively produces estimates $\widehat U_k^{(m+1)}$
as the rank-$r_k$ left singular matrix of
${\rm{TOPUP}}_k\big(\widehat U_{1:(k-1)}^{(m+1)},\widehat U_{(k+1):K}^{(m)}\big)$.
For $K=2$ and $k=1$,
\begin{equation*}
\text{TOPUP}_1(\widetilde U_2, \widetilde U_4)
= \text{mat}_1\bigg(\sum_{t=h+1}^T \frac{X_{t-h}\widetilde U_2 \otimes X_{t} \widetilde U_4}{T-h}, h=1,\ldots,h_0\bigg)
\in\R^{d_1\times (d_1r_2^2h_0)}.
\end{equation*}
for any $\widetilde U_j\in \R^{d_2\times r_2}, j=2,4$.
Given $\widehat U_2^{(m)}$, the $(m+1)$-th iteration produces estimates
\begin{align*}
\widehat U_1^{(m+1)}
= \text{LSVD}_{r_1}\big(\text{TOPUP}_1 \big(\widehat U_2^{(m)}\big)\big),\quad
\widehat P_1^{(m+1)} = \widehat U_1^{(m+1)}\widehat U_1^{(m+1)\top}.
\end{align*}
When $\text{rank}\big(\overline{\E}[\text{TOPUP}_1 ](\widehat U_2^{(m)})\big)=r_1$, the ``noiseless" version of this update is error free,
\begin{align*}
U_1 =\text{LSVD}_{r_1}\big(\overline{\E}[ \text{TOPUP}_1 ](\widehat U_2^{(m)})\big),\quad P_1 = U_1U_1^\top.
\end{align*}
Let $V\in \R^{d_1r_2^2h_0 \times r_1}$ be the right singular matrices associated with the top $r_1$ singular values of $\overline{\E}[ \text{TOPUP}_1 ](\widehat U_2^{(m)})$. Write
\begin{align*}
{\rm error}_1 =&  \frac{2\big\| \text{TOPUP}_1 (\widehat U_2^{(m)})-\overline{\E}[ \text{TOPUP}_1 ](\widehat U_2^{(m)}) \big\|_{\rm S}}
{\sigma_{r_1}(\overline{\E}[ \text{TOPUP}_1 ](\widehat U_2^{(m)}))} ,\\
{\rm error}_2 =&  \frac{2\Big\| \left(\text{TOPUP}_1 (\widehat U_2^{(m)})-\overline{\E}[ \text{TOPUP}_1 ](\widehat U_2^{(m)})\right)V \Big\|_{\rm S}}
{\sigma_{r_1}(\overline{\E}[ \text{TOPUP}_1 ](\widehat U_2^{(m)}))} .
\end{align*}
Thus, as $\text{TOPUP}_1(\widehat U_2^{(m)})$ is a rectangular matrix, by Lemma \ref{lm-pertubation}, 
\begin{align*}
L_1^{(m+1)} = \big\|\widehat P_1^{(m+1)} - P_1\big\|_{\rm S}
\le {\rm error}_2 + {\rm error}_1^2.
\end{align*}
We will prove by induction that
$\sigma_{r_k}\big(\overline{\E}[\text{TOPUP}_k]\big(\widehat U_{1:(k-1)}^{(m)},\widehat U_{(k+1):K}^{(m-1)}\big)\big)$,
the general version of the denominator on the right-hand side above,
is no smaller than a half of its ideal version as in \eqref{lam_k}, e.g.
\begin{equation}\label{signal-bd:top}
2 \sigma_{r_1}(\overline{\E}[ \text{TOPUP}_1 ](\widehat U_2^{(m)}))
\ge \sigma_{r_1}(\overline{\E}[ \text{TOPUP}_1 ](U_2)) = h_0^{1/2}\lam_1^{2},
\end{equation}
in the case of $k=1$ and $K=2$.
It would then follow that
\begin{align}\label{Wedin-72:top}
L_1^{(m+1)} = \big\|\widehat P_1^{(m+1)} - P_1\big\|_{\rm S}
\le & \frac{\Big\| \left(\text{TOPUP}_1 (\widehat U_2^{(m)})-\overline{\E}[ \text{TOPUP}_1 ](\widehat U_2^{(m)})\right)V \Big\|_{\rm S}}
{ h_0^{1/2}\lam_1^{2}/4 }  \\
&+ \left( \frac{\| \text{TOPUP}_1 (\widehat U_2^{(m)})
- \overline{\E}[ \text{TOPUP}_1 ](\widehat U_2^{(m)}) \|_{\rm S}}
{ h_0^{1/2}\lam_1^{2}/4 } \right)^2. \notag
\end{align}

To bound the numerator on the right-hand side of \eqref{Wedin-72:top},
we write
\begin{align*}
\Delta\big(\widetilde U_2, \widetilde U_4\big) = \text{TOPUP}_1\big(\widetilde U_2, \widetilde U_4\big)
- \overline{\E}\big[\text{TOPUP}_1\big]\big(\widetilde U_2, \widetilde U_4\big)
\end{align*}
and notice that for any orthonormal matrices $Q_j \in \R^{d_2\times r_2}$
\begin{align*}
\big\|\Delta\big(Q_2 Q_2^\top\widetilde U_2, Q_4Q_4^\top\widetilde U_4\big)\big\|_{\rm S}
\le \big\|\Delta\big(Q_2, Q_4\big)\big\|_{\rm S}\|Q_2^\top \widetilde U_2\|_{\rm S}\|Q_4^\top\widetilde U_4\|_{\rm S}
\end{align*}
as the outer product is taken in \eqref{eq:itopup:def}. Moreover, because
$\Delta\big(\widetilde U_2, \widetilde U_4\big)$ is bilinear,
\begin{align*}
\big\|\Delta\big(\widetilde U_2,\widetilde U_2\big)\big\|_{\rm S}
&\le \big\|\Delta\big(P_2\widetilde U_2, P_2\widetilde U_2\big)\big\|_{\rm S}
+ \big\|\Delta\big(P_2\widetilde U_2, P_2^\perp\widetilde U_2\big)\big\|_{\rm S}
+ \big\|\Delta\big(P_2^\perp\widetilde U_2, \widetilde U_2\big)\big\|_{\rm S} \\
&\le \big\|\Delta\big(U_2, U_2\big)\big\|_{\rm S}
+ \big\|\Delta\big(U_2, Q_2\big)\big\|_{\rm S}\|P_2^\perp\widetilde U_2\|_{\rm S}
+ \big\|\Delta\big(Q_2, \widetilde U_2\big)\big\|_{\rm S}\|P_2^\perp\widetilde U_2\|_{\rm S},
\end{align*}
where $Q_2 =$LSVD$_{r_k}(P_2^\perp\widetilde U_2)$ and $P_2^\perp = I_{d_2}-P_2$.
Thus, due to $\|P_2^\perp \widehat U_2^{(m)}\|_{\rm S}\le \|P_2 - \widehat P_2^{(m)}\|_{\rm S}\le L^{(m)}$,
the numerator of the second term on the right-hand of \eqref{Wedin-72:top} can be bounded
by
\begin{align}\label{norm-bd:top}
& \| \text{TOPUP}_1 (\widehat U_2^{(m)})-\overline{\E}[ \text{TOPUP}_1 ](\widehat U_2^{(m)})\|_{\rm S}
\\ \notag
\le& \| \text{TOPUP}_1 (U_2)-\overline{\E}[ \text{TOPUP}_1 ](U_2)\|_{\rm S}
+ L^{(m)} (2K - 2)
 \sum_{j=1}^3
h_0^{1/2}\max_{h\le h_0}\|\Delta_{j,1,h}\|_{1,S,S},
\end{align}
where $\|\Delta\|_{1,S,S} := \max_{\|\widetilde U_j\|_{\rm S}\le 1, \text{rank}(\widetilde U_j)= r_2, j=2,4}
\big\|\Delta(\widetilde U_2,\widetilde U_4)\big\|_{\rm S}$ for any bilinear $\Delta$ and
\begin{align*}
\Delta_{1}\big( \widetilde U_2,\widetilde U_4 \big) := \Delta_{1,1,h}\big( \widetilde U_2,\widetilde U_4 \big)
&:= \frac{1}{T-h}\sum_{t=h+1}^T \mat1( A_1 F_{t-h} A_2^\top \widetilde U_2 \otimes E_t  \widetilde U_4 ),
\\
\Delta_{2}\big( \widetilde U_2,\widetilde U_4 \big) := \Delta_{2,1,h}\big( \widetilde U_2,\widetilde U_4 \big)
&:=\frac{1}{T-h}\sum_{t=h+1}^T \mat1( E_{t-h} \widetilde U_2 \otimes A_1 F_t^\top A_2^\top  \widetilde U_4 ),
\\
\Delta_{3}\big( \widetilde U_2,\widetilde U_4 \big) := \Delta_{3,1,h}\big( \widetilde U_2,\widetilde U_4 \big)
&:= \frac{1}{T-h}\sum_{t=h+1}^T  \mat1( E_{t-h}\widetilde U_2 \otimes E_{t}  \widetilde U_4 ).
\end{align*}
We claim that in certain events $\Omega_j, j=1,2,3$, with
$\overline\P(\Omega_j)\ge 1- 8^{-1} \sum_{k=1}^Ke^{-d_k}$, $\rho<1$,
\begin{equation}\label{Delta-bd:top}
\big\|\Delta_{j,k,h}\big\|_{k,S,S} \le \rho \lam_k^{2}\big/\big( 48(K-1) \big),\quad \forall 1\le k\le K,
\end{equation}
in \eqref{norm-bd:top}.
Similarly, we claim that in certain events $\Omega_{j+3}, j=1,2,3$, with $\overline\P(\Omega_{j+3})\ge 1- 8^{-1} \sum_{k=1}^Ke^{-d_k}$, $\rho<1$,
\begin{equation}\label{Delta-bd:top2}
\big\|\Delta_{j,k,h}V\big\|_{k,S,S} \le \rho \lam_k^{2}\big/\big( 48(K-1) \big),\quad \forall 1\le k\le K.
\end{equation}
For simplicity, we will only prove this inequality for $k=1$ and  $K=2$,
\begin{equation}\label{Delta-bd-1:top}
\overline\P\left\{\|\Delta_{j}\|_{1,S,S}
= \max_{\|\widetilde M_2\|_{\rm S}\le 1}
\big\|\Delta_{j,1,h}\big( \widetilde U_2,\widetilde U_4 \big)\big\|_{\rm S}  \ge \rho \lam_1^{2}\big/ 48 \right\}
\le 8^{-1}e^{- d_2}.
\end{equation}

Define the ideal version of the ratio in \eqref{Wedin-72:top} for general $1\le k\le K$ as
\begin{align*}
L^{\ideal}_k  =& \frac{ 4 \left\| \left(\text{TOPUP}_k (U_{-k}) - \overline{\E}[ \text{TOPUP}_k ]( U_{-k} ) \right) V \right\|_{\rm S}}
{\sigma_{r_k}(\overline{\E}[ \text{TOPUP}_k ](U_{-k}))}
+ \left (\frac{ 4 \| \text{TOPUP}_k (U_{-k})  - \overline{\E}[ \text{TOPUP}_k ]( U_{-k} ) \|_{\rm S}} {\sigma_{r_k}(\overline{\E}[ \text{TOPUP}_k ](U_{-k}))} \right)^2, \\
L^{\ideal} =& \max_{1\le k\le K}  L^{\ideal}_k.
\end{align*}
Note that $\sigma_{r_k}(\overline{\E}[ \text{TOPUP}_k ]( U_{-k} )) = h_0^{1/2}\lambda_k^2$. By \eqref{iTOPUP-ideal_k}, the proof of \eqref{Delta-bd:top} and \eqref{Delta-bd:top2} also implies
\begin{equation}\label{ideal-bd:top}
L^{\ideal} \le C_{1}^{\topup}R^{\ideal}\ \text{ with }\ {R}^{\ideal}=\max_{1\le k\le K}{R}^{\ideal}_k
\end{equation}
in an event $\Omega_7$ with $\overline\P(\Omega_7)\le 8^{-1} \sum_{k=1}^Ke^{-d_k}$,
where $R^{\ideal}_k$ is as in \eqref{iTOPUP-ideal_k}.

It follows from \eqref{L^m_k}, \eqref{Wedin-72:top}, 
\eqref{norm-bd:top}, \eqref{Delta-bd:top} and \eqref{Delta-bd:top2} that in the event $\bigcap_{j=0}^7 \Omega_j$
\begin{align}\label{key:top}
L^{(m+1)}_k
\le&  L^{\ideal}_k + \frac{L^{(m)} \max_{j,k,h}\big\|\Delta_{j,k,h}V\big\|_{k,S,S}}{\lam_k^{2} /(24(K-1))}
+ \left( \frac{L^{(m)} \max_{j,k,h}\big\|\Delta_{j,k,h}\big\|_{k,S,S} }{\lam_k^{2} /(24(K-1) )}\right)^2  \\
\le& L^{\ideal}_k + (\rho L^{(m)}/2)^2 +\rho L^{(m)}/2 \le L^{\ideal}_k + \rho L^{(m)} , \notag
\end{align}
which implies by induction
\begin{equation}\label{key-2:top}
L^{(m+1)}\le (1+\cdots+\rho^{m}) L^{\ideal} + \rho^{m+1}L^{(0)},
\end{equation}
and then the conclusions follows from \eqref{TOPUP_k-bd-tail} and \eqref{ideal-bd:top}.
%
Again, we divide the rest of the proof into 4 steps to prove
\eqref{Delta-bd-1:top} for $j=1,2,3$ and \eqref{signal-bd:top}. 

\smallskip
\noindent\underline{\bf Step 1.}
We prove
\eqref{Delta-bd-1:top} for the $\Delta_1\big( \widetilde U_2,\widetilde U_4 \big)$.
By Lemma \ref{lemma:epsilonnet} (ii), there exist $U_2^{(\ell)},U_2^{(\ell')}\in\mathbb R^{d_2\times r_2}$, $1\le \ell,\ell' \le N_{d_2r_2,1/8} := 17^{d_2r_2}$, such that  $\| U_2^{(\ell)}\|_{\rm S}\le 1$, $\| U_2^{(\ell')}\|_{\rm S}\le 1$ and
\begin{eqnarray*}
\|\Delta_{1}\|_{1,S,S}
= \|\Delta_{1,1,h}\|_{1,S,S}
\le 2\max_{\ell,\ell'\le N_{d_2r_2,1/8}}
\left\| \sum_{t=h+1}^T \frac{ \mat1(A_1 F_{t-h} A_2^\top U_2^{(\ell)} \otimes E_t U_2^{(\ell')} )} {T-h} \right\|_{\rm S}.
\end{eqnarray*}
We apply the Gaussian concentration inequality to the right-hand side above. Elementary calculation shows that
\begin{eqnarray*}
&&\left| \left\| \sum_{t=h+1}^T \mat1(A_1 F_{t-h} A_2^\top U_2^{(\ell)} \otimes E_t U_2^{(\ell')} ) \right\|_{\rm S} - \left\| \sum_{t=h+1}^T \mat1(A_1 F_{t-h} A_2^\top U_2^{(\ell)} \otimes E_t^* U_2^{(\ell')} )  \right\|_{\rm S} \right| \\
&\le& \left\| \sum_{t=h+1}^T \mat1(A_1 F_{t-h}A_2^\top U_2^{(\ell)} \otimes (E_t-E_t^*) U_2^{(\ell')} )  \right\|_{\rm S} \\
&\le& \left\| (\mat1(A_1 F_1 A_2^\top\otimes I_{d_1}),...,\mat1(A_1 F_{T-h} A_2^\top \otimes I_{d_1})) \begin{pmatrix}
U_2^{(\ell)}\odot I_{d_1}\odot (E_{h+1}-E_{h+1}^{*})U_2^{(\ell')} \\
\vdots\\
U_2^{(\ell)}\odot I_{d_1}\odot(E_{T}-E_{T}^{*})U_2^{(\ell')}
\end{pmatrix} \right\|_{\rm S} \\
&\le& \sqrt{T} \|\Theta_{1,0}^*\|_{\rm S}^{1/2} \|U_2^{(\ell)}\|_{\rm S} \|U_2^{(\ell')}\|_{\rm S} \left\|\begin{pmatrix}
E_{h+1}-E_{h+1}^{*} \\
\vdots\\
E_{T}-E_{T}^{*}
\end{pmatrix} \right\|_{\rm F}.
\end{eqnarray*}
That is, $\left\| \sum_{t=h+1}^T \mat1(A_1 F_{t-h}A_2^\top U_2^{(\ell)} \otimes E_t U_2^{(\ell')} ) \right\|_{\rm S}$ is a $\sqrt{T}\|\Theta_{1,0}^*\|_{\rm S}^{1/2}$ Lipschitz function in $(E_1,...,E_T)$. Employing similar arguments in the proof of Theorem 1 in \citet{chen2022factor}, we have
\begin{eqnarray*}
\overline\E \left\| \sum_{t=h+1}^T \frac{\mat1(A_1 F_{t-h} A_2^\top U_2^{(\ell)} \otimes E_t U_2^{(\ell')} )}{T-h} \right\|_{\rm S} \le  \frac{\sigma(2T)^{1/2}(\sqrt{d_1}+\sqrt{d_1r_2^2})}{T-h} \|\Theta_{1,0}^*\|_{\rm S}^{1/2}
\end{eqnarray*}
Then, by Gaussian concentration inequalities for Lipschitz functions,
\begin{align*}
&\P\left( \left\| \sum_{t=h+1}^T \frac{\mat1(A_1 F_{t-h}A_2^\top U_2^{(\ell)} \otimes E_t U_2^{(\ell')} )}{T-h} \right\|_{\rm S} - \frac{\sigma(2T)^{1/2}(\sqrt{d_1}+\sqrt{d_1r_2^2})}{T-h} \|\Theta_{1,0}^*\|_{\rm S}^{1/2} \ge \frac{\sigma\sqrt{T} }{T-h} \|\Theta_{1,0}^*\|_{\rm S}^{1/2} x\right) \\
&\le 2e^{-\frac{x^2}{2 } }.
\end{align*}
Hence,
\begin{align*}
&\P\left( \|\Delta_{1}\|_{1,S,S}/2\ge \frac{\sigma(2T)^{1/2}(\sqrt{d_1}+\sqrt{d_1r_2^2})}{T-h} \|\Theta_{1,0}^*\|_{\rm S}^{1/2} + \frac{\sigma\sqrt{T} }{T-h} \|\Theta_{1,0}^*\|_{\rm S}^{1/2} x\right) \\ &\le 2N_{d_2r_2,1/8}^2 e^{-\frac{x^2}{2} }.
\end{align*}
As $T\ge 4h_0$, this implies with $x\asymp \sqrt{d_2r_2}$ that in an event
with at least probability $1-e^{-d_2}/8$,
\begin{align*}
\|\Delta_1\|_{1,S,S}
&\le \frac{C_{1,K}^{\iter}\sigma T^{-1/2} \|\Theta_{1,0}^*\|_{\rm S}^{1/2} (\sqrt{d_1}r_2+\sqrt{d_2r_2})}{ 96(K-1) } \\
&\le \frac{C_{1,K}^{\iter}(\lambda_1^{2}R_1^{\ideal}+\sigma T^{-1/2}\|\Theta_{1,0}^*\|_{\rm S}^{1/2}\sqrt{d_2r_2})}{ 96 }
\end{align*}
with the ${R}_k^{\ideal}$ in \eqref{iTOPUP-ideal_k} and a constant $C_{1,K}^{\iter}$ depending on $K$ only.
In this event, \eqref{condition1n} gives
$48 (\lam_k)^{-2}\|\Delta_{1,k,h}\|_{k,S,S} \le C_{1,K}^{\iter} ({R}_k^{\ideal}+R_k^{\add}) \le \rho$.
Thus, \eqref{Delta-bd-1:top} holds for $\Delta_1^*(\widetilde M_2)$.

\smallskip
\noindent\underline{\bf Step 2.}
Note that
\begin{align*}
\|\Delta_2\|_{1,S,S}= \max_{\substack{ \widetilde U_2\in \mathbb R^{d_2\times r_2}, \widecheck U_2\in \mathbb R^{d_2\times r_2},\\ \|\widetilde U_2\|_{\rm S}\le 1, \|\widecheck U_2\|_{\rm S}\le 1 } } \left\| \sum_{t=h+1}^T \frac{\mat1(E_{t-h} \widetilde U_2\otimes U_1^\top A_1 F_t A_2^\top \widecheck U_2 )} {T-h} \right\|_{\rm S} .
\end{align*}
Then, inequality \eqref{Delta-bd-1:top} for
$\Delta_2\big( \widetilde U_2,\widetilde U_4 \big)$ follow from the same argument as the above step.

\smallskip
\noindent\underline{\bf Step 3.}
Now we prove \eqref{Delta-bd-1:top} for the $\Delta_3\big( \widetilde U_2,\widetilde U_4 \big)$. We split the sum into two terms over the index sets, $S_1=\{(h,2h]\cup(3h,4h]\cup\cdots\} \cap(h,T]$ and its complement $S_2$ in $(h,T]$, so that $\{E_{t-h},t\in S_a\}$ is independent of $\{E_t, t\in S_a\}$ for each $a=1,2$. Let $n_a=|S_a|$.

By Lemma \ref{lemma:epsilonnet} (ii), we can find
$U_2^{(\ell)},U_2^{(\ell')}\in\mathbb R^{d_2\times r_2}$,
$1\le \ell,\ell'\le N_{d_2r_2,1/8}$ such that $\|U_2^{(\ell)}\|_{\rm S}\le 1$, $\|U_2^{(\ell')}\|_{\rm S}\le 1$. In this case,
\begin{equation}\label{pf-th1-step3-1}
\|\Delta_3\|_{1,S,S} = \|\Delta_{3,1,h}\|_{1,S,S}
\le 2 \max_{1\le \ell, \ell' \le N_{d_2r_2,1/8}}
\left\| \sum_{t=h+1}^T \frac{\mat1(E_{t-h} U_2^{(\ell)} \otimes E_t U_2^{(\ell')} )} {T-h} \right\|_{\rm S}.
\end{equation}
Define $G_a=(E_{t-h}U_2^{(\ell)},t\in S_a)$ and $H_a=(E_{t}U_2^{(\ell')},t\in S_a)$. Then, $G_a$, $H_a$ are two independent Gaussian matrices. By Lemma \ref{lm-GH}(ii), for any $x>0$,
\begin{align*}
&\P\left(\left\| \sum_{t\in S_a} \mat1(E_{t-h} U_2^{(\ell)} \otimes E_t U_2^{(\ell')} ) \right\|_{\rm S}  \ge d_1\sqrt{r_2}+2r_2 \sqrt{d_1 n_a}+x^2+\sqrt{n_a}x+3\sqrt{d_1r_2}x          \right) \\
&\le   2e ^{-x^2/2}.
\end{align*}
As in the derivation of $\|\Delta_3^*\|_{1,S,S}$ in the proof of Theorem \ref{thm:itipup}, we have, with $x\asymp \sqrt{d_2r_2}$ and some constant $C_{1,K}^{\iter}$ depending on $K$ only,
\begin{align*}
\P\left( \|\Delta_{3,1,h}\|_{1,S,S}
\ge \frac{C_{1,K}^{\iter}\sigma^2}{ 96 } \left(\frac{r_2\sqrt{d_1}+\sqrt{d_2 r_2}}{T^{1/2}} +\frac{d_1\sqrt{r_2}+d_2r_2+r_2\sqrt{d_1d_2}}{T} \right)
\right)
\le e^{- d_2}/8.
\end{align*}
This yields \eqref{Delta-bd-1:top} for $\Delta_3\big( \widetilde U_2,\widetilde U_4 \big)$
as in the end of Step 1 for $\Delta_1\big( \widetilde U_2,\widetilde U_4 \big)$.

\smallskip
\noindent\underline{\bf Step 4.}
Next, we consider the $r_1$-th largest singular value of $\sigma_{r_1}(\overline{\E} [ \text{TOPUP}_1 ](\widehat U_2^{(m)}) )$ in the event $\cap_{j=0}^7\Omega_j$. By definition, the left singular subspace of $\overline{\E} [ \text{TOPUP}_1 ](\widehat U_2^{(m)})$ is $U_1$. Then,
\begin{align*}
&\sigma_{r_1}(\overline{\E} [ \text{TOPUP}_1 ](\widehat U_2^{(m)}) ) \\
=& \sigma_{r_1}\left(\mat1\left(\sum_{t=h+1}^T \frac{A_1 F_{t-h} A_2^\top \widehat U_2^{(m)}\otimes A_1 F_t A_2^\top\widehat U_2^{(m)} } {T-h},h=1,...,h_0\right) \right)  \\
=& \sigma_{r_1}\left(\mat1\left(\sum_{t=h+1}^T \frac{A_1 F_{t-h} A_2^\top U_2U_2^\top \widehat U_2^{(m)}\otimes A_1 F_t A_2^\top U_2 U_2^\top \widehat U_2^{(m)} } {T-h},h=1,...,h_0\right) \right)  \\
=& \sigma_{r_1}\left(\mat1\left(\sum_{t=h+1}^T \frac{A_1 F_{t-h} A_2^\top U_2 \otimes A_1 F_t A_2^\top U_2 }{T-h},h=1,...,h_0 \right)\cdot\left(U_2^\top \widehat U_2^{(m)} \odot I_{d_1} \odot U_2^\top \widehat U_2^{(m)} \odot I_{h_0} \right) \right)  \\
\ge & \sigma_{r_1}\left(\mat1\left(\sum_{t=h+1}^T \frac{A_1 F_{t-h} A_2^\top U_2 \otimes A_1 F_t A_2^\top U_2 }{T-h},h=1,...,h_0 \right)\right)\cdot \sigma_{\min}\left(U_2^\top \widehat U_2^{(m)} \odot I_{d_1} \odot U_2^\top \widehat U_2^{(m)} \odot I_{h_0} \right)  \\
\ge & \sigma_{r_1}\left(\mat1\left(\sum_{t=h+1}^T \frac{A_1 F_{t-h} A_2^\top \otimes A_1 F_t A_2^\top }{T-h},h=1,...,h_0 \right)\right)\cdot \sigma_{\min}\left(U_2^\top \widehat U_2^{(m)} \right) \cdot \sigma_{\min}\left( U_2^\top \widehat U_2^{(m)} \right)  \\
\ge & \sqrt{h_0} \lambda_1^2 (1-L^{(m)2}). 
\end{align*}
The last step follows from
the definitions in \eqref{loss} and \eqref{L^m_k}.
If $L^{(m)}\le 1/2$, then $$\sigma_{r_1}(\overline{\E} [ \text{TOPUP}_1 ](\widehat U_2^{(m)}) ) \ge \sqrt{h_0}\lambda_1^2/2.$$
By \eqref{lam_k},
$\lam_1^{2}h_0^{1/2} =  \sigma_{r_1}\big({\text{mat}}_1(\Theta_{1,1:h_0})\big)
= \sigma_{r_1}\big({\text{mat}}_1(\overline{\E}[\text{TOPUP}_1](U_2)\big)$.
We prove this condition in the event $\cap_{j=0}^7\Omega_j$.
By \eqref{condition1n} and \eqref{TOPUP_k-bd-tail},
$L^{(0)}\le C_1^{\topup}R^{(0)} \le 1/2$.
By induction, given $L^{(m)} \le 1/2$,
\eqref{signal-bd:top} holds for the same $m$. Applying \eqref{TOPUP_k-bd-tail}, \eqref{ideal-bd:top} and \eqref{key-2:top},
\begin{align*}
L^{(m+1)}
&\le C_1^{\topup}\big\{2(1 +\cdots+ \rho^{m})R^{\ideal} + \rho^{m+1} R^{(0)}\big\}
\le  C_1^{\topup}2(1- \rho)^{-1} R^{(0)} \\
&\le 1/2.
\end{align*}
This completes the induction and the proof of the entire theorem.

\section{Proof of Theorem \ref{lowerbdd}} \label{proofth3}

Without loss of generality, we can assume $\sigma=1$. Within the probability space \eqref{eq:tenfm1sp} $\sP(T,d_1,...,d_K,\lambda)$, we study a specific model with $\cF_{2t}=\cF_{2t-1}$ and $\cF_{t,i_1,...,i_K}\overset{i.i.d.}{\sim} N(0,1)$ for all $1\le t\le \lfloor T/2 \rfloor,1\le i_k\le r_k$. Taking an average for each $\cX_{2t}$ and $\cX_{2t-1}$, $1\le t\le \lfloor T/2 \rfloor$, we reduce by sufficiency the model to
\begin{align*}
&\widetilde\sP(\lfloor T/2 \rfloor,d_1,...,d_K,\lambda) =\Big\{\cX_1,...,\cX_{\lfloor T/2 \rfloor}:\cX_t=\lambda \widetilde \cF_t\times_1  U_1\times_2...\times_K U_K+\widetilde \cE_t, \text{ with }  U_k\in\R^{d_k\times r_k},   \\
&\qquad\qquad U_k^\top U_k=I_{r_k}, 1\le k\le K, \widetilde \cF_{t,i_1,...,i_K} \overset{i.i.d.}{\sim} N(0,1/2), \{\widetilde \cF_t\}_{t=1}^{\lfloor T/2 \rfloor} \text{ independent of } \{\widetilde\cE_t\}_{t=1}^{\lfloor T/2 \rfloor},   \notag\\
&\qquad\qquad \widetilde \cE_{t,j_1,...,j_K}\overset{i.i.d.}{\sim} N(0,1/2) \text{ for all } 1\le t\le \lfloor T/2 \rfloor, 1\le i_k \le r_k, 1\le j_k\le d_k, 1\le k\le K \Big\}. \notag
\end{align*}
For notation convenience, in the following of this section, we assume $r_k=1$ for all $1\le k\le K$ and study the probability space
\begin{align}\label{eq:tenfm2sp}
&\widetilde\sP(T,d_1,...,d_K,\lambda) =\left\{\cX_1,...,\cX_{T}:\cX_t=\lambda f_t\times_1  a_1\times_2...\times_K a_K + \cE_t, \text{ with }  a_k\in\R^{d_k}, \right.  \\
&\qquad\qquad \| a_k\|_2=1, 1\le k\le K, f_t \overset{i.i.d.}{\sim} N(0,1), \{f_t\}_{t=1}^T \text{ independent of } \{\cE_t\}_{t=1}^T,   \notag\\
&\qquad\qquad \left. \cE_{t,j_1,...,j_K}\overset{i.i.d.}{\sim} N(0,1) \text{ for all } 1\le t\le T, 1\le j_k\le d_k, 1\le k\le K \right\}. \notag
\end{align}

We first introduce some additional notation. For any probability distributions $\P$ and $\Q$, define total variation distance as $\TV(\P,\Q)=\sup_B|\P(B)-\Q(B)|$. We also write $\TV(p,q)$ if $p,q$ are the densities of $\P$ and $\Q$, respectively. Define
\begin{align*}
\tau_N=\frac{\kappa}{N}, \qquad \eta_N =\frac{\kappa}{(K+2)^4N(\log N)^2},
\end{align*}
where $\kappa$ is the size of the clique. For any $\mu\in\R$, denote $\phi_\mu$ as the density function of the $N(\mu,1)$ distribution, and let
\begin{align*}
\widebar\phi_{\mu}=\frac12 (\phi_{\mu}+\phi_{-\mu})
\end{align*}
be the density function of the Gaussian mixture $\frac12 N(\mu,1) + \frac12 N(-\mu,1)$. We also define $\widetilde\Xi_0$ as a truncated normal distribution by the $N(0,1)$ distribution restricted on the interval $[-(K+2)\sqrt{\log N}, (K+2)\sqrt{\log N} ]$. For any $|\mu|\le (K+2)\sqrt{\eta_N\log N}$, define two distributions $\sF_{\mu,0}$ and $\sF_{\mu,1}$ with density functions
\begin{align*}
& h_{\mu,0}(x)= J_0 (\phi_0(x) -\tau_N^{-1}[\widebar\phi_{\mu}(x)-\phi_0(x)]) \mathbf{1}_{\{|x|\le (K+2)\sqrt{\log N} \}}, \\
& h_{\mu,1}(x)= J_1 (\phi_0(x) +\tau_N^{-1}[\widebar\phi_{\mu}(x)-\phi_0(x)]) \mathbf{1}_{\{|x|\le (K+2)\sqrt{\log N} \}},
\end{align*}
where $J_0,J_1$ are normalizing constants.

Suppose we have a collection of estimators $\widehat\ba=(\widehat a_1,...,\widehat a_K)$ with $\widehat a_j=\widehat a_j(\cX_1,...,\cX_T)$ being the estimated factor loading $ a_j$ ($1\le j\le K$). Our main technique is based on a reduction scheme which maps any order $K+1$ graph adjacent tensor $\cA\in \{0,1\}^{N^{\otimes (K+1)}}$ with dimension $N^{\otimes (K+1)}=N\times N\times\cdots\times N$ (multiply $N$ by $K+1$ times), $N\ge 2T$, and $\widehat\ba$, to a test for the Hypergraphic Planted Clique detection problem \eqref{hpc}. The specific technique was developed by  \cite{ma2015computational,gao2017sparse}. Based on the asymmetry of the tensor PCA instance, our proof is relatively more straightforward in establishing the computational lower bound, in contrast to the approach taken by \cite{brennan2020reducibility} in Section 11, which requires considerable effort in the symmetric-signal case. We refer the readers to \cite{wang2016statistical, cai2017computational} for other related methods. We provide a detailed description of the mapping as follows.


\begin{enumerate}[(1)]
\item (Initialization). Generate i.i.d. random variable $\xi_1,...,\xi_{2T}\sim\widetilde\Xi_0$. Set
\begin{align}\label{lowerbdd:initial}
\mu_t=\eta_N^{1/2}\xi_t, \qquad t=1,...,2T.
\end{align}

\item (Gaussianization). Generate two order $K+1$ tensors $\cB_0,\cB_1 \in \R^{2T^{\otimes (K+1)}}$, where conditioning on the 
$\mu_t$'s, all the entries are mutually independent satisfying
\begin{align}
\cL\big( (\cB_{0})_{t,j_1,...,j_K}|\mu_t\big)=\sF_{\mu_t,0}  \quad\text{and}\quad \cL\big( (\cB_{1})_{t,j_1,...,j_K}|\mu_t\big)=\sF_{\mu_t,1}.
\end{align}
Let $\cA_0\in\{0,1\}^{2T^{\otimes (K+1)}}$ be the lower-left corner block of the order $K+1$ tensor $\cA$. Generate an order $K+1$ tensor $\cX=(\cX_1,\cX_2,...,\cX_{2T})\in\R^{2T\times d_1\times d_2\times\cdots\times d_K}$, where for each $t\le 2T$, if $1\le j_1,...,j_K\le 2T$, set
\begin{align}
\cX_{t,j_1,...,j_K}= \big(1- (\cA_{0})_{t,j_1,...,j_K} \big) (\cB_{0})_{t,j_1,...,j_K} + (\cA_{0})_{t,j_1,...,j_K} (\cB_{1})_{t,j_1,...,j_K};
\end{align}
otherwise, let $\cX_{t,j_1,...,j_K}$ be an independent copy from N$(0,1)$.

\item (Test Construction). Let $\widehat a_k=\widehat a_k(\cX_1,...,\cX_T)$ be the estimator of the factor loading $ a_k$, for all $1\le k\le K$, by treating $\cX_1,...,\cX_T$ as data. All $\widehat a_k$ are normalized to be a unit vector. We reject $H_0^G$ if
\begin{align}\label{eq:testhpc}
\left(\frac1T\sum_{t=T+1}^{2T} \cX_t \otimes \cX_t \right) \times_{k=1}^{2K} \widehat a_k^\top \ge 1+ \frac12\left(\frac{\kappa}{2}\right)^K \eta_N,
\end{align}
with $\widehat a_{k+K}:= \widehat a_k$ for all $1\le k\le K$.

\end{enumerate}

\subsection{Lemmas}
To prove Theorem \ref{lowerbdd}, we need to state lemmas which characterize the distribution of $\cX_t$ under $H_0^G$ and $H_1^G$. Let $\cL(\{\cX_t\}_{t=1}^{2T})$ be the joint distribution of $\cX_1,...,\cX_{2T}$. Under probability space \eqref{eq:tenfm2sp}, we denote the distribution of $\cX_t$ as $\P_{\lambda,\ba}$ with $\ba=( a_1,..., a_K)$, and the joint distribution of $(\cX_1,...,\cX_{2T})$ as $\P_{\lambda,\ba}^{2T}$. Note that $\P_{\lambda,\ba}$ is a tensor (array) normal distribution with covariance tensor $\lambda\times_{k=1}^{2K} a_k+\cI_{d_1\times\cdots\times d_K\times d_1\times\cdots\times d_K }$, where $ a_{K+k}= a_k$ for $1\le k\le K$, $\cI_{j_1,...,j_K,j_1,...,j_K}=1$ and 0 elsewhere. If $\lambda=0$, all entries of $\cX_t$ are i.i.d. N(0,1), thus we denote $\cL(\cX_t)$ as $\P_0$. We also write $\cL(\cX|\beta)$ as the conditional distribution of $\cX | \beta$, and $\rmint \cL(\cX| \beta) \rmintd \xi(\beta)$ as the marginal distribution of $\cX$ after integrating $\beta$ out.

\begin{lemma}\label{lowerbdd:lemma1}
There exists some absolute constant $C>0$, such that for any integers $K\ge1$, $\kappa<N$, $N\ge 3$, and for all $|\mu|\le(K+2)\sqrt{\eta_N\log N}$,
\begin{equation*}
\TV(g_{\mu,0},\phi_0)\le CN^{-K-2}\qquad\text{and }\qquad \TV(g_{\mu,1},\bar\phi_\mu)\le CN^{-K-2}   ,
\end{equation*}
where $g_{\mu,0}=\frac12(h_{\mu,0}+h_{\mu,1})$ and $g_{\mu,1} =\tau_N h_{\mu,1}+ (1-\tau_N)\frac12(h_{\mu,0}+h_{\mu,1})$.
\end{lemma}

\begin{lemma}\label{lowerbdd:lemma2}
Suppose $\cA\sim\cG_{K+1}(N,1/2)$. There exists some constant $C_K>0$ depending on $K$ only, such that
\begin{equation*}
\TV(\cL(\{\cX_t\})_{t=1}^{2T},\P_0^{2T}) \le C_KN^{-1}.
\end{equation*}

\end{lemma}

The proofs of Lemma \ref{lowerbdd:lemma1} and \ref{lowerbdd:lemma2} are analogous to Lemma 7.1 and 7.2 in \cite{gao2017sparse}, thus are skipped here.

\begin{lemma}\label{lowerbdd:lemma3}
Suppose $\cA\sim\cG_{K+1}(N,1/2,\kappa)$. There exists a distribution $\pi$ supported on the set
\begin{align*}
\{(\lambda,\ba): \| a_k\|_2=1, |\text{supp}( a_k)|\le 3\kappa/2, 1\le k\le K, \eta_N^{1/2}(\kappa/2)^{K/2} \le \lambda\le \eta_N^{1/2}(3\kappa/2)^{K/2} \},
\end{align*}
such that for some positive constants $C_{1K},C_{2K}$ depending on $K$ only,
\begin{equation*}
\TV(\cL(\{\cX_t\})_{t=1}^{2T},\P_{\pi}) \le  C_{1K}\cdot\kappa \left(\frac{2T}{N} \right)^{\kappa} + \frac{C_{2K}}{N} + \frac{4(K+1)T}{N} ,
\end{equation*}
where $\P_{\pi}=\rmint\P_{\lambda,\ba}^{2T}\rmintd \pi(\lambda,\ba)$.
\end{lemma}

\begin{proof}
Let $\xi$ be $N(0,\eta_N)$, and $\bar\xi$ be a truncated normal distribution obtained by restricting $\xi$ on the set $[-(K+2)\sqrt{\eta_N\log N}, (K+2)\sqrt{\eta_N\log N}]$. Then the $\mu_i$'s in \eqref{lowerbdd:initial} are i.i.d. following $\bar\xi$. Elementary calculation shows that $\rmint\phi_0(x) \rmintd \xi(\mu)=\phi_0(x)$ gives the density function of $N(0,1)$, and $\rmint \bar\phi_{\mu}(x)\rmintd\xi(\mu)$ is the density function of $N(0,1+\eta_N)$.

We first consider the case $d_1=d_2=\cdots=d_K=2T$. Let $(\alpha_1,...,\alpha_{2T})$ be the 0-1 indicators of the first tensor mode of $\cA_0$ whether the corresponding vertices belong to the planted clique or not. Similarly, define $(\beta_{k1},...,\beta_{k,2T})$ as the corresponding indicators of the $(k+1)$-th tensor mode of $\cA_0$, for all $1\le k\le K$. Let $(\widetilde\alpha_1, ..., \widetilde\alpha_{2T})$ and $(\widetilde\beta_{k1},...,\widetilde\beta_{k,2T})$, $1\le k\le K$, as i.i.d. Bernoulli random variables with mean $\tau_N=\kappa/N$. Define a new tensor $\widetilde\cA_0$ with $\widetilde\cA_{t,j_1,...,j_K}=1$ if $\widetilde\alpha_{t}=\widetilde\beta_{1,j_1}=\cdots=\widetilde\beta_{K,j_K}=1$ and is an independent instantiation of the Bernoulli(1/2) distribution otherwise. Define $\widetilde\cX$ as
\begin{align*}
\widetilde\cX_{t,j_1,...,j_K}= \big(1- (\widetilde\cA_{0})_{t,j_1,...,j_K} \big) (\cB_{0})_{t,j_1,...,j_K} + (\widetilde\cA_{0})_{t,j_1,...,j_K} (\cB_{1})_{t,j_1,...,j_K} .
\end{align*}
By Theorem 4 in \cite{diaconis1980finite} and the data-processing inequality, we have
\begin{align*}
\TV(\cL(\cX),\cL(\widetilde\cX)) \le \TV(\cL(\alpha, \beta_1,..., \beta_K), \cL(\widetilde\alpha, \widetilde\beta_1,..., \widetilde\beta_K)) \le \frac{4(K+1)T}{N} ,
\end{align*}
where $\alpha=(\alpha_1,...,\alpha_{2T})$, $\beta_k=(\beta_{k1},...,\beta_{k,2T})$, $1\le k\le K$, and $\widetilde\alpha, \widetilde\beta_k$ are similarly defined. Note that, conditioning on $\mu_t$ and $\widetilde\beta_{1,j_k}=0$ for some $1\le k\le K$, $\widetilde\cX_{t,j_1,...,j_K}\sim g_{\mu_t,0}$. And conditioning on $\mu_t$ and $\widetilde\beta_{1,j_1}=\cdots=\widetilde\beta_{1,j_K}=1$, $\widetilde\cX_{t,j_1,...,j_K}\sim g_{\mu_t,1}$.

Next, define $\widebar\cX$ with entries
\begin{align*}
&\widebar\cX_{t,j_1,...,j_K} | (\widetilde\beta_{1,j_k}=0,\text{ for some } 1\le k\le K,\mu_t)\sim \phi_0 , \\
&\widebar\cX_{t,j_1,...,j_K} | (\widetilde\beta_{1,j_1} = \cdots = \widetilde\beta_{K,j_K} =1, \mu_t)\sim \bar\phi_{\mu_t} .
\end{align*}
By Lemma \ref{lowerbdd:lemma1} and Lemma 7 in \cite{ma2015computational}, we have, uniformly over $\max_t|\mu_t|\le (K+2)\sqrt{\eta_N\log N}$,
\begin{align*}
& \TV(\cL(\widetilde\cX|\widetilde\beta_1, ..., \widetilde\beta_K,\mu_t), \cL(\widebar\cX|\widetilde\beta_1, ..., \widetilde\beta_K,\mu_t)) \\
\le&  \sum_{t=1}^{2T}\sum_{j_1=1}^{d_1}\cdots\sum_{j_K=1}^{d_K} \TV(\cL(\widetilde\cX_{t,j_1,...,j_K}|\widetilde\beta_{1,j_1}, ..., \widetilde\beta_{K,j_K},\mu_t), \cL(\widebar\cX_{t,j_1,...,j_K}|\widetilde\beta_{1,j_1}, ..., \widetilde\beta_{K,j_K},\mu_t)) \\
\le& \frac{C_K}{N},
\end{align*}
for some constant $C_K>0$. Let $\rmint \cL(\widetilde\cX|\widetilde\beta_1,...,\widetilde\beta_K,\mu) \rmintd \bar\xi(\mu)$ (resp. $\rmint \cL(\widetilde\cX|\widetilde\beta_1,...,\widetilde\beta_K,\mu) \rmintd \xi(\mu)$) be the conditional distribution of $\widetilde\cX| \widetilde\beta_1,...,\widetilde\beta_K$ if the elements of $\mu=(\mu_1,...,\mu_{2T})$ are i.i.d. following $\bar\xi$ (resp. $\xi$). And $\rmint \cL(\widebar\cX|\widetilde\beta_1,...,\widetilde\beta_K,\mu) \rmintd \bar\xi(\mu)$, $\rmint \cL(\widebar\cX | \widetilde\beta_1, ..., \widetilde\beta_K, \mu) \rmintd \xi(\mu)$ are similarly defined. Note that
\begin{align*}
\TV(\xi,\bar\xi)=\int_{|\mu|>(K+2)\sqrt{\eta_N \log N} } \rmintd \xi(\mu) = \int_{|x|>(K+2)\sqrt{\log N}} \phi_0(x) \rmintd x \le CN^{-K-3}.
\end{align*}
Then, we can obtain
\begin{align*}
&\TV (\int\cL(\widetilde\cX|\widetilde\beta_1,...,\widetilde\beta_K,\mu)\rmintd \bar\xi(\mu), \int\cL(\widebar\cX|\widetilde\beta_1,...,\widetilde\beta_K,\mu) \rmintd \xi(\mu) ) \\
\le& \TV (\int\cL(\widetilde\cX|\widetilde\beta_1,...,\widetilde\beta_K,\mu)\rmintd \bar\xi(\mu), \int\cL(\widebar\cX|\widetilde\beta_1,...,\widetilde\beta_K,\mu) \rmintd \bar\xi(\mu) ) \\
&\quad + \TV (\int\cL(\widebar\cX|\widetilde\beta_1,...,\widetilde\beta_K,\mu)\rmintd \bar\xi(\mu), \int\cL(\widebar\cX|\widetilde\beta_1,...,\widetilde\beta_K,\mu) \rmintd \xi(\mu) ) \\
\le& \sup_{\max_t|\mu_t|\le (K+2)\sqrt{\eta_N\log N}}  \TV(\cL(\widetilde\cX|\widetilde\beta_1, ..., \widetilde\beta_K,\mu_t), \cL(\widebar\cX|\widetilde\beta_1, ..., \widetilde\beta_K,\mu_t))   + C_0 (2T)^{K} \TV(\bar\xi,\xi) \\
\le& C_K N^{-1}.
\end{align*}
Define a set
\begin{align*}
S_T := \{(j_1,j_2,...,j_K): \widetilde \beta_{1,j_1} = \cdots= \widetilde \beta_{K,j_K}=1, 1\le j_k\le d_k, 1\le k\le K \} .
\end{align*}
Then, for each given $(\widetilde\beta_1, ..., \widetilde\beta_K)$, we can define $s_k=\sum_{j_k\in S_T} \widetilde \beta_{k,j_k}=\sum_{j_k\in S_T} \widetilde \beta_{k,j_k}^2$, $ a_k=s_k^{-1/2}(\widetilde \beta_{k,j_k}\mathbf{1}_{\{j_k\in S_T\} })$, for all $1\le k\le K$, and $\lambda=\eta_N^{1/2} \prod_{k=1}^K s_k^{1/2} $. Obviously, 
there exists one-to-one identification between $( a_1,..., a_K,\lambda)$ and $(\widetilde\beta_1, ..., \widetilde\beta_K)$.
Note that $\rmint \cL(\widebar\cX| \widetilde\beta_1,..., \widetilde\beta_K, \mu) \rmintd \xi(\mu) = \P_{\lambda,\ba}^{2T}$. As $\cL(\widetilde\cX|\widetilde\beta_1, ..., \widetilde\beta_K) = \rmint \cL(\widetilde \cX| \widetilde\beta_1,..., \widetilde\beta_K, \mu) \rmintd \bar\xi(\mu)$, we have
\begin{align*}
\TV(\cL(\widetilde\cX|\widetilde\beta_1, ..., \widetilde\beta_K), \P_{\lambda,\ba}^{2T}) \le C_K N^{-1}.
\end{align*}

Define an event $Q=\{\widetilde\beta_1,...,\widetilde\beta_K: |s_k-\kappa|\le \kappa/2, 1\le k\le K\}$. By Lemma \ref{lowerbdd:lemma4}, $\P(Q^c)\le K\kappa (2T/N)^{\kappa}$. Let $\widetilde\pi$ be the joint distribution of $(\ba,\lambda)$, and $\pi$ be the distribution by restricting $\widetilde\pi$ on $\{ a_1(\widetilde\beta_1), ...,  a_K(\widetilde\beta_K), \lambda(\widetilde\beta_1,..., \widetilde\beta_K) : \widetilde\beta_1,...,\widetilde\beta_K \in Q\}$. It follows that $\TV(\widetilde\pi,\pi) \le C\P(Q^c) \le CK\kappa (2T/N)^{\kappa}$. As $\cL(\widetilde\cX| \widetilde\beta_1,..., \widetilde\beta_K)= \cL(\widetilde\cX|  a_1,..., a_K,\lambda)= \cL(\widetilde\cX| \ba,\lambda)$,
\begin{align*}
\TV(\cL(\widetilde\cX),\P_{\pi}) &\le \TV(\cL(\widetilde\cX), \int \cL(\widetilde\cX| \ba,\lambda) \rmintd \pi(\ba,\lambda) )  \\
&\qquad+ \TV( \int \cL(\widetilde\cX| \ba,\lambda) \rmintd \pi(\ba,\lambda), \int\P_{\lambda,\ba}^{2T}\rmintd \pi(\lambda,\ba)) \\
&\le \TV(\widetilde\pi,\pi) + \sup_{\ba,\lambda} \TV( \cL(\widetilde\cX| \ba,\lambda), \P_{\lambda,\ba}^{2T}) \\
&\le CK\kappa \left(\frac{2T}{N} \right)^{\kappa} + \frac{C_K}{N}.
\end{align*}
Hence, we have
\begin{align*}
\TV(\cL(\{\cX_t\})_{t=1}^{2T}, \int\P_{\lambda,\ba}^{2T}\rmintd \pi(\lambda,\ba) ) \le C_{1K}\cdot\kappa \left(\frac{2T}{N} \right)^{\kappa} + \frac{C_{2K}}{N} + \frac{4(K+1)T}{N}.
\end{align*}

When $d_k\ge 2T$, $1\le k\le K$, we first use the above arguments to analyze the distribution of the first $2T$ coordinates. Then, as the remaining $2T$ coordinates are exact, the total variation bound is zero.
\end{proof}

\begin{lemma}\label{lowerbdd:lemma4}
Let $s_k=\sum_{j_k\in S_T} \widetilde \beta_{k,j_k}=\sum_{j_k\in S_T} \widetilde \beta_{k,j_k}^2$, $d_k=2T< N$ for all $1\le k\le K$. Define an event $Q=\{\widetilde\beta_1,...,\widetilde\beta_K: |s_k-\kappa|\le \kappa/2, 1\le k\le K\}$, then
\begin{align*}
\P(Q)\ge 1-\frac{K(\kappa+1)}{2} \left(\frac{2T}{N} \right)^{\kappa}.
\end{align*}
\end{lemma}

\begin{proof}
Recall that $\kappa<\sqrt{N}$, Let $\cC_k=\{\widetilde\beta_{k,j_k}:j_k \in S_T \}$, $1\le k\le K$. Then
\begin{align*}
\P\left(|\cC_k|\le \kappa/2 \right) \le& \frac{\sum_{i=0}^{\kappa/2} \binom{d_k}{i}} {\binom{N}{\kappa}} \le \frac{\kappa+1}{2}\cdot \frac{\binom{d_k}{\kappa/2}} {\binom{N}{\kappa}} =\frac{\kappa+1}{2}\cdot \frac{d_k(d_k-1)\cdots(d_k-\kappa/2+1)\cdot \kappa !}{(\kappa/2)! \cdot N(N-1)\cdots (N-\kappa+1) } \\
\le& \frac{\kappa+1}{2}\cdot \left(\frac{d_k}{N} \right)^{\kappa} .
\end{align*}
Therefore, by Bonferroni inequality, we have the desired result.
\end{proof}

\subsection{Proof of Theorem \ref{lowerbdd}}
Write $\widehat\Sigma=\frac1T \sum_{t=T+1}^{2T}\cX_t\otimes\cX_t$. Then the test \eqref{eq:testhpc} can be rewritten as
\begin{align*}
\psi=\psi(\cX_1,...,\cX_{2T})=\psi(\cA,\mu,\cB_0,\cB_1):=\mathbf{1}\left\{ \widehat\Sigma \times_{k=1}^{2K} \widehat a_k^\top \ge 1+ \frac12 \left( \frac{\kappa}{2} \right)^{K} \eta_N \right\}.
\end{align*}
Note that $\psi$ is a test for the Hypergraphic Planted Clique detection problem \eqref{hpc}. Recall the probability space \eqref{eq:tenfm2sp}. For any $(\lambda,\ba)$ in the support of $\pi$, we have
\begin{align*}
\P_{\lambda,\ba}^T \in \widetilde\sP\left(T,d_1,...,d_K, \lambda \right)
\end{align*}
with $\eta_N^{1/2}(\kappa/2)^{K/2} \le \lambda \le \eta_N^{1/2}(3\kappa/2)^{K/2}$.

We first bound the Type-I error of the test $\psi$. By Lemma \ref{lowerbdd:lemma2},
\begin{align*}
\P_{H_0^G}(\psi=1) \le \P_{0}^T(\psi=1)+C_K N^{-1}.
\end{align*}
Under $\P_0^T$, $\widehat\ba=(\widehat a_1,...,\widehat a_K)$ and $\widehat\Sigma$ are independent. Conditioning on $\widehat\ba$, applying Bernstein's inequality, we have
\begin{align*}
\widehat\Sigma \times_{k=1}^{2K} \widehat a_k^\top= 1+\frac1T\sum_{t=T+1}^{2T} \left( \left| \cX_t\times_{k=1}^{K} \widehat a_k^\top \right|^2 - |\vec1(\otimes_{k=1}^K\widehat a_k)|^2 \right) > 1+ \frac12 \left( \frac{\kappa}{2} \right)^{K} \eta_N,
\end{align*}
with probability at most $\exp\left(-\frac{C_K T\kappa^{K+2}}{N^2 (\log N)^4 }\right)$. Integrating over $\widehat\ba$, we have
\begin{align}\label{lowerbdd:error1}
\P_{H_0^G}(\psi=1) \le \exp\left(-\frac{C_K T\kappa^{K+2}}{N^2 (\log N)^4 }\right)+C_K N^{-1}.
\end{align}

Next, we bound the Type-II error. By Lemma \ref{lowerbdd:lemma3},
\begin{align*}
\P_{H_1^G}(\psi=0) \le \P_{\pi}(\psi=0) + C_{1K}\cdot\kappa \left(\frac{2T}{N} \right)^{\kappa} + \frac{C_{2K}}{N} + \frac{4(K+1)T}{N}    .
\end{align*}
Recall that under the probability space \eqref{eq:tenfm2sp},
\begin{align*}
\cX_t=\lambda f_t\times_1  a_1\times_2...\times_K a_K+\cE_t,
\end{align*}
and $f_t \overset{i.i.d}{\sim} N(0,1)$ and the elements of $\cE_t$ follow that $\cE_{t,j_1,...,j_K}\overset{i.i.d.}{\sim} N(0,1)$ for all $1\le t\le T, 1\le j_k\le d_k, 1\le k\le K$. Write $\cE_t(\widehat\ba)=\cE_t\times_{k=1}^{K} \widehat a_k^\top$. Thus,
\begin{align*}
\widehat\Sigma \times_{k=1}^{2K} \widehat a_k^\top &= \lambda^2 \prod_{k=1}^K |\widehat a_k^\top a_k|^2 \left(\frac1T \sum_{t=T+1}^{2T}f_t^2 \right) + \frac{2\lambda}{T} \sum_{t=T+1}^{2T} \prod_{k=1}^K (\widehat a_k^\top a_k)\cE_t(\widehat\ba) + \frac1T \sum_{t=T+1}^{2T} |\cE_t(\widehat\ba)|^2 \\
&=\lambda^2+1 + \lambda^2 \prod_{k=1}^K |\widehat a_k^\top a_k|^2 \left(\frac1T \sum_{t=T+1}^{2T}f_t^2 -1 \right) + \lambda^2 \left( \prod_{k=1}^K |\widehat a_k^\top a_k|^2 -1 \right) \\
&\qquad + \frac{2\lambda}{T} \sum_{t=T+1}^{2T} \prod_{k=1}^K (\widehat a_k^\top a_k)\cE_t(\widehat\ba) + \left( \frac1T \sum_{t=T+1}^{2T} |\cE_t(\widehat\ba)|^2-1 \right).
\end{align*}
After rearrangement, we have
\begin{align*}
\left| \widehat\Sigma \times_{k=1}^{2K} \widehat a_k^\top -(\lambda^2+1) \right| &\le \lambda^2  \left|\frac1T \sum_{t=T+1}^{2T}f_t^2 -1 \right| + \lambda^2 \left( 1-\prod_{k=1}^K |\widehat a_k^\top a_k|^2  \right) \\
&\qquad + \left| \frac{2\lambda}{T} \sum_{t=T+1}^{2T} \cE_t(\widehat\ba)  \right| + \left| \frac1T \sum_{t=T+1}^{2T} |\cE_t(\widehat\ba)|^2-1 \right|.
\end{align*}
By Bernstein's inequality, we can obtain
\begin{align*}
&\P_{\lambda,\ba}^T \left( \left|\frac1T \sum_{t=T+1}^{2T}f_t^2 -1 \right| +  \left| \frac{1}{T} \sum_{t=T+1}^{2T} \cE_t(\widehat\ba)  \right| + \left| \frac1T \sum_{t=T+1}^{2T} |\cE_t(\widehat\ba)|^2-1 \right|\ge C\sqrt{\frac{\log T}{T}} \right) \\
&\le T^{-C'} .
\end{align*}
Note that
\begin{align*}
&\lambda^2\left( 1-\prod_{k=1}^K |\widehat a_k^\top a_k|^2 \right) \ge \lambda^2- \lambda^2\max_k |\widehat a_k^\top a_k|^{2} = \lambda^2\min_k\|P_{\widehat a_k}- P_{ a_k}\|_{\rm S}^2.
\end{align*}
Hence, as $\lambda^2\ge \eta_N (\kappa/2)^{K}$, the Type-II error is upper bounded by
\begin{align}\label{lowerbdd:error2}
\P_{H_1^G}(\psi=0) \le& \P_{\lambda,\ba}^T \left(\min_k\|P_{\widehat a_k}- P_{ a_k}\|_{\rm S}^2 >\frac13 \right) + T^{-C'} + C_{1K}\cdot\kappa \left(\frac{2T}{N} \right)^{\kappa} \\
&\quad+ \frac{C_{2K}}{N} + \frac{4(K+1)T}{N}    .   \notag
\end{align}

Combining \eqref{lowerbdd:error1} and \eqref{lowerbdd:error2}, we have
\begin{align}\label{lowerbdd:error}
&\P_{H_0^G}(\psi=1)+\P_{H_1^G}(\psi=0) \\
\le& \P_{\lambda,\ba}^T \left(\min_k\|P_{\widehat a_k}- P_{ a_k}\|_{\rm S}^2 >\frac13 \right) + \exp\left(-\frac{C_K T\kappa^{K+2}}{N^2 (\log N)^4 }\right)+C_K N^{-1}    \notag\\
&\quad + T^{-C'} + C_{1K}\cdot\kappa \left(\frac{2T}{N} \right)^{\kappa} + \frac{C_{2K}}{N} + \frac{4(K+1)T}{N}    .   \notag
\end{align}

Now, we are ready to prove Theorem \ref{lowerbdd}. Consider the Hypergraphic Planted Clique detection problem \eqref{hpc} with $N=20(K+1)T$, $\kappa=\lfloor N^{1/2-\delta} \rfloor$ and $\delta<1/2-1/(K+2)$. Then we have
\begin{align}\label{lowerbdd:asmp1}
\frac{N(\log N)^5}{\kappa^{K+2}} \le c_0 ,
\end{align}
for some sufficient small constant $c_0>0$. We can also obtain
\begin{align*}
\lambda^2 \le \eta_N\left(\frac{3\kappa}{2} \right)^K \le \frac{C_{K} d^{1/2-\delta(K+1)/K}}{T^{1/2}(\log T)^2}  .
\end{align*}
Obviously, \eqref{eq1:lowerbdd} holds with $\vartheta=\delta(K+1)/K < (K+1)/(2K+4)$. On the contradictory, suppose that the claim of Theorem \ref{lowerbdd} does not hold. It means that
\begin{align}\label{eq0:lowerbdd}
\liminf_{T\to \infty} \sup_{\cX_1,...,\cX_T\in \sP(T,d_1,...,d_K,\lambda) } \P \left( \min_{1\le k\le K} \| P_{\widehat a_k} -P_{ a_k} \|_{\rm S}^2 > \frac13  \right)\le\frac14 .
\end{align}
Substituting \eqref{lowerbdd:asmp1} and \eqref{eq0:lowerbdd} into \eqref{lowerbdd:error}, we have
\begin{align*}
\P_{H_0^G}(\psi=1)+\P_{H_1^G}(\psi=0) \le& \frac14 \left(1+o(1) \right) +\frac15+  N^{-C_{0K}} +C_K N^{-1} + T^{-C'}  \\
&\quad  + C_{1K}\cdot\kappa \left(\frac{1}{10K} \right)^{\kappa} + \frac{C_{2K}}{N}     .
\end{align*}
It follows that
\begin{align*}
\limsup_{N\to\infty} \left( \P_{H_0^G}(\psi=1)+\P_{H_1^G}(\psi=0)   \right) <\frac12,
\end{align*}
which contradicts the Hypothesis \ref{hypos:hpc}. We complete the proof.

\section{Proof of Theorem \ref{thm:stat_lowerbdd}} \label{proofth4}
Without loss of generality, we can assume $\sigma=1$. Applying the same reduction in Appendix \ref{proofth3}, we map the probability space \eqref{eq:tenfm1sp} to
\begin{align*}
&\widetilde\sP(\lfloor T/2 \rfloor,d_1,...,d_K,\lambda) =\Big\{\cX_1,...,\cX_{\lfloor T/2 \rfloor}:\cX_t=\lambda \widetilde \cF_t\times_1  U_1\times_2...\times_K U_K+\widetilde \cE_t, \text{ with }  U_k\in\R^{d_k\times r_k},   \\
&\qquad\qquad U_k^\top U_k=I_{r_k}, 1\le k\le K, \widetilde \cF_{t,i_1,...,i_K} \overset{i.i.d.}{\sim} N(0,1/2), \{\widetilde \cF_t\}_{t=1}^{\lfloor T/2 \rfloor} \text{ independent of } \{\widetilde\cE_t\}_{t=1}^{\lfloor T/2 \rfloor},   \notag\\
&\qquad\qquad \widetilde \cE_{t,j_1,...,j_K}\overset{i.i.d.}{\sim} N(0,1/2) \text{ for all } 1\le t\le \lfloor T/2 \rfloor, 1\le i_k \le r_k, 1\le j_k\le d_k, 1\le k\le K \Big\}. \notag
\end{align*}
For notation convenience, in the following of this section, we study the probability space
\begin{align}\label{eq:tenfm3sp}
&\widetilde\sP(T,d_1,...,d_K,\lambda) =\Big\{\cX_1,...,\cX_{T}:\cX_t=\lambda \cF_t\times_1  U_1\times_2...\times_K U_K+ \cE_t, \text{ with }  U_k\in\R^{d_k\times r_k},   \\
&\qquad\qquad U_k^\top U_k=I_{r_k}, 1\le k\le K, \cF_{t,i_1,...,i_K} \overset{i.i.d.}{\sim} N(0,1), \{\cF_t\}_{t=1}^{T} \text{ independent of } \{\cE_t\}_{t=1}^{T},   \notag\\
&\qquad\qquad \cE_{t,j_1,...,j_K}\overset{i.i.d.}{\sim} N(0,1) \text{ for all } 1\le t\le T, 1\le i_k \le r_k, 1\le j_k\le d_k, 1\le k\le K \Big\}. \notag
\end{align}
Under probability space \eqref{eq:tenfm3sp}, we denote the distribution of $\cX_t$ as $\P_{\lambda,\cU}$ with $\cU=( U_1,..., U_K)$, and the joint distribution of $(\cX_1,...,\cX_{T})$ as $\P_{\lambda,\cU}^{T}$. Let $G(k, r)$ denote the Grassmannian manifold consisting of all $r$-dimensional linear subspace of $\R^k$, $O(p, r) =\{ U\in \R^{p\times r}: U^\top U = I_r\}$. Let $B(\theta,\epsilon)=\{\theta' \in\Theta: \rho(\theta',\theta)\le \epsilon\}$ for some metric $\rho$ and parameter space $\Theta$.

To prove Theorem \ref{thm:stat_lowerbdd}, we use two technical lemmas in \cite{cai2013sparsepca} (Proposition 3 and Lemma 1), regarding a minimax lower bound via the local metric entropy and the metric entropy of the Grassmannian manifold $G(k,r)$.

\begin{lemma}\label{lem1:stat_lowerbdd}
Let $(\Theta,\rho)$ be a totally bounded metric space and $\{\P_{\theta}: \theta\in\Theta\}$ a collection of probability measures. For any $E\subset\Theta$, denote by $N(E,\epsilon)$ the $\epsilon$-covering number of $E$, that is, the minimal number of balls of radius $\epsilon$ whose union contains $E$. Denote by $M(E,\epsilon)$ the $\epsilon$-packing number of $E$, that is, the maximal number of points in $E$ whose pairwise distance is at least $\epsilon$. Put
\begin{align*}
L:=\sup_{\theta\neq\theta'} \frac{D(\P_{\theta} || \P_{\theta' })}{\rho^2(\theta,\theta')}.
\end{align*}
If there exist $0 < c_0 < c_1 <\infty$ and $p\ge 1$ such that
\begin{align*}
\left(\frac{c_0}{\epsilon}\right)^p \le N(\Theta,\epsilon) \le   \left(\frac{c_1}{\epsilon}\right)^p
\end{align*}
for all $0<\epsilon <\epsilon_0$. Then
\begin{align}
\inf_{\widehat\theta}\sup_{\theta\in\Theta} \E_{\theta} [\rho^2(\widehat\theta(X),\theta)] \ge \frac{c_0^2}{840c_1^2}  \left( \frac{p}{L}  \wedge \epsilon_0^2 \right)   .
\end{align}
\end{lemma}

\begin{lemma}\label{lem2:stat_lowerbdd}
For any $V\in O(k, r)$, identifying the subspace span$(V)$ with its projection matrix $VV^\top$, define the metric on $G(k, r)$ by $\rho(VV^\top,UU^\top)=\|VV^\top-UU^\top\|_{\rm F}$. Then for any $\epsilon\in(0,\sqrt{2(r\wedge (k-r))}]$,
\begin{align*}
\left(\frac{c_0}{\epsilon}\right)^{r(k-r)} \le N(G(k,r),\epsilon) \le   \left(\frac{c_1}{\epsilon}\right)^{r(k-r)}     ,
\end{align*}
where $c_0, c_1$ are absolute constants. Moreover, for any $V\in O(k, r)$ and any $\alpha\in(0, 1)$,
\begin{align*}
M(B(V,\epsilon),\alpha\epsilon) \ge \left( \frac{c_0}{\alpha c_1} \right)^{r(k-r)}.
\end{align*}
\end{lemma}

\begin{proof}[Proof of Theorem \ref{thm:stat_lowerbdd}]
Note that the Kullback-Leibler divergence between normal distributions is given by $D(N(0,\Sigma_1)||N(0,\Sigma_0))=\frac12 ( \tr(\Sigma_0^{-1}\Sigma_1 -I) -\log\det \Sigma_0^{-1}\Sigma_1 )$. Under probability space \eqref{eq:tenfm3sp}, $\E\vec1(\cF_t)\vec1(\cF_t)^\top =I_r$. Consider $k\in \{1,2,...,K\}$ and fix the loading space of the other $K-1$ tensor modes, i.e. $U_{\ell}$ for all $\ell\neq k$. Let $\cU=( U_1,..., U_K)$, $\cV=(V_1,..., V_K)$, and $V_\ell=U_\ell$ for all $\ell\neq k$. Then the Kullback-Leibler divergence of $\P_{\lambda,\cU}^{T}$ with respect to $\P_{\lambda,\cV}^{T}$ is given by
\begin{align}
&D(\P_{\lambda,\cV}^{T} || \P_{\lambda,\cU}^{T}) \\
=& \frac{T}{2} \tr\left( -\frac{\lambda^2}{\lambda^2+1} (V_K \odot\cdots\odot V_{1})(V_K \odot\cdots\odot V_{1})^\top  \right)  \notag \\
&+\frac{T}{2} \tr\left( \lambda^2 (U_K \odot\cdots\odot U_{1})(U_K \odot\cdots\odot U_{1})^\top  \right) \notag \\
&-\frac{T}{2} \tr\left( -\frac{\lambda^4}{\lambda^2+1} (V_K \odot\cdots\odot V_{1})(V_K \odot\cdots\odot V_{1})^\top (U_K \odot\cdots\odot U_{1})(U_K \odot\cdots\odot U_{1})^\top  \right) \notag \\
=& \frac{T\lambda^4}{2(1+\lambda^2)} \left(r -r_{-k} \|U_k^\top V_k\|_{\rm F}^2 \right) \notag \\
=& \frac{Tr_{-k}\lambda^4}{4(1+\lambda^2)} \|V_k V_k^\top - U_k U_k^\top \|_{\rm F}^2 , \notag
\end{align}
where the first and second inequalities are by the matrix inversion property and
the facts that $\tr(U_{\ell} U_{\ell}^\top) = \tr(U_{\ell}^\top U_{\ell}) = r_{\ell}$, $\tr(V_{k} V_{k}^\top) = \tr(V_{k}^\top V_{k}) = r_{k}$, $r=\prod_{\ell=1}^K r_{\ell}$, $r_{-k}=r/r_k$, respectively.
In view of Lemma \ref{lem1:stat_lowerbdd}, we have $L=Tr_{-k}\lambda^4/(4(1+\lambda^2))$. Applying Lemma \ref{lem2:stat_lowerbdd} yields the statistical lower bound
\begin{align}
\inf_{\widehat U_k}  \sup_{\cX_1,...,\cX_T\in \widetilde\sP(T,d_1,...,d_K,\lambda) } \E  \|\widehat U_k \widehat  U_k^\top  - U_k  U_k^\top \|_{\rm F}^2 \gtrsim
r_k\wedge (d_k-r_k) \wedge \frac{(\sigma^4+\sigma^2\lambda^2)d_kr_k}{\lambda^4 Tr_{-k}}
\end{align}
for all $1\le k\le K$. As $\|\widehat U_k \widehat  U_k^\top  - U_k  U_k^\top \|_{\rm F}^2 \le 2r_k \|\widehat U_k \widehat  U_k^\top  - U_k  U_k^\top \|_{\rm S}^2$, it follows that
\begin{align*}
\inf_{\widehat U_k}  \sup_{\cX_1,...,\cX_T\in \widetilde\sP(T,d_1,...,d_K,\lambda) } \E  \|\widehat U_k \widehat  U_k^\top  - U_k  U_k^\top \|_{\rm S}^2 &\ge \inf_{\widehat U_k}  \sup_{\cX_1,...,\cX_T\in \widetilde\sP(T,d_1,...,d_K,\lambda) } \E  \frac{1}{2r_k}\|\widehat U_k \widehat  U_k^\top  - U_k  U_k^\top \|_{\rm F}^2 \\
&\gtrsim
1 \wedge \frac{(\sigma^4+\sigma^2\lambda^2)d_k}{\lambda^4 Tr_{-k}} .
\end{align*}
Then we can obtain the desired statistical lower bound.
\end{proof}

\section{Proofs of Proposition \ref{prop:topup} and Corollaries} \label{proofcor}

\begin{proof}[Proofs of Proposition \ref{prop:topup}]
It directly follows from Lemma \ref{lm-pertubation}, by extending the rank one case ($r_k=1,1\le k\le K$) in \cite{chen2022rejoinder} to the general case.
\end{proof}

\begin{proposition}\label{prop:connumber}
Let $\lambda=\prod_{k=1}^K \|A_k\|_{\rm S}$. Assume that the condition numbers of $A_k^\top A_k$ ($k=1,...,K$) are bounded. Then, for all $1\le k\le K$, we have,
\begin{align*}
&\|\Theta_{k,0}\|_{\rm op} \asymp \lambda^2 \|\Phi_{k,0}\|_{\rm op}, \quad\quad \|\Theta_{k,0}^*\|_{\rm S} \asymp \lambda^2 \|\Phi_{k,0}^*\|_{\rm S}, \\
&\tau_{k,r_k} \asymp \lambda^2 \times \sigma_{r_k}\left({\rm{mat}}_1(\Phi_{k,1:h_0}) \right), \\
&\tau_{k,r_k}^* \asymp \lambda^2 \times \sigma_{r_k}\left(\Phi_{k,1:h_0}^{*\cano}/\lambda^2 \right) .
\end{align*}
\end{proposition}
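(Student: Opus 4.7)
The plan is to express $\cM_t$ in terms of the factor process and the loading matrices, and then reduce each claim to a bounded-condition-number sandwich estimate. With the SVD $A_k=U_k\Lambda_k V_k^\top$, introduce $B_k = A_K\odot\cdots\odot A_{k+1}\odot A_{k-1}\odot\cdots\odot A_1 \in\RR^{d_{-k}\times r_{-k}}$, so that $\text{mat}_k(\cM_t)=A_k\text{mat}_k(\cF_t)B_k^\top$ and $\Theta_{k,h}=\Phi_{k,h}\times_1 A_k\times_2 B_k\times_3 A_k\times_4 B_k$. The bounded condition number of $A_k^\top A_k$ implies all $r_k$ nontrivial singular values of $A_k$ lie in $[c_1,c_2]\|A_k\|_{\rm S}$ for constants $0<c_1\le c_2<\infty$, and by the multiplicativity of Kronecker singular values, all $r_{-k}$ nontrivial singular values of $B_k$ are $\asymp\lambda/\|A_k\|_{\rm S}$.

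For $\|\Theta_{k,0}^*\|_{\rm S}\asymp\lambda^2\|\Phi_{k,0}^*\|_{\rm S}$, substituting the above decomposition into \eqref{Theta^*} yields $\Theta_{k,0}^*=A_k M A_k^\top$ with $M=\sum_t\text{mat}_k(\cF_t)B_k^\top B_k\text{mat}_k^\top(\cF_t)/T$. Sandwiching $B_k^\top B_k$ between two positive multiples of $(\lambda^2/\|A_k\|_{\rm S}^2)I_{r_{-k}}$ in the PSD order gives $M\asymp(\lambda^2/\|A_k\|_{\rm S}^2)\Phi_{k,0}^*$, and the analogous sandwich via $A_k(\cdot)A_k^\top$ (whose nonzero eigenvalues are $\asymp\|A_k\|_{\rm S}^2$) produces the claim. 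For $\tau_{k,r_k}\asymp\lambda^2\,\sigma_{r_k}(\text{mat}_1(\Phi_{k,1:h_0}))$, matricizing along mode $1$ and stacking over $h$ gives $\text{mat}_1(\Theta_{k,1:h_0})=A_k\,\text{mat}_1(\Phi_{k,1:h_0})\,(I_{h_0}\otimes B_k\otimes A_k\otimes B_k)^\top$; I then apply the standard fact that $\sigma_j(PMQ^\top)\asymp\sigma_{\max}(P)\sigma_{\max}(Q)\sigma_j(M)$ whenever $P$ and $Q$ have bounded condition numbers on their nonzero singular values (proved by absorbing the orthogonal factors of the SVDs of $P$ and $Q$ into $M$ and sandwiching the resulting diagonal factors). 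The fourth identity is immediate from $\Phi_{k,h}^{*\cano}=U_k^\top\Theta_{k,h}^*U_k$ and orthonormality of $U_k$, so $\sigma_{r_k}(\text{mat}_1(\Phi_{k,1:h_0}^{*\cano}))=\sigma_{r_k}(\text{mat}_1(\Theta_{k,1:h_0}^*))=\tau_{k,r_k}^*$, matching $\lambda^2\,\sigma_{r_k}(\Phi_{k,1:h_0}^{*\cano}/\lambda^2)$ by scaling.

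The only nontrivial step is the tensor operator norm identity $\|\Theta_{k,0}\|_{\rm op}\asymp\lambda^2\|\Phi_{k,0}\|_{\rm op}$, which requires exploiting the bilinear characterization. Writing $\langle\Theta_{k,0},U_1\otimes U_2\rangle=\langle\Phi_{k,0},\widetilde U_1\otimes\widetilde U_2\rangle$ with $\widetilde U_i=A_k^\top U_i B_k$, the upper bound is immediate from $\|\widetilde U_i\|_{\rm F}\le\|A_k\|_{\rm S}\|B_k\|_{\rm S}\|U_i\|_{\rm F}=\lambda\|U_i\|_{\rm F}$. For the matching lower bound I would take near-optimal $(\widetilde U_1^\star,\widetilde U_2^\star)$ for $\|\Phi_{k,0}\|_{\rm op}$ and set $U_i=A_k(A_k^\top A_k)^{-1}\widetilde U_i^\star(B_k^\top B_k)^{-1}B_k^\top$, which satisfies $A_k^\top U_i B_k=\widetilde U_i^\star$ and $\|U_i\|_{\rm F}\asymp\lambda^{-1}\|\widetilde U_i^\star\|_{\rm F}$ by the bounded condition numbers of $A_k^\top A_k$ and $B_k^\top B_k$; normalizing $U_i$ to unit Frobenius norm then yields $\|\Theta_{k,0}\|_{\rm op}\gtrsim\lambda^2\|\Phi_{k,0}\|_{\rm op}$. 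The construction of this inverse map and the verification that rank, norm, and bilinear pairing all cooperate is the main place where care is required; the remaining pieces reduce to the sandwich principles outlined in the first paragraph.
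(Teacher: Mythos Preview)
Your proposal is correct and is precisely the argument the paper intends: the paper's own proof is a single sentence observing that bounded condition number of $A_k^\top A_k$ forces all nontrivial singular values of $A_k$ to be of the same order $\|A_k\|_{\rm S}$, after which ``Proposition~\ref{prop:connumber} immediately follows.'' You have supplied the details the paper omits---the Kronecker factorization $\text{mat}_k(\cM_t)=A_k\,\text{mat}_k(\cF_t)B_k^\top$, the PSD sandwich for $\|\Theta_{k,0}^*\|_{\rm S}$, the left/right multiplication bound $\sigma_j(PMQ^\top)\asymp\sigma_{\max}(P)\sigma_{\max}(Q)\sigma_j(M)$ for $\tau_{k,r_k}$, and the bilinear lift for $\|\cdot\|_{\rm op}$ via the pseudo-inverse construction---all of which are exactly the routine verifications needed to turn the paper's assertion into a proof.
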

\begin{proof}
If the condition numbers of $A_k^\top A_k$ ($k=1,...,K$) are bounded, all the singular values of $A_k$ are at the same order. Then Proposition \ref{prop:connumber} immediately follows.
\end{proof}

\begin{proof}[Proofs of Corollary \ref{cor:itopup1} and \ref{cor:itipup1}]
Employing Proposition \ref{prop:connumber}, under Assumption \ref{asmp:factor} and $\E [{\rm{mat}}_1(\Phi_{k,1:h_0})]$ is of rank $r_k$, we can show
$\lambda_k\asymp \lambda$ and $\tau_{k,r_k} \asymp \lambda^2$. When the ranks $r_k$ are fixed, the second part of condition \eqref{condition1n} can be written as $C_{1,K}^{\iter}{R}^{\ideal} \le \rho<1$.
Thus, for $C_1^{\topup}=1/(6R^{(0)})\le(1-\rho)/4$ and $C_{1,K}^{\iter}=\rho/{R}^{\ideal}=1/R^{(0)}$, we have
$$\rho =C_{1,K}^{\iter}{R}^{\ideal}= {R}^{\ideal}/{R}^{(0)} \asymp 1/(\max_{k}\sqrt{d_{-k}}).$$
For $m=1$, this gives the rate ${R}^{\ideal}$ by \eqref{bound:itopup:cora2}.
Then Corollary \ref{cor:itopup1} follows from the results of Theorem \ref{thm:itopup}.

Similarly, Applying Proposition \ref{prop:connumber}, under Assumption \ref{asmp:factor} and $\E [\Phi_{k,1:h_0}^{*\cano}/\lambda^2]$ is of rank $r_k$, we can obtain $\lambda_k^*\asymp \lambda$ and $\tau_{k,r_k}^* \asymp \lambda^2$. Then Corollary \ref{cor:itipup1} follows from the results of Theorem \ref{thm:itipup}.
\end{proof}

\begin{proof}[Proofs of Corollaries \ref{cor:itopup2},  \ref{cor:itipup2} and \ref{cor:tipup-itopup2}]
By Assumption \ref{asmp:strength}, \eqref{condition1n} holds when
$$
T \ge C_0 \max_{k}\left(\frac{d^{2\delta_1-\delta_0}r^{2}}{r_k}+\frac{d^{2\delta_1}rr_k}{d_k} + \frac{d_k^{1/2} \sqrt{rr_k}}{d^{1/2-\delta_1}}
+ \frac{d_{-k}^*\sqrt{rr_k}}{d^{1-\delta_1}}
+\frac{\sqrt{d_kd_{-k}^{*}}r}{d^{1-\delta_1}} \right).
$$
Because $(d_jr_j\sqrt{rr_k}+\sqrt{d_kd_jr_j}r)r_k/d + \sqrt{d_krr_k/d}r_k\le 3r^2$ for $j\neq k$ and $\delta_1\ge \delta_0$, the last three terms on the right-hand side above can be absorbed into the first, so that \eqref{eq:sample:itopup} suffices. Therefore, Theorem \ref{thm:itopup} yields Corollary \ref{cor:itopup2} by setting $\rho\asymp (R^{\ideal}+R^{\add})/R^{(0)}$ as in the proof of Corollary \ref{cor:itopup1}.

Similarly, setting $\rho\asymp \min_k(R^{*\ideal}+R^{*\add}) \lam_k^{*2}/(R^{*(0)}\| \Theta_{k,0}^* \|_{\rm S}), $ and applying Assumption \ref{asmp:strength} to Theorem \ref{thm:itipup}, leads to Corollary \ref{cor:itipup2}.

For Corollary \ref{cor:tipup-itopup2},  condition \eqref{condition1n*} holds when
$T$ is no smaller than
$$
C_0
\max_{1\le k\le K}\left(
\frac{d_kr_kr_{-k}^{2\delta_2}}{d^{1+\delta_0-2\delta_1}}
+ \frac{r_k^2r_{-k}^{2\delta_2}}{d^{1-2\delta_1}}
+ \frac{d_kr_{-k}^2r}{d^{1+\delta_0-2\delta_1}}
+ \frac{d_kr_{-k}^2r r_k}{d^{2-2\delta_1}}
+ \frac{d_{-k}^*\sqrt{rr_k}}{d^{1-\delta_1}}
+\frac{\sqrt{d_kd_{-k}^{*}}r}{d^{1-\delta_1}}
\right)
$$
due to $d_k\sqrt{r_{-k}}\sqrt{rr_k}/d^{1-\delta_1}\le d_kr_kr_{-k}/d^{1+\delta_0-2\delta_1}$.
However, the sixth term on the right-hand side above can be absorbed into the first due to $\sqrt{d_k d_jr_j}r
\le (\sqrt{d_k}r_{-k})(\sqrt{d_jr_j}r_{-j})
\le \max_k d_kr_kr_{-k}^2$ for $j\neq k$.
\end{proof}


\section{Techinical Lemmas} \label{section:lemmas}

%
%

\begin{lemma}\label{lemma:epsilonnet}
Let $d, d_j, d_*, r\le d\wedge d_j$ be positive integers, $\epsilon>0$ and
$N_{d,\epsilon} = \lfloor(1+2/\epsilon)^d\rfloor$. \\
(i) For any norm $\|\cdot\|$ in $\R^d$, there exist
$M_j\in \R^d$ with $\|M_j\|\le 1$, $j=1,\ldots,N_{d,\epsilon}$,
such that $\max_{\|M\|\le 1}\min_{1\le j\le N_{d,\epsilon}}\|M - M_j\|\le \epsilon$.
Consequently, for any linear mapping $f$ and norm $\|\cdot\|_*$,
$$
\sup_{M\in \R^d,\|M\|\le 1}\|f(M)\|_* \le 2\max_{1\le j\le N_{d,1/2}}\|f(M_j)\|_*.
$$
(ii) Given $\epsilon >0$, there exist $U_j\in \R^{d\times r}$
and $V_{j'}\in \R^{d'\times r}$ with $\|U_j\|_{\rm S}\vee\|V_{j'}\|_{\rm S}\le 1$ such that
$$
\max_{M\in \R^{d\times d'},\|M\|_{\rm S}\le 1,\text{rank}(M)\le r}\
\min_{j\le N_{dr,\epsilon/2}, j'\le N_{d'r,\epsilon/2}}\|M - U_jV_{j'}^\top\|_{\rm S}\le \epsilon.
$$
Consequently, for any linear mapping $f$ and norm $\|\cdot\|_*$ in the range of $f$,
\begin{equation}\label{lm-3-2}
\sup_{M, \widetilde M\in \R^{d\times d'}, \|M-\widetilde M\|_{\rm S}\le \epsilon
\atop{\|M\|_{\rm S}\vee\|\widetilde M\|_{\rm S}\le 1\atop
\text{rank}(M)\vee\text{rank}(\widetilde M)\le r}}
\frac{\|f(M-\widetilde M)\|_*}{\epsilon 2^{I_{r<d\wedge d'}}}
\le \sup_{\|M\|_{\rm S}\le 1\atop \text{rank}(M)\le r}\|f(M)\|_*
\le 2\max_{1\le j \le N_{dr,1/8}\atop 1\le j' \le N_{d'r,1/8}}\|f(U_jV_{j'}^\top)\|_*.
\end{equation}
(iii) Given $\epsilon >0$, there exist $U_{j,k}\in \R^{d_k\times r_k}$
and $V_{j',k}\in \R^{d'_k\times r_k}$ with $\|U_{j,k}\|_{\rm S}\vee\|V_{j',k}\|_{\rm S}\le 1$ such that
$$
\max_{M_k\in \R^{d_k\times d_k'},\|M_k\|_{\rm S}\le 1\atop \text{rank}(M_k)\le r_k, \forall k\le K}\
\min_{j_k\le N_{d_kr_k,\epsilon/2} \atop j'_k\le N_{d_k'r_k,\epsilon/2}, \forall k\le K}
\Big\|\odot_{k=2}^K M_k - \odot_{k=2}^K(U_{j_k,k}V_{j_k',k}^\top)\Big\|_{\rm op}\le \epsilon (K-1).
$$
For any linear mapping $f$ and norm $\|\cdot\|_*$ in the range of $f$,
\begin{equation}\label{lm-3-3}
\sup_{M_k, \widetilde M_k\in \R^{d_k\times d_k'},\|M_k-\widetilde M_k\|_{\rm S}\le\epsilon\atop
{\text{rank}(M_k)\vee\text{rank}(\widetilde M_k)\le r_k \atop \|M_k\|_{\rm S}\vee\|\widetilde M_k\|_{\rm S}\le 1\ \forall k\le K}}
\frac{\|f(\odot_{k=2}^KM_k-\odot_{k=2}^K\widetilde M_k)\|_*}{\epsilon(2K-2)}
\le \sup_{M_k\in \R^{d_k\times d_k'}\atop {\text{rank}(M_k)\le r_k \atop \|M_k\|_{\rm S}\le 1, \forall k}}
\Big\|f\big(\odot_{k=2}^K M_k\big)\Big\|_*
\end{equation}
and
\begin{equation}\label{lm-3-4}
\sup_{M_k\in \R^{d_k\times d_k'},\|M_k\|_{\rm S}\le 1\atop \text{rank}(M_k)\le r_k\ \forall k\le K}
\Big\|f\big(\odot_{k=2}^K M_k\big)\Big\|_*
\le 2\max_{1\le j_k \le N_{d_kr_k,1/(8K-8)}\atop 1\le j_k' \le N_{d_k'r_k,1/(8K-8)}}
\Big\|f\big(\odot_{k=2}^K U_{j_k,k}V_{j_k',k}^\top\big)\Big\|_*.
\end{equation}
\end{lemma}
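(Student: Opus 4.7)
My plan is to treat the three parts sequentially, with each building on the previous one via standard covering/packing arguments. The only genuinely nontrivial issue is the factor $2^{I_{r<d\wedge d'}}$ in \eqref{lm-3-2}, which tracks whether the rank budget doubles under subtraction.

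For part (i), I would invoke the classical volume/packing bound: the unit ball of any norm on $\R^d$ admits an $\epsilon$-net of cardinality at most $(1+2/\epsilon)^d$, obtained by taking a maximal $\epsilon$-separated set and comparing the volumes of disjoint $\epsilon/2$-translates contained in the $(1+\epsilon/2)$-ball. For the consequence, given $\|M\|\le 1$, pick $M_j$ in the $(1/2)$-net with $\|M-M_j\|\le 1/2$; writing $S=\sup_{\|M\|\le 1}\|f(M)\|_*$ and using linearity of $f$, $\|f(M)\|_*\le \|f(M_j)\|_* + (1/2)S$, and taking the supremum on the left gives $S\le 2\max_j\|f(M_j)\|_*$.

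For part (ii), I would factor any rank-$r$ matrix $M$ with $\|M\|_{\rm S}\le 1$ via SVD as $M = AB^\top$ with $A\in\R^{d\times r}$, $B\in\R^{d'\times r}$ and $\|A\|_{\rm S}\vee\|B\|_{\rm S}\le 1$ (e.g.\ $A=U\Sigma$, $B=V$). Apply (i) separately to $A$ and $B$, regarded as elements of $\R^{dr}$ and $\R^{d'r}$ in the spectral norm, to obtain $U_j$ and $V_{j'}$. The bilinear identity $AB^\top-U_jV_{j'}^\top = (A-U_j)B^\top + U_j(B-V_{j'})^\top$ plus submultiplicativity of the spectral norm yield $\|M-U_jV_{j'}^\top\|_{\rm S}\le \epsilon$. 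For the left inequality of \eqref{lm-3-2}: when $r\ge d\wedge d'$, any matrix has rank $\le r$ and the factor is $1$; when $r<d\wedge d'$, the difference $M-\widetilde M$ has rank $\le 2r$, but its SVD can be split into two pieces of rank $\le r$, each with spectral norm at most $\|M-\widetilde M\|_{\rm S}\le \epsilon$, giving the factor $2$. For the right inequality, take $\epsilon=1/8$, combine with the left inequality at that $\epsilon$, and rearrange the resulting $S\le \max_{j,j'}\|f(U_jV_{j'}^\top)\|_* + S/4$.

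For part (iii), the structural step is the telescoping identity
\[
\odot_{k=2}^K M_k - \odot_{k=2}^K \widetilde M_k
= \sum_{k=2}^K \Big(\odot_{\ell<k}\widetilde M_\ell\Big)\odot(M_k-\widetilde M_k)\odot\Big(\odot_{\ell>k}M_\ell\Big),
\]
combined with the multiplicativity $\|A\odot B\|_{\rm S}=\|A\|_{\rm S}\|B\|_{\rm S}$. Taking $\widetilde M_k=U_{j_k,k}V_{j_k',k}^\top$ from part (ii) with parameter $\epsilon/2$ gives each summand spectral norm at most $\epsilon$, for a total of at most $\epsilon(K-1)$. Then \eqref{lm-3-3} follows by the same rank-splitting argument as in (ii) applied mode by mode (each individual difference can be split into two pieces, producing the $2(K-1)$ factor), and \eqref{lm-3-4} follows by choosing $\epsilon=1/(8K-8)$ and repeating the geometric-series rearrangement from part (ii).

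The main obstacle I anticipate is bookkeeping in part (iii): one must verify that when several modes are perturbed simultaneously, the telescoping residual still has rank structure compatible with applying the sup-bound on $f$, which forces the factor $2(K-1)$ in \eqref{lm-3-3}. Everything else is standard.
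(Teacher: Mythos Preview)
Your proposal is correct and follows essentially the same route as the paper: volume/packing for (i) with the subtraction argument, SVD factorization $M=AB^\top$ and separate nets for (ii) with rank-splitting of the difference, and the telescoping identity for Kronecker products in (iii). Your use of the bilinear split $(A-U_j)B^\top + U_j(B-V_{j'})^\top$ to get $S\le \max + S/4$ is a slightly cleaner bookkeeping than the paper's unspecified ``split $M-U_jV_{j'}^\top$ into two rank-$r$ matrices,'' but the content is identical.
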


\begin{proof} (i) The covering number $N_\epsilon$ follows from the standard volume comparison argument
as the $(1+\epsilon/2)$-ball under $\|\cdot\|$ and centered at the origin contains no more than
$(1+2/\epsilon)^d$ disjoint $(\epsilon/2)$-balls centered at $M_j$. The inequality follows from
the ``subtraction argument",
$$
\sup_{\|M\|\le 1}\|f(M)\|_* -\max_{1\le j\le N_{d,1/2}}\|f(M_j)\|_*
\le \sup_{\|M-M_j\|\le 1/2}\|f(M-M_j)\|_*
\le \sup_{\|M\|\le 1}\|f(M)\|_*/2.
$$
(ii) The covering numbers are given by applying (i) to both $U$ and $V$ in the decomposition $M=UV^\top$
as Lemma 7 in \citet{zhang2018tensor}.
The first inequality in \eqref{lm-3-2} follows from the fact that for $r<d\wedge d'$,
$(M-\widetilde M)/\epsilon$ is a sum of two
rank-$r$ matrices with no greater spectrum norm than 1,
and the second inequality of \eqref{lm-3-2} again follows from the subtraction argument
although we need to split $M-U_jV_{j'}^\top$ into two rank $r$ matrices to result in
an extra factor of 2. \\
(iii) The proof is nearly identical to that of part (ii). The only difference is the factor $K-1$ when
$\|\odot_{k=2}^KM_k-\odot_{k=2}^K\widetilde M_k\|_{\rm op}\le (K-1)\max_{2\le k\le K}\|M_k-\widetilde M_k\|_{\rm S}$
is applied.
\end{proof}

\begin{lemma}\label{lm-GH}
(i) Let $G\in \R^{d_1\times n}$ and $H\in \R^{d_2\times n}$ be two centered independent
Gaussian matrices such that $\E(u^\top \text{vec}(G))^2 \le \sigma^2\ \forall\ u\in \R^{d_1n}$ and
$\E(v^\top \text{vec}(H))^2\le \sigma^2\ \forall\ v\in \R^{d_2n}$. Then,
\bes\label{lm-GH-1}
\|GH^\top\|_{\rm S} \le \sigma^2\big(\sqrt{d_1d_2}+\sqrt{d_1n} + \sqrt{d_2n}\big)
+ \sigma^2x(x+2\sqrt{n}+\sqrt{d_1}+\sqrt{d_2})
\ees
with at least probability $1 - 2e^{-x^2/2}$ for all $x\ge 0$. \\
(ii) Let $G_i\in \R^{d_1\times d_2}, H_i\in \R^{d_3\times d_4}, i=1,\ldots, n$,
be independent centered Gaussian matrices
such that $\E(u^\top \text{vec}(G_i))^2 \le \sigma^2\ \forall\ u\in \R^{d_1d_2}$ and
$\E(v^\top \text{vec}(H_i))^2\le \sigma^2\ \forall\ v\in \R^{d_3d_4}$. Then,
\bes
\bigg\|\text{mat}_1\bigg(\sum_{i=1}^n G_i\otimes H_i\bigg)\bigg\|_{\rm S}
&\le& \sigma^2\big(\sqrt{d_1n}+\sqrt{d_1d_3d_4} + \sqrt{nd_2d_3d_4}\big)
\cr && + \sigma^2 x\big(x + \sqrt{n} + \sqrt{d_1} + \sqrt{d_2}+\sqrt{d_3d_4}\big)
\ees
with at least probability $1 - 2e^{-x^2/2}$ for all $x\ge 0$.
\end{lemma}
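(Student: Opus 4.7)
The plan is to prove both bounds by iterating two classical ingredients: Chevet's inequality (a consequence of Slepian--Fernique comparison) for the expected operator norm of a Gaussian matrix, and Gaussian concentration for Lipschitz functions, conditioning first on one of the two Gaussian factors and then on the other.

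For part (i), I first condition on $H$. The map $G\mapsto \|GH^\top\|_{\rm S}$ is $\|H\|_{\rm S}$-Lipschitz with respect to $\|G\|_{\rm F}$; combined with the assumption $\|\mathrm{Cov}(\vec G)\|_{\rm S}\le\sigma^2$ this gives sub-Gaussian concentration with parameter $\sigma\|H\|_{\rm S}$. Applying Chevet's inequality to the Gaussian process $(u,v)\mapsto u^\top GH^\top v$ on $S^{d_1-1}\times S^{d_2-1}$, using the Gaussian width $\sqrt{d_1}$ of $S^{d_1-1}$ and the Gaussian width $\|H\|_{\rm F}$ of the ellipsoid $\{H^\top v:\|v\|\le 1\}$, yields the conditional-mean bound $\E_G[\|GH^\top\|_{\rm S}\mid H]\le \sigma(\sqrt{d_1}\|H\|_{\rm S}+\|H\|_{\rm F})$. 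Hence, with conditional probability at least $1-e^{-x^2/2}$,
\begin{align*}
\|GH^\top\|_{\rm S}\le \sigma\sqrt{d_1}\|H\|_{\rm S}+\sigma\|H\|_{\rm F}+\sigma\|H\|_{\rm S}\,x.
\end{align*}
I then apply Gaussian concentration to the $1$-Lipschitz scalar functions $\|H\|_{\rm S}$ and $\|H\|_{\rm F}$ of the matrix $H$, with $\E\|H\|_{\rm S}\le\sigma(\sqrt{d_2}+\sqrt{n})$ and $\E\|H\|_{\rm F}\le\sigma\sqrt{d_2 n}$, and combine via a union bound to obtain \eqref{lm-GH-1} after collecting terms.

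For part (ii), the plan is to reshape the quantity so that it fits the template of (i). I write
\begin{align*}
\mathrm{mat}_1\!\Bigl(\sum_{i=1}^n G_i\otimes H_i\Bigr)=\bG\bK,
\end{align*}
with $\bG=[G_1,\ldots,G_n]\in\R^{d_1\times d_2 n}$ the horizontal concatenation and $\bK\in\R^{d_2 n\times d_2 d_3 d_4}$ the block matrix whose $i$-th $d_2\times d_2 d_3 d_4$ block is $I_{d_2}\otimes \vec(H_i)^\top$. A direct computation of $\bK^\top \bK=I_{d_2}\otimes \bH\bH^\top$, with $\bH=[\vec H_1,\ldots,\vec H_n]\in\R^{d_3 d_4\times n}$, yields the clean identities $\|\bK\|_{\rm S}=\|\bH\|_{\rm S}$ and $\|\bK\|_{\rm F}=\sqrt{d_2}\|\bH\|_{\rm F}$. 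Independence of the $G_i$ preserves $\|\mathrm{Cov}(\vec \bG)\|_{\rm S}\le\sigma^2$, so the argument of (i) applied to $\bG\bK$ gives, conditional on $\bH$ and with probability at least $1-e^{-x^2/2}$,
\begin{align*}
\|\bG\bK\|_{\rm S}\le \sigma\sqrt{d_1}\|\bH\|_{\rm S}+\sigma\sqrt{d_2}\|\bH\|_{\rm F}+\sigma\|\bH\|_{\rm S}\,x.
\end{align*}
Gaussian concentration for $\|\bH\|_{\rm S}$ and $\|\bH\|_{\rm F}$, using $\E\|\bH\|_{\rm S}\le\sigma(\sqrt{d_3 d_4}+\sqrt{n})$ and $\E\|\bH\|_{\rm F}\le\sigma\sqrt{n d_3 d_4}$, collects the three cross products $\sqrt{d_1}\cdot\sqrt{n}$, $\sqrt{d_1}\cdot\sqrt{d_3 d_4}$, $\sqrt{d_2}\cdot\sqrt{n d_3 d_4}$ into the stated main bound $\sigma^2(\sqrt{d_1 n}+\sqrt{d_1 d_3 d_4}+\sqrt{n d_2 d_3 d_4})$, and contributes the tail coefficient $x(x+\sqrt{n}+\sqrt{d_1}+\sqrt{d_2}+\sqrt{d_3 d_4})$ after a final union bound.

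The main obstacle is the reshaping step in (ii): one must verify that $\bK$ produces the precise identities $\|\bK\|_{\rm S}=\|\bH\|_{\rm S}$ and $\|\bK\|_{\rm F}=\sqrt{d_2}\|\bH\|_{\rm F}$, so that Chevet's bound does not inflate by spurious dimensional factors, and that the stacked Gaussian $\bG$ retains the operator-norm covariance bound $\sigma^2$ so that Chevet and Gaussian concentration apply in the same form as in (i). Once those two facts are in hand, the remainder is an elementary union bound over the three tail events.
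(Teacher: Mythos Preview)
Your proof is correct and essentially identical to the paper's: both condition on one Gaussian factor, apply Sudakov--Fernique/Chevet for the conditional mean and Gaussian concentration for the conditional tail, then concentrate again in the remaining factor; your reshaping $\bG\bK$ with $\bK^\top\bK=I_{d_2}\otimes\bH\bH^\top$ in part~(ii) recovers exactly the quantities $\|\mathrm{mat}_3(H)\|_{\rm S}=\|\bH\|_{\rm S}$ and $\sqrt{d_2}\,\|\mathrm{vec}(H)\|_2=\sqrt{d_2}\,\|\bH\|_{\rm F}$ that the paper obtains by direct tensor manipulation. The only cosmetic difference is that the paper treats $\sqrt{d_1}\|H\|_{\rm S}+\|H\|_{\rm F}+x\|H\|_{\rm S}$ as a single Lipschitz function of $H$ (Lipschitz constant $\sqrt{d_1}+1+x$) and concentrates once, giving the stated $2e^{-x^2/2}$, whereas applying concentration separately to $\|H\|_{\rm S}$ and $\|H\|_{\rm F}$ as you describe yields $3e^{-x^2/2}$ unless you make the same combination.
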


\begin{proof}
Assume $\sigma=1$ without loss of generality. Let $x\ge 0$. \\
(i) Independent of $G$ and $H$, let 
$\zeta_j\in\R^{n}$, $j=1,2$,
be independent standard Gaussian vectors.
As in \citet{chen2022factor}, 
the Sudakov-Fernique inequality provides
\bes
\E\Big[\|GH^\top\|_{\rm S}\Big|G\Big]
\le \E\Big[\max_{\|u\|_2=1}u^\top G \zeta_2\Big|G\Big] + \|G\|_{\rm S}\sqrt{d_2}.
\ees
Thus, by the Gaussian concentration inequality
\bes
\P\bigg\{\|GH^\top\|_{\rm S}
\ge \E\Big[\max_{\|u\|_2=1}u^\top G \zeta_2\Big|G\Big] + \|G\|_{\rm S}(\sqrt{d_2}+x)\bigg|G\bigg\}\le e^{-x^2/2}.
\ees
Applying the Sudakov-Fernique inequality again, we have
\bes
\E\Big[\E\Big[\max_{\|u\|_2=1}u^\top G \zeta_2\Big|G\Big] + \|G\|_{\rm S}(\sqrt{d_2}+x)\Big]
\le \sqrt{d_1n} + (\sqrt{d_1}+\sqrt{n})(\sqrt{d_2}+x).
\ees
Moreover, as the Lipschitz norm of
$\E\big[\max_{\|u\|_2=1}u^\top G \zeta_2\big|G\big] + \|G\|_{\rm S}(\sqrt{d_2}+x)$
is bounded by $\sqrt{n}+\sqrt{d_2}+x$, by the Gaussian concentration inequality
\bes
&& \E\Big[\max_{\|u\|_2=1}u^\top G \zeta_2\Big|G\Big] + \|G\|_{\rm S}(\sqrt{d_2}+x)
\cr &\le& \sqrt{d_1n} + (\sqrt{d_1}+\sqrt{n})(\sqrt{d_2}+x)
+ x\big(\sqrt{n}+\sqrt{d_2}+x\big)
\ees
holds with at least probability $1-e^{-x^2/2}$.

(ii) We treat $G=(G_1,\ldots,G_n)\in\R^{d_1\times d_2\times n}$ and
$H=(H_1,\ldots,H_n)\in\R^{d_3\times d_4\times n}$ as tensors.
Let $\xi = (\xi_1,\ldots,\xi_n)\in \R^{d_2\times n}$ be a standard Gaussian matrix
independent of $H$.
For $u\in \R^{d_1}$ and $V\in \R^{d_2\times (d_3d_4)}$,
\bes
&& \E\bigg[\bigg\|\text{mat}_1\bigg(\sum_{i=1}^n G_i\otimes H_i\bigg)\bigg\|_{\rm S}\bigg|H\bigg]
\cr & = & \E\bigg[\sup_{\|u\|_2=1,\|V\|_{\rm F}=1} u^\top\text{mat}_1(G)
\text{vec}\big(\text{mat}_3(H)V^\top\big)\bigg|H\bigg]
\cr & \le & \sqrt{d_1}\sup_{\|V\|_{\rm F}=1} \|\text{mat}_3(H)V^\top\|_{\rm F}
 + \E\bigg[\sup_{\|V\|_{\rm F}=1} (\text{vec}(\xi))^\top\text{vec}\big(\text{mat}_3(H)V^\top\big)\bigg|H\bigg]
\cr & = & \sqrt{d_1}\|\text{mat}_3(H)\|_{\rm S}
 + \E\bigg[\bigg(\sum_{j=1}^{d_2}\sum_{k=1}^{d_3d_4}
\bigg(\sum_{i=1}^n \xi_{i,j}\text{vec}(H_i)_k\bigg)^2\bigg)^{1/2} \bigg|H\bigg]
\cr & \le & \sqrt{d_1}\|\text{mat}_3(H)\|_{\rm S}  + \sqrt{d_2}\|\text{vec}(H)\|_2
\ees
By the Gaussian concentration inequality,
\bes
\P\bigg\{\bigg\|\text{mat}_1\bigg(\sum_{i=1}^n G_i\otimes H_i\bigg)\bigg\|_{\rm S}
\ge (\sqrt{d_1}+x)\|\text{mat}_3(H)\|_{\rm S}  + \sqrt{d_2}\|\text{vec}(H)\|_2 \bigg|H\bigg\}\le e^{-x^2/2}.
\ees
Moreover, as $\E\big[(\sqrt{d_1}+x)\|\text{mat}_3(H)\|_{\rm S}  + \sqrt{d_2}\|\text{vec}(H)\|_2\big]
\le (\sqrt{d_1}+x)\big(\sqrt{n}+\sqrt{d_3d_4}\big) + \sqrt{d_2nd_3d_4}$ and
the Lipschitz norm of $(\sqrt{d_1}+x)\|\text{mat}_3(H)\|_{\rm S}  + \sqrt{d_2}\|\text{vec}(H)\|_2$
is bounded by $\sqrt{d_1}+x + \sqrt{d_2}$,
\bes
&& (\sqrt{d_1}+x)\|\text{mat}_3(H)\|_{\rm S}  + \sqrt{d_2}\|\text{vec}(H)\|_2
\cr & \le & (\sqrt{d_1}+x)\big(\sqrt{n}+\sqrt{d_3d_4}\big) + \sqrt{nd_2d_3d_4}
+ x\big(\sqrt{d_1}+x + \sqrt{d_2}\big)
\ees
holds with at least probability $1-e^{-x^2/2}$.
\end{proof}

\begin{proof}[Proof of Lemma \ref{lm-pertubation}]
Because ${\widehat U}_\perp^\top M M^\top  U
= {\widehat U}_\perp^\top {\widehat M} {\widehat M}^\top U
- {\widehat U}_\perp^\top {\widehat M} \Delta^\top U- {\widehat U}_\perp^\top \Delta M^\top U$,
\bes
{\widehat U}_\perp^\top U
= \big({\widehat U}_\perp^\top {\widehat M} {\widehat M}^\top {\widehat U}_\perp{\widehat U}_\perp^\top U - {\widehat U}_\perp^\top {\widehat M} {\widehat V}_\perp{\widehat V}_\perp^\top \Delta^\top U- {\widehat U}_\perp^\top \Delta UU^\top M^\top U\big)(U^\top M M^\top  U)^{-1}
\ees
as $\lam_r>0$. It follows that
$\|{\widehat U}_\perp^\top U\|
\le \big({\widehat\lam}_{r+1}^2\|{\widehat U}_\perp^\top U\|
+{\widehat\lam}_{r+1} \|{\widehat V}_\perp \Delta^\top U\|
+ \|{\widehat U}_\perp^\top \Delta U\|\lam_r\big)/\lam_r^2$.
The conclusion \eqref{wedin+} follows by algebra,
and \eqref{wedin-2} by $\epsilon_1\vee{\widehat\lam}_{r+1}\le\|\Delta\|_{S}$.
\end{proof}

%

\end{document}